\numberwithin{equation}{section}
\newtheorem{theorem}{Theorem}
\newtheorem{lemma}{Lemma}
\newtheorem{corollary}{Corollary}
\newtheorem{remark}{Remark}
\newtheorem{definition}{Definition}
\newcommand{\bs}[1]{\boldsymbol{#1}}
\newcommand{\sgn}{{\rm sgn}}
\newcommand{\ket}[1]{|{#1}\rangle}
\newcommand{\bra}[1]{\langle{#1}|}
\title{{\Large {\bf Sensitivity of quantum walks to a boundary of
two-dimensional lattices: approaches based on the CGMV method and topological
phases%title
}
%\footnote{}
}}
\author{ 
{\small 
Takako Endo,$^{1}$ %authorname
%\footnote{email address % email adress
%}\quad
%%%%%%%%
Norio Konno,$^{2}$ 
%\footnote{email address 
%}\quad 
Hideaki Obuse,$^{3}$ %authorname2
%\footnote{email address % email adress2
%}\quad
Etsuo Segawa. $^{4}$ %authorname3
\footnote{%To whom correspondence should be addressed. 
e-segawa@m.tohoku.ac.jp % email adress3
}
}\\ 
{\scriptsize $^1$ 
Institute for Global Leadership, Ochanomizu University%shozoku
}\\
{\scriptsize
2-1-1 Ohtsuka, Bunkyo, Tokyo, 112-0012, Japan %shozoku
}\\
{\scriptsize $^{2}$ 
Department of Applied Mathematics, Faculty of Engineering, Yokohama National University
}\\
{\scriptsize 
Hodogaya, Yokohama 240-8501, Japan
} \\
{\scriptsize $^3$ 
Department of Applied Physics, Hokkaido University%shozoku
}\\
{\scriptsize 
Sapporo, Hokkaido 060-8628, Japan%juusho
} \\
{\scriptsize $^4$ 
Graduate School of Information Sciences, Tohoku University, %shozoku
}\\
{\scriptsize 
Aoba, Sendai 980-8579, Japan%juusho
} \\
} 
\date{\empty }
\begin{document}
\maketitle

\par\noindent
\begin{small}
\par\noindent
{\bf Abstract}. 
In this paper, we treat quantum walks in a two-dimensional lattice with
cutting edges along a straight boundary introduced by Asboth and Edge [Phys.\ Rev.\ A
 {\bf 91}, 022324 (2015)] in
 order to study one-dimensional edge states originating from topological
 phases of matter and 
 to obtain collateral evidence of how a quantum walker reacts
 to the boundary.
Firstly, we connect this model to the CMV matrix, which provides a $5$-term recursion relation of the Laurent polynomial 
associated with spectral measure on the unit circle. 
Secondly, we explicitly derive the spectra of bulk and edge states of the quantum
 walk with the boundary using spectral analysis of the CMV matrix. 
Thirdly, while topological numbers of the model studied so far are
 well-defined only when gaps in the bulk spectrum exist, we find a new topological
number defined only when there are no gaps in the bulk spectrum.
We confirm that the existence of the spectrum for edge states derived
 from the CMV matrix is consistent
 with the prediction from a bulk-edge correspondence using topological numbers
 calculated in the cases where gaps in the bulk spectrum do or do
 not exist.
Finally, we show how the edge states contribute to the asymptotic behavior of the quantum walk through limit theorems of the finding probability.
Conversely, we also propose a differential equation using this limit distribution whose solution is the underlying edge state.  
%abst
\footnote[0]{
{\it Key words and phrases.} 
 %key wards 
quantum walks, topological phase, CMV matrix
}

\end{small}

\setcounter{equation}{0}

\section{Introduction}
Quantum walks are the quantum analog of random walks, first introduced
in \cite{Gudder}.  The initial impetus for intensively studying quantum walks came from the
field of  quantum information~\cite{ABNVW}. 
Nowadays, overlaps between quantum walks and quantum information, as
well as, various other research fields have been found and
established by interdisciplinary researches (see \cite{KW, Portugal} and references therein). 

%Obuse san, arrange please
One recent area of studying is
connecting localization in quantum walks with topological
phases of matter, i.e., topological insulators, which is currently
a hot field in condensed-matter physics.
Kitagawa {\it et al.} \cite{Kitagawa,Kitagawa2} introduced a quantum
walk as a model for exploring the topological phases of matter,
which is simplified by discretizing temporal and spatial spaces. 
In particular, quantum walks on a one-dimensional lattice, 
whose unitary time-evolution operator is spatially inhomogeneous
and retains chiral symmetry, 
%holding the chiral symmetry, 
have been intensively studied (e.g., \cite{A,CGGSVWW,CGSVWW,EKO,EEKST,OK,OANK}). 
Kitagawa\cite{Kitagawa2} also found a connection between 
two-dimensional topological insulators and quantum walk models on a
two-dimensional square lattice $(x,y) \in \mathbb{Z}^2$~\cite{FMB,FMMB}. 
Asboth and Edge further studied on this model with  a ``cutting
edge'' ~\cite{AE}, meaning that all the edges
connecting vertices
%of $\mathbb{Z}^2$ 
at $(0,y)\in \mathbb{Z}^2$ and $(-1,y)\in \mathbb{Z}^2$ for $y\in
\mathbb{Z}$ 
are cut and rewired to make
self-loops on $(0,y)$ and $(-1,y)$. 
This model is determined by a pair of parameters $(\alpha,\beta)\in [0,2\pi)^2$; 
the first and second parameters, $\alpha$ and $\beta$, determine the local dynamics of the horizontal and vertical directions on the two-dimensional lattice, 
respectively. The local dynamics are provided by alternatively
allowing the two-dimensional rotation matrices with angles of $\alpha$ and
$\beta$ to act up on two-dimensional internal states;    
we call this the AE model. 
See for a detailed definition of this graph setting and the time evolution in Section~2. 
Asboth and Edge provided analytical and numerical results showing the
existence of unidirectional edge states along the self-loops. 

Due to the unitarity of the time evolution of quantum walks, we
obtain a sequence of probability distributions for each time step $\{\rho_n\}_{n\in\mathbb{N}}$.
Let $\psi_n$ be a quantum state at time $n$ that is obtained by the $n$-th
iteration of the unitary time evolution to an initial state. 
We call the map $\psi_n\mapsto \rho_n$ a ``measurement" which is represented by an orthogonal projection map, see Section 2.2 (3) for more detail.
%We call the map $\psi_n\mapsto \rho_n$ a ``measurement." 
In this paper, we provide further analytical results on the AE model connecting this measurement. 
The main purpose of this paper is to determine whether it is possible to
estimate spectral information about the AE model by obtaining the
distribution, $\rho_n$, after measuring the quantum state $\psi_n$, and
if so, which of the spectral properties are
reflected to the limit behavior of $\rho_n$. 
%%%%%%%%%%%%%%%%%%%%%%%%%%%%%%%%%%%%%%%%%%%%

To this end, we first connect the AE model to the CMV matrix (Theorem~1). 
The CMV matrix represents the five-term recursion relation for
the orthogonal Laurent polynomials associated with a given positive measure on the 
unit circle in the complex plane~\cite{CMV1}. 
The authors Cantero, Gr{\"u}nbaum, Moral, and Vel{\'a}zquez first connected the CMV matrix to quantum walks in \cite{CGMV}. 
We call spectral analysis of quantum walks using this connection to the
CMV matrix the CGMV method after these authors' initials~\cite{KS}. 
This method enable us to obtain spectral information about our model in quite-explicit form. 
Due to the translation symmetry with respect to the direction
parallel to the boundary of the cutting edges of the AE model, we can take the Fourier transform and decompose the time operator $U^2$ 
restricted to the subspace generated by horizontal arcs  
into the unitary operators $\{\hat{\Gamma}_k\}_{k\in[0,2\pi)}$. 
Interestingly, we find that $\hat{\Gamma}_k$ for each $k$ is unitarily
equivalent to the CMV matrix with null odd Verblunsky parameters 
	\[ (\eta(k),0,\eta(k),0,\dots),\;\;\eta(k)=\sin(\alpha-\beta)\cos k+i\sin(\alpha+\beta)\sin k. \] 
Thanks to this connection, secondly, we can derive the spectra 
for the bulk and edge states.
We obtain the bulk spectrum, 
in which gaps may exist and the spectrum for the edge state, 
which may appear in the gaps of the bulk spectrum (Theorem~2).
As a corollary, we can classify the spectra according to
the signs given by 
	\[ (\sgn(\sin 2\alpha\sin
	2\beta),\sgn(\sin(\alpha+\beta)),\sgn(\sin(\alpha-\beta))). \]

Thirdly, in contrast to the topological number in Ref.\ \cite{AE} which is defined
only when gaps in the bulk spectrum exist,
we derive a new topological number which is defined only when there
are no gaps in the bulk spectrum.
We determine the relationship between the edge spectrum derived from
Theorem~2 and topological numbers calculated in the cases where gaps
in the bulk spectrum do or do not exist in figures 2--5. 

Fourthly, we examine how the boundary affects the quantum walk in the two-dimensional case (Theorem~3). 
On the one-dimensional lattice, if the edge state exists, then we
observe localization (i.e., 
the time-averaged limit measure is strictly positive) 
as the asymptotic behavior of the sequence of the probability distributions $\{\rho_n\}_n$~\cite{EKO,OK}. 
To express the response of the quantum walk to the boundary in the two-dimensional case, we define $\nu_n: \partial V \to [0,1]$ such that 
$\nu_n(j)$ is the finding probability of the AE model at the $n$-th iteration and at self-loops on 
$[0,j]\in \mathbb{Z}_+\times \mathbb{Z}$ starting from the self-loop at the origin. 
We show that the response is classified into four behaviors depending on the values of the parameters $(\alpha,\beta)$: 
\begin{enumerate}
\item\label{conti} continuous linear spreading;
\item\label{ball} ballistic spreading; 
\item\label{loc} localization;
\item  null.
\end{enumerate} 
In particular, the limit-density function of (\ref{conti}) is described using the Konno distribution~\cite{Konno,Konno2}. 
%Here the density of the Konno distribution is descrived in
The limit behavior of the quantum walk contributed by the edge state in the case of (2) is completely expressed in Theorem~4. 

As a consequence, by assuming the group velocity $v(k)$, quasi-effective mass $M(k)$, and  density of the edge state $m_0(k)$, 
we obtain a parametric plot of the limit-density function of (2) using these physical quantities 
(see Corollary~\ref{parameterexpression}.)
By this parametric plot expression, conversely, we can estimate the underlying edge state from the obtained distribution. 
More precisely, if we have an estimate $\tilde{g}$ of the true limit distribution $g$ 
from the distribution data obtained by measurement of self-loops after finite unitary time iterations, 
then the underlying edge state can be also estimated from this data by 
solving a differential equation with some constant $c\in \mathbb{R}$; 
$\tilde{g}(v)\;dv/dk = c v^2$, implying that the group velocity of
edge state is an inverse of Konno distribution in this model (see
Corollary~4). 
We note that comparing discrete-time quantum walks with the discrete-Schr{\"o}dinger operator on
lattices~\cite{ASV,GP,SV}
is one of interesting problems, which should be addressed in future.

This paper is organized as follows. 
In Sect.~2, we define the graph and the quantum walk treated here. 
In Sect.~3, we connect the quantum walk with the CMV matrix.
Section~4 is devoted to the spectral analysis and topological phases of the AE model.
In Sect.~5, we present how the properties of the spectrum obtained in the
previous section are reflected in the stochastic behavior along the
boundary by studying limit distribution functions.
Finally, in Sect.~6, we present limit distribution functions toward the
bulk, which clearly exhibit exponential decay.
%%%%%%%%%%%%%%%%%%%%%%%%%%%%%%%%%%%%%%%%%%%%%%%%%%%%%%%%%%%%%%%%%%%%%%%%%%%%%%%%%%%%%%%%%%%%%%%%%%%%%%%%%%%
%%%%%%%%%%%%%%%%%%%%%%%%%%%%%%%%%%%%%%%%%%%%%%%%%%%%%%%%%%%%%%%%%%%%%%%%%%%%%%%%%%%%%%%%%%%%%%%%%%%%%%%%%%%
\section{Model}
\begin{figure}[ht]
\hspace{2.3cm}(a)\hspace{5.3cm}(b)
  \begin{center}
    \begin{tabular}{c}
      % 1
      \begin{minipage}{0.4\hsize}
        \begin{center}
          \includegraphics[clip, width=6cm]{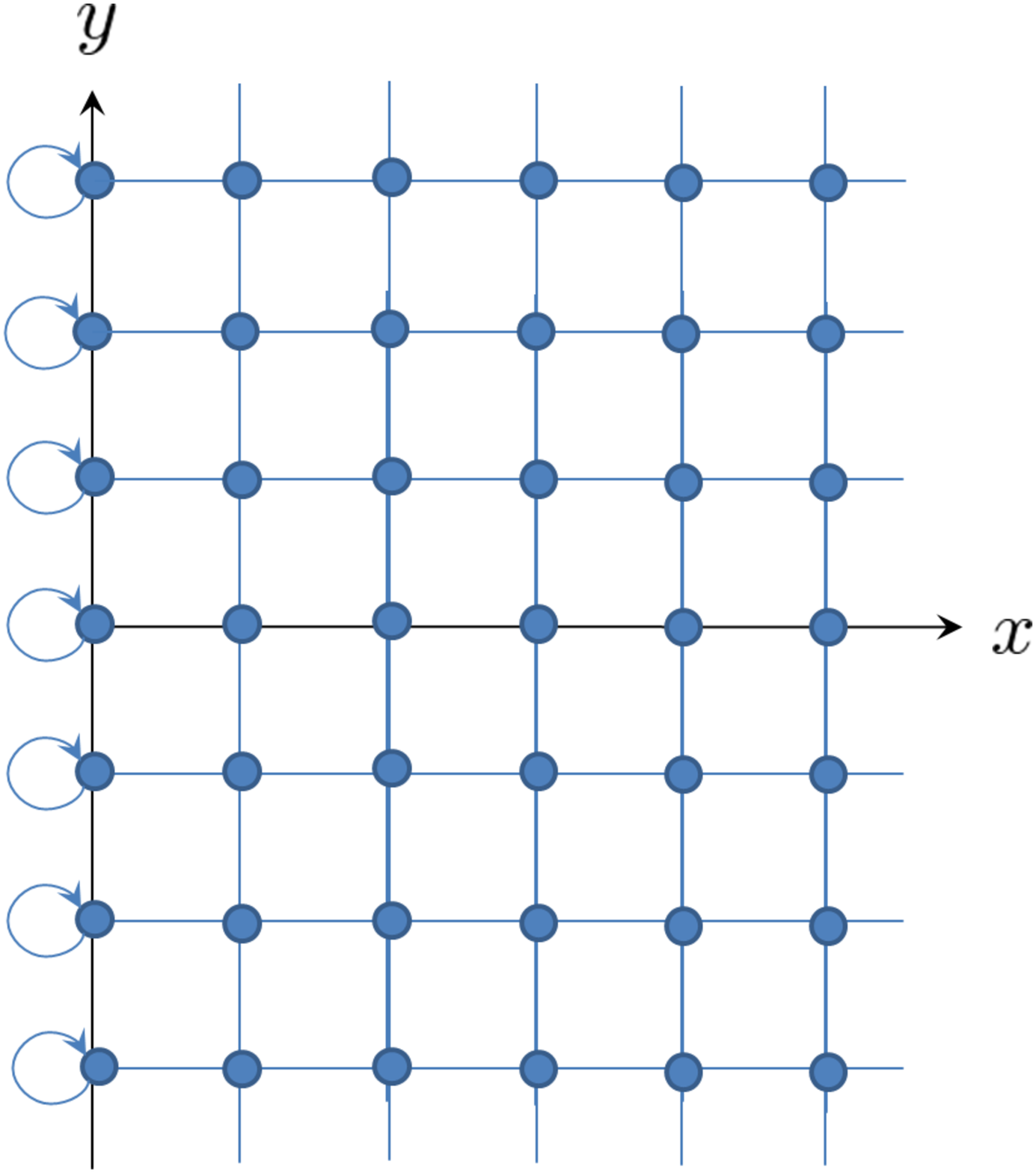}
          \hspace{1.6cm} 
        \end{center}
      \end{minipage}

      % 2
      \begin{minipage}{0.4\hsize}
        \begin{center}
          \includegraphics[clip, width=7.50cm]{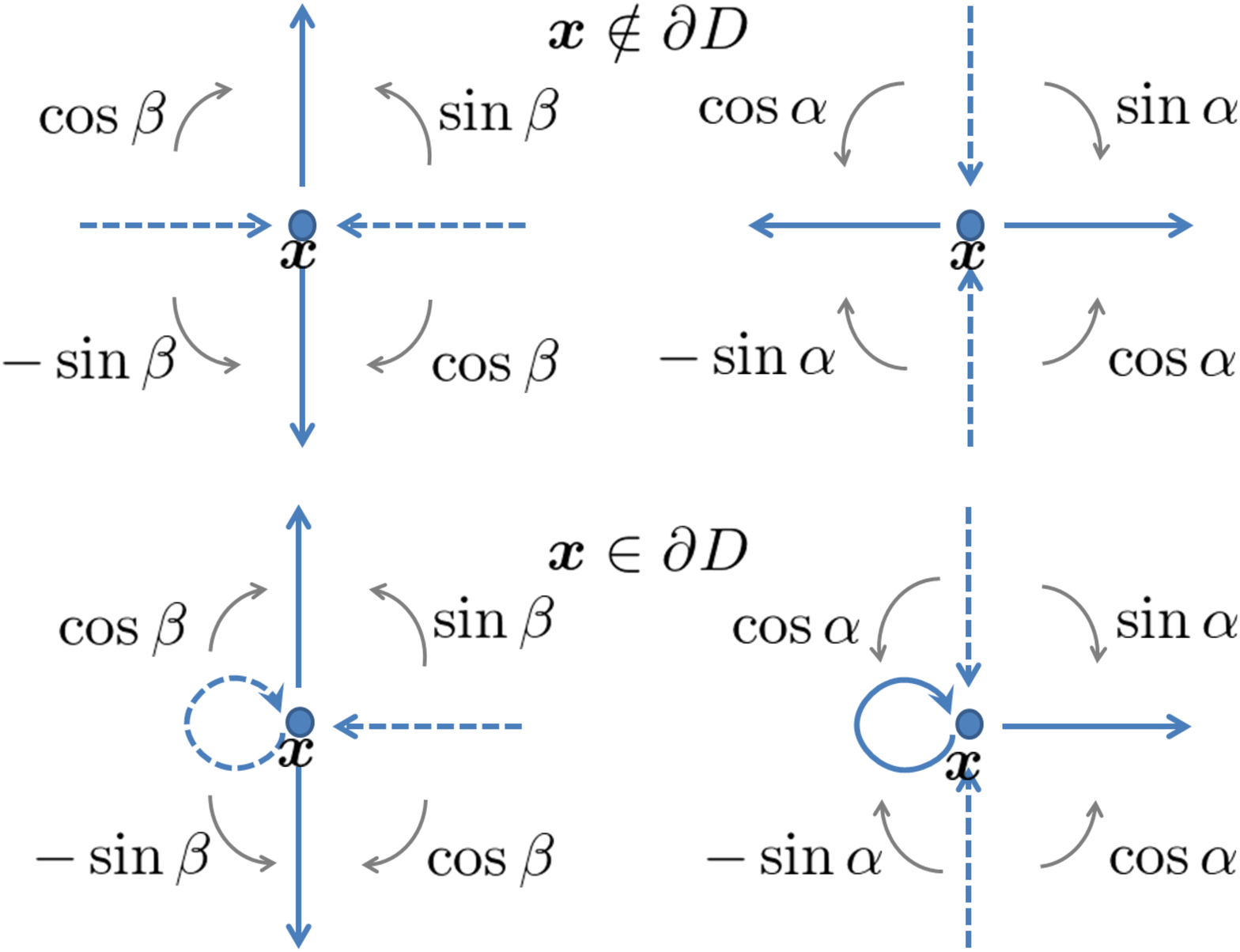}
          \hspace{1.6cm} 
        \end{center}
      \end{minipage}
    \end{tabular}
    \caption{(a) The graph treated in the present work. (b) The one-step
   time evolution $U$ at a vertex $\bs{x} \notin \partial V$ and $\bs{x} \in \partial V$. 
    The dotted  and solid arcs represent the input and output arcs, respectively.}
    \label{fig:lena}
  \end{center}
\end{figure}
%%%%%%%%%%%%%%%%%%
\subsection{Graph}
%%%%%%%%%%%%%%%%%%
Let $G=(V,A)$ be a directed graph whose vertex set and arc set are $V$ and $A$, respectively. 
In this paper, we express the vertex set as 
	\[ V=\mathbb{Z}_+ \times \mathbb{Z}=\{ (x,y)\;|\; x\in \mathbb{Z}_+, y\in \mathbb{Z} \}, \]
where $\mathbb{Z}$ is the integer set and $\mathbb{Z}_+$ is the non-negative integer set. 
We denote the boundary of this graph by $\partial V := \{ (0,j)\in V \;|\; j\in \mathbb{Z} \}$. 
The arc set $A$ is represented by 
	\[ A=\{ (\bs{x};d)\;|\; \bs{x}\in V, d\in\{0,1,2,3\}  \}. \]
For $a\in A$, the origin and terminus of $a$ are denoted by $o(a), t(a)\in V$, respectively, and the inverse of $a$ is denoted by $\bar{a}$. 
Here if $a=(\bs{x};d)\in A$, then
	\begin{align*} 
         t(a) &= \bs{x}, \\
         o(a) &= 
         \begin{cases}
         \bs{x}+(1,0) & \text{: $d=0$,}\\
         \bs{x} & \text{: $d=1$, $\bs{x}\in \partial V$,}\\
         \bs{x}-(1,0) & \text{: $d=1$, $\bs{x}\notin \partial V$,}\\
         \bs{x}+(0,1) & \text{: $d=2$,}\\
         \bs{x}-(0,1) & \text{: $d=3$,}
         \end{cases}
        \end{align*}
which means that $(\bs{x};d)$ is the arc coming from the direction labeled by $d$ whose terminus is $\bs{x}$; 
see figure 1. 
We call $\{ (\bs{x};1)\;|\; \bs{x}\in \partial V \}\subset A$ a set of self-loops. We take $\bar{a}=a$ for every self-loop. 
%%%%%%%%%%%%%%%%%%%%%%%%%%%%%%%%%%%%%%%%%%%%%%%%
\subsection{Definition of the AE model}\label{2.2}
%%%%%%%%%%%%%%%%%%%%%%%%%%%%%%%%%%%%%%%%%%%%%%%%
The one-step time evolution of the AE model is explained as follows: 
the incident state coming from the horizontal (vertical) direction is
transmitted at the terminal vertex to the vertical (horizontal)
directions with some complex-valued weight conserving the unitarity. 
Here, we define the model in the arc-set representation $\ell^2(A)$ first and then, 
convert the representation to the vertex-set ones; $\ell^2(V; \mathbb{C}^4)$ or $\ell^2(V; \mathbb{C}^2)$.
\begin{enumerate}
\item Total Hilbert space: 
$\mathcal{A}:=\ell^2(A)=\{ \psi: A\to \mathbb{C}\;|\; ||\psi||<\infty \}. $
The inner product is the standard inner product. 
Now we introduce a binary relation of $A$: 
	\[ a \stackrel{\pi}{\sim} b \Leftrightarrow t(a)=t(b) \] 
Since $\pi$ is an equivalence relation, we obtain the quotient sets by 
	\[ A/\pi = \oplus_{\bs{x} \in V} A_{\bs{x}}, \]
where $A_{\bs{x}}=\{a\in A \;|\; t(a)=x\}$.
From this partition of $A$, we set the orthogonal decomposition of $\mathcal{A}$ into 
	\[ \mathcal{A}=\bigoplus_{\bs{x}\in V} \mathcal{A}_{\bs{x}}. \]
Here 
	\[ \mathcal{A}_{\bs{x}}=\{\psi\in \mathcal{A}\;|\; a\notin A_{\bs{x}} \Rightarrow \psi(a)=0\}. \]
\item Time evolution:
We assign the local unitary operator on $\mathcal{A}_{\bs{x}}$ for each $\bs{x}\in V$; 
this operator acts alternatively in the horizontal and vertical directions in the following way 
\begin{align*} 
        C_{\bs{x}}\delta_{(\bs{x};0)} &= \cos \beta \; \delta_{(\bs{x};3)}+\sin \beta\; \delta_{(\bs{x};2)}, \\
        C_{\bs{x}}\delta_{(\bs{x};1)} &= -\sin \beta \; \delta_{(\bs{x};3)}+\cos \beta \; \delta_{(\bs{x};2)}, \\
        C_{\bs{x}}\delta_{(\bs{x};2)} &= \cos \alpha \; \delta_{(\bs{x};1)}+\sin \alpha \; \delta_{(\bs{x};0)}, \\
        C_{\bs{x}}\delta_{(\bs{x};3)} &= -\sin \alpha \; \delta_{(\bs{x};1)}+\cos \alpha \; \delta_{((\bs{x};0)},
        \end{align*}
where $\delta_a\in \mathcal{A}$ is the Kronecker delta function, that is, 
	\[ \delta_a(a')=\begin{cases} 1 & \text{: $a=a'$,}\\ 0 & \text{: $a\neq a'$.} \end{cases} \]
The pair of parameters, $(\alpha, \beta)\in [0,2\pi)^2$, which determines this quantum walk, 
will be important for providing its behavior. 
The time evolution is denoted by the iteration of the unitary map, $U: \mathcal{A}\to \mathcal{A}$ 
	\begin{equation}
        (U\psi)(a)=\sum_{b:t(b)=o(a)}\langle \delta_{\bar{a}},C_{o(a)}\delta_b \rangle \psi(b). 
        \end{equation}
See also figure 1. 
\item Distribution: 
Throughout this paper, we fix the initial state $\psi_0\in \mathcal{A}$ by $\psi_0(a)=\delta_{(\bs{0};1)}(a)$ 
which represents the self-loop at the origin. 
We define the distribution at each time step $n$, $\rho_n:A\to [0,1]$
\[ \rho_n(a)=|\psi_n(a)|^2=||\Pi_{\delta_{a}}\psi_n||^2, \]
where for a subset $\mathcal{A}'\subset \mathcal{A}$, $\Pi_{\mathcal{A}'}$ is the projection operator onto $\mathcal{A}'$. 
Here $\psi_n=U^{2n} \psi_0$. 
Obviously $\sum_{a\in A}\Pi_{\delta_{a}}=1$ holds. We regard $\rho_n(a)$ as the finding probability of the quantum walk 
at position $a\in A$ and at time $2n$. 
%We define the distribution at each time step $n$, $\rho_n: A\to [0,1]$, by 
%	\[ \rho_n(a)= |\psi_n(a)|^2. \]
We observe how this quantum walk recognizes the boundary of the graph 
through the asymptotic behavior of $\nu_n(j):=\rho_{n}((0,j);1)$ for large $n$.   
\end{enumerate}
\begin{remark}
The unitarity of the time evolution $U$ is immediately shown by the
 product of the two unitary operators of $U$; $U=SC$, where
\begin{align}\label{S}
	S\psi(a)=\psi(\bar{a}),\;\;  C=\oplus \sum_{\bs{x}\in V}C_{\bs{x}}.
\end{align}
The unitary operators $S$ and $C$ are called shift and coin operators, respectively.  
\end{remark}

%%%%%%%%%%%%%%%%%%%%%%%%%%%%%%%%%%%%%%%%
\subsection{An alternative quantum walk}
%%%%%%%%%%%%%%%%%%%%%%%%%%%%%%%%%%%%%%%%
The horizontal and vertical arcs are denoted by 
	\[ A^{(\leftrightarrow)}=\{(\bs{x};d)\in A \;|\; d\in \{0,1\} \},
        \;\;A^{(\updownarrow)}=\{(\bs{x};d)\in A \;|\; d\in \{2,3\}\}, \]
respectively.  
We set the associated subspaces 
        \[ \mathcal{A}^{(\leftrightarrow)}=\{\psi\in \mathcal{A} \;|\; a\notin A^{(\leftrightarrow)}\Rightarrow \psi(a)=0\},\;\;\mathcal{A}^{(\updownarrow)}=\{\psi\in \mathcal{A}\;|\; a\notin A^{(\updownarrow)}\Rightarrow \psi(a)=0\}. \]
We define unitary maps $\mathcal{U}$  and $\mathcal{U}_e$ which convert the
Hilbert space in the arc-set representations to ones in the vertex set representations with four and two internal states, respectively. 
These maps are given by $\mathcal{U}: \mathcal{A} \to \ell^2(V;\mathbb{C}^4)$ and 
$\mathcal{U}_e: \mathcal{A}^{(\leftrightarrow)} \to \ell^2(V;\mathbb{C}^2)$, where 
	\[ \ell^2(V;\mathbb{C}^n)=\{ f:V\to \mathbb{C}^n
	\;|\;\sum_{\bs{x}\in V} ||f(\bs{x})||^2_{\mathbb{C}^n} <\infty
	\} \;\; (n \in \{1,2,\dots\}) .\] 
More precisely, these maps $\mathcal{U}$ and $\mathcal{U}_e$ are defined as follows. 
\begin{definition}
\noindent \\
We define $\mathcal{U}: \mathcal{A} \to \ell^2(V;\mathbb{C}^4)$ such that 
	\[ (\mathcal{U}\psi)(\bs{x})
        	={}^T[ \psi(\bs{x};0)\;\; \psi(\bs{x};1)\;\; \psi(\bs{x};2)\;\; \psi(\bs{x};3) ].  \]
We also define $\mathcal{U}_e: \mathcal{A}^{(\leftrightarrow)} \to \ell^2(V;\mathbb{C}^2)$ such that 
	\[ (\mathcal{U}_e\psi)(\bs{x})
        	={}^T[ \psi(\bs{x};0)\;\; \psi(\bs{x};1)].  \]
\end{definition}
We can see that these inverse maps $\mathcal{U}^{-1}: \ell^2(V;\mathbb{C}^4)\to \mathcal{A}$ and $\mathcal{U}_e^{-1}: \ell^2(V;\mathbb{C}^2)\to \mathcal{A}^{(\leftrightarrow)}$ 
are expressed by 
	\begin{align*}
        (\mathcal{U}^{-1}f)(\bs{x};j) &= f_j(\bs{x})\;\;(j\in\{0,1,2,3\}),\\ (\mathcal{U}_e^{-1}g)(\bs{x};j) &= g_j(\bs{x})\;\;(j\in\{0,1\}) 
        \end{align*}
for $f\in \ell^2(V;\mathbb{C}^4)$ with $f(\bs{y})={}^T[f_0(\bs{y})\;f_1(\bs{y})\;f_2(\bs{y})\;f_3(\bs{y})]\in \mathbb{C}^4$ and 
$g\in \ell^2(V;\mathbb{C}^2)$ with $g(\bs{y})={}^T[g_0(\bs{y})\;g_1(\bs{y})]\in \mathbb{C}^2$.  
We put $|0\rangle={}^T[1,0]$ and $|1\rangle={}^T[0,1]$ as the standard basis of $\mathbb{C}^2$. 
Set the two-dimensional $\gamma$-rotation matrix by 
	\[ H_\gamma=\begin{bmatrix}  \cos \gamma & -\sin \gamma \\ \sin\gamma & \cos \gamma \end{bmatrix}. \;\;(\gamma\in[0,2\pi))\]
We define 
	\[ P_\gamma=|0\rangle\langle 0|H_\gamma,\; Q_\gamma=|1\rangle\langle 1|H_\gamma,\; S_\gamma=|1\rangle\langle 0|H_\gamma. \]
For $\psi\in \ell^2(V;\mathbb{C}^4)$ with $\psi(\bs{x})={}^T[\psi_0(\bs{x})\; \psi_1(\bs{x})\; \psi_2(\bs{x})\; \psi_3(\bs{x})]\in \mathbb{C}^4$, we denote 
$\psi^{(\leftrightarrow)}(\bs{x})={}^T[\psi_0(\bs{x})\; \psi_1(\bs{x})\;0\;0]$ and $\psi^{(\updownarrow)}(\bs{x})={}^T[0\;0\;\psi_2(\bs{x})\; \psi_3(\bs{x})]$. 
The arc representation of the time evolution of the quantum walk whose Hilbert space was $\ell^2(A)$ is converted to the vertex representation 
whose Hilbert space is $\ell^2(V;\mathbb{C}^4)$: 
\begin{lemma}\label{alternative}
Denote $U'=\mathcal{U}U\mathcal{U}^{-1}$. Then we have 
	\begin{align} 
        (U'\psi^{(\leftrightarrow)})(x,y) &= \tilde{Q}_\beta\psi^{(\leftrightarrow)}(x,y-1) + \tilde{P}_\beta\psi^{(\leftrightarrow)}(x,y+1) \label{even}\\
        (U'\psi^{(\updownarrow)})(x,y) &= 
        	\begin{cases}
                \tilde{Q}_\alpha\psi^{(\updownarrow)}(x-1,y) +\tilde{P}_\alpha\psi^{(\updownarrow)}(x+1,y) & \text{: $(x,y)\notin \partial V$,}\\
                \tilde{S}_\alpha\psi^{(\updownarrow)}(x,y) +
		 \tilde{P}_\alpha\psi^{(\updownarrow)}(x+1,y) & \text{:
		 $(x,y) \in \partial V$.}
                \end{cases} \label{odd}
        \end{align}
Here 
	\[ \tilde{P}_\alpha:=|0\rangle\langle 1|\otimes P_\alpha,\; \tilde{Q}_\alpha:=|0\rangle\langle 1|\otimes Q_\alpha,\; 
           \tilde{P}_\beta:=|1\rangle\langle 0|\otimes P_\beta,\;
	 \tilde{Q}_\beta:=|1\rangle\langle 0|\otimes Q_\beta,\;
\tilde{S}_\alpha:=|0\rangle\langle 1|\otimes S_\alpha.
 \]
\end{lemma}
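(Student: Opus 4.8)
The plan is to compute $U'=\mathcal{U}U\mathcal{U}^{-1}$ directly from the factorization $U=SC$ recorded in the Remark, exploiting the block structure of the splitting $\mathcal{A}=\mathcal{A}^{(\leftrightarrow)}\oplus\mathcal{A}^{(\updownarrow)}$. First I would observe that the coin interchanges the two subspaces: reading off the defining relations of $C_{\bs{x}}$, the horizontal generators $\delta_{(\bs{x};0)},\delta_{(\bs{x};1)}$ are mapped into $\spann\{\delta_{(\bs{x};2)},\delta_{(\bs{x};3)}\}$ and conversely, so $C(\mathcal{A}^{(\leftrightarrow)})=\mathcal{A}^{(\updownarrow)}$ and $C(\mathcal{A}^{(\updownarrow)})=\mathcal{A}^{(\leftrightarrow)}$. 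On the other hand $S$ preserves each subspace, because $\bar a$ has the same horizontal/vertical type as $a$ for every arc (including the boundary self-loop $\overline{(\bs{x};1)}=(\bs{x};1)$). Hence $U$ sends $\mathcal{A}^{(\leftrightarrow)}$ into $\mathcal{A}^{(\updownarrow)}$ and vice versa, which is exactly why the two identities \eqref{even} and \eqref{odd} may be stated separately; it also tells us in advance which internal components of the output vanish. Throughout, I would identify $\mathbb{C}^4\cong\mathbb{C}^2\otimes\mathbb{C}^2$ with the first factor recording the sector (horizontal $|0\rangle$, vertical $|1\rangle$) and the second the internal rotation coordinate, so that direction $0$ is $|0\rangle\otimes|0\rangle$, direction $1$ is $|0\rangle\otimes|1\rangle$, direction $2$ is $|1\rangle\otimes|0\rangle$, and direction $3$ is $|1\rangle\otimes|1\rangle$; this is the identification under which the operators $\tilde P_\gamma,\tilde Q_\gamma,\tilde S_\alpha$ are written.

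For \eqref{even} I would start from $\psi^{(\leftrightarrow)}$ and apply the coin at each vertex, obtaining the vertical amplitudes $(C\psi)(\bs{x};2)$ and $(C\psi)(\bs{x};3)$ as the explicit combinations of $\psi(\bs{x};0),\psi(\bs{x};1)$ with coefficients $\sin\beta,\cos\beta$ dictated by $C_{\bs{x}}$. Then I would apply $S$ through $(U\psi)(a)=(C\psi)(\bar a)$, substituting $\overline{(\bs{x};2)}=(\bs{x}+(0,1);3)$ and $\overline{(\bs{x};3)}=(\bs{x}-(0,1);2)$; this is precisely what introduces the shifts $y\mapsto y\pm1$. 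Finally I would read the arc amplitudes back as a vertex vector via $\mathcal{U}$ and match the coefficients against $Q_\beta=|1\rangle\langle1|H_\beta$ and $P_\beta=|0\rangle\langle0|H_\beta$, confirming that the output equals $\tilde Q_\beta\psi^{(\leftrightarrow)}(x,y-1)+\tilde P_\beta\psi^{(\leftrightarrow)}(x,y+1)$.

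The identity \eqref{odd} is treated the same way, now starting from $\psi^{(\updownarrow)}$: the coin produces horizontal amplitudes in directions $0,1$ with coefficients $\sin\alpha,\cos\alpha$, and $S$ uses $\overline{(\bs{x};0)}=(\bs{x}+(1,0);1)$ together with the $\bs{x}$-dependent inverse of $(\bs{x};1)$. Here lies the only genuine subtlety and the main obstacle: the rule for $\overline{(\bs{x};1)}$ bifurcates according to whether $\bs{x}\in\partial V$. Off the boundary $\overline{(\bs{x};1)}=(\bs{x}-(1,0);0)$, yielding the shift term $\tilde Q_\alpha\psi^{(\updownarrow)}(x-1,y)$; on the boundary the self-loop $\overline{(\bs{x};1)}=(\bs{x};1)$ replaces this shift by an on-site action, which upon translation into the tensor notation is exactly $\tilde S_\alpha=|0\rangle\langle1|\otimes S_\alpha$ with $S_\alpha=|1\rangle\langle0|H_\alpha$. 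The term $\tilde P_\alpha\psi^{(\updownarrow)}(x+1,y)$ arising from $\overline{(\bs{x};0)}$ is common to both cases, and one should merely verify that $\bs{x}+(1,0)\notin\partial V$ always (its first coordinate is $\ge1$), so that no spurious self-loop contaminates the $\tilde P_\alpha$ term. Once this boundary case is handled correctly, the remainder is a routine matching of $2\times2$ coefficient blocks against $P_\alpha,Q_\alpha,S_\alpha$.
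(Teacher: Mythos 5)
Your proposal is correct and takes essentially the same approach as the paper's own proof: both rest on the factorization $U=SC$, work out the coin action (which exchanges the horizontal and vertical sectors via $H_\alpha$, $H_\beta$) and the shift action (with the boundary self-loop $\overline{(\bs{x};1)}=(\bs{x};1)$ producing the $\tilde{S}_\alpha$ term), and then match the resulting $2\times 2$ blocks against $\tilde{P}$, $\tilde{Q}$, $\tilde{S}$. The only difference is organizational: the paper first converts $C$ and $S$ separately into the vertex-representation operators $C'$ and $S'$ and then composes, whereas you compose in the arc representation and convert via $\mathcal{U}$ at the end.
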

\begin{proof}
See \ref{pfalternative}.
\end{proof}
Note that if $\gamma=\gamma'$, then $\tilde{X}_\gamma \tilde{X'}_{\gamma'}=0$ $(X,X'\in\{ P,Q \})$.
This property reflects the quantum walker's alternation between moving in the vertical and horizontal directions. 
Therefore, we can easily observe that 
	\[ U^2(\mathcal{A}^{(\leftrightarrow)})\subset \mathcal{A}^{(\leftrightarrow)},\;\; U^2(\mathcal{A}^{(\updownarrow)})\subset \mathcal{A}^{(\updownarrow)}. \]
Since our initial condition is $\psi_0=\delta_{(\bs{0};1)}\in \mathcal{A}^{(\leftrightarrow)}$ which is the self-loop at the origin, we concentrate on $\mathcal{A}^{(\leftrightarrow)}$. 
Using Lemma~\ref{alternative}, we obtain Lemma~\ref{alternative2}; see the detailed proof in \ref{pfalternative2}. 
%%%
	\begin{lemma}\label{alternative2}
        Let $\Gamma: \ell^2(V;\mathbb{C}^2)\to \ell^2(V;\mathbb{C}^2)$
	 be a map defined by for $\varphi\in \ell^2(V;\mathbb{C}^2)$, 
        \begin{multline}
        (\Gamma \varphi)(x,y)
        	= Q_\alpha Q_\beta\varphi(x-1,y-1)+Q_\alpha P_\beta\varphi(x-1,y+1) \\ +P_\alpha Q_\beta\varphi(x+1,y-1)+P_\alpha P_\beta\varphi(x+1,y+1),\;\;((x,y)\notin \partial V),
\label{eq:Gamma_D}
        \end{multline}
and
        \begin{multline}
        (\Gamma\varphi)(x,y)
        	= S_\alpha Q_\beta\varphi(x,y-1)+S_\alpha P_\beta\varphi(x,y+1) \\ +P_\alpha Q_\beta\varphi(x+1,y-1)+P_\alpha P_\beta\varphi(x+1,y+1),\;\;((x,y)\in \partial V).
\label{eq:Gamma_D2}
        \end{multline}
	Then we have 
        \[
	U^2|_{\mathcal{A}^{(\leftrightarrow)}}=\mathcal{U}_e^{-1}\Gamma
	\mathcal{U}_e. \]
        \end{lemma}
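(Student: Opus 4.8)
The plan is to compute the square of the vertex-representation operator $(U')^2 = \mathcal{U}U^2\mathcal{U}^{-1}$ directly on the horizontal subspace by applying Lemma~\ref{alternative} twice, and then to translate the result through the identification of a horizontal $\mathbb{C}^4$-state with its internal $\mathbb{C}^2$-data via $\mathcal{U}_e$. Since $U' = \mathcal{U}U\mathcal{U}^{-1}$ gives $(U')^2 = \mathcal{U}U^2\mathcal{U}^{-1}$, and since for $\psi\in\mathcal{A}^{(\leftrightarrow)}$ the map $\mathcal{U}$ agrees with the embedding of $\mathcal{U}_e\psi$ into the first two coordinates of $\mathbb{C}^4$, it suffices to show that $(U')^2$ sends the horizontal state with internal data $\varphi$ to the horizontal state with internal data $\Gamma\varphi$; conjugating back by $\mathcal{U}_e$ then yields the claim.

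First I would take $\varphi\in\ell^2(V;\mathbb{C}^2)$ and form the horizontal state $\psi^{(\leftrightarrow)}(\bs{x}) = {}^T[\varphi_0(\bs{x})\;\;\varphi_1(\bs{x})\;\;0\;\;0]$, whose first tensor factor is $|0\rangle$. Applying $U'$ once via (\ref{even}) and using that $\tilde{P}_\beta,\tilde{Q}_\beta$ are of the form $|1\rangle\langle 0|\otimes\cdot$, hence carry the horizontal slot to the vertical slot, I obtain a purely vertical state $U'\psi^{(\leftrightarrow)}$ whose internal $\mathbb{C}^2$-component is $\chi(x,y) := Q_\beta\varphi(x,y-1)+P_\beta\varphi(x,y+1)$. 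Note that (\ref{even}) carries no boundary case, consistent with the cut affecting only horizontal motion, so $\chi$ is unaffected by $\partial V$ at this intermediate stage.

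Next I would apply $U'$ a second time to the vertical state $U'\psi^{(\leftrightarrow)}$ via (\ref{odd}). Because $\tilde{P}_\alpha,\tilde{Q}_\alpha,\tilde{S}_\alpha$ are of the form $|0\rangle\langle 1|\otimes\cdot$, they return the state to the horizontal slot, so $(U')^2\psi^{(\leftrightarrow)}$ is again purely horizontal, with internal data $Q_\alpha\chi(x-1,y)+P_\alpha\chi(x+1,y)$ off the boundary and $S_\alpha\chi(x,y)+P_\alpha\chi(x+1,y)$ on $\partial V$. Substituting the expression for $\chi$ and expanding each product then reproduces the four terms of (\ref{eq:Gamma_D}) for $(x,y)\notin\partial V$ and of (\ref{eq:Gamma_D2}) for $(x,y)\in\partial V$, i.e.\ exactly $(\Gamma\varphi)(x,y)$. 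Reading off the first two coordinates with $\mathcal{U}_e$ and conjugating back gives $U^2|_{\mathcal{A}^{(\leftrightarrow)}} = \mathcal{U}_e^{-1}\Gamma\mathcal{U}_e$.

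The only real care needed --- and the place I would expect bookkeeping errors --- is tracking the tensor factorization $\mathbb{C}^4\cong\mathbb{C}^2\otimes\mathbb{C}^2$, with the first factor selecting horizontal versus vertical and the second carrying the internal state, and verifying that the horizontal-to-vertical operators $\tilde{X}_\beta$ followed by the vertical-to-horizontal operators $\tilde{Y}_\alpha$ compose to land the intermediate state back in the horizontal slot, so that $\mathcal{U}_e$ extracts precisely the internal components with no residual vertical contribution. Everything else is the routine substitution of $\chi$ and the rearrangement of the eight resulting terms into the stated four, with the boundary modification entering only through the second step via $\tilde{S}_\alpha$.
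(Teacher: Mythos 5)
Your proposal is correct and follows essentially the same route as the paper's proof: both apply Lemma~\ref{alternative} twice (first (\ref{even}) sending the horizontal state to a vertical one, then (\ref{odd}) returning it to the horizontal slot, with $\tilde{S}_\alpha$ supplying the boundary modification), track the tensor factorization $\mathbb{C}^4\cong\mathbb{C}^2\otimes\mathbb{C}^2$ so that the result lands in $|0\rangle\otimes(\Gamma\varphi)(\bs{x})$, and conjugate back by $\mathcal{U}_e$. The only difference is presentational --- you compute the intermediate state $\chi$ explicitly before the second application, whereas the paper writes the outer application first and substitutes --- which is mathematically the same computation.
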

%%%
\begin{proof}
See \ref{pfalternative2}
\end{proof}
The notion of the weights associated with the one-step move to the neighboring vertex; $P$,$Q$ and $S$, 
comes from a quantum analog of a discrete-time random walk~\cite{Gudder} and becomes useful for the Fourier analysis in the next section. 
From now on, using Lemma~\ref{alternative}, we will focus upon the unitary operator $\Gamma$ on $\ell^2(V;\mathbb{C}^2)$ instead of the unitary operator $U$ on $\mathcal{A}$. 
%%%%%%%%%%%%%%%%%%%%%%%%%%
%%%%%%%%%%%%%%%%%%%%%%%%%%
\section{Connecting with the CMV matrix}
%%%%%%%%%%%%%%%%%%%%%%%%%%
In the previous section, the model is reduced to the quantum walk on $\ell^2(V;\mathbb{C}^2)$ 
in which the walker moves to a neighboring location with the two-dimensional matrix weight following the unitary time evolution $\Gamma$. 
One of the keys to this paper is the connection between the AE model and the CMV matrix~\cite{CGMV}. 
The spectral theory on the CMV matrix can be used to obtain the
detailed spectral information, including gaps in the bulk spectrum and the edge state 
of our quantum walk model. 
To show this connection, we prepare the following two maps.
Set $L^2(\mathbb{Z}_+\times [0,2\pi);\mathbb{C}^2)$ by $\{ \hat{\varphi}:\mathbb{Z}_+\times [0,2\pi)\to \mathbb{C}^2 \;|\; \sum_{j\in \mathbb{Z}_+}\int_{0}^{2\pi} ||\hat{\varphi}(j;k)||^2_{\mathbb{C}^2} dk < \infty \}$.  
\begin{definition}
Let $\mathcal{F}: \ell^2(V;\mathbb{C}^2)\to L^2(\mathbb{Z}_+\times [0,2\pi);\mathbb{C}^2)$ be the Fourier transform defined by 
	\[ \hat{\varphi}(j;k):=(\mathcal{F}\varphi)(j;k)=\sum_{m\in
	\mathbb{Z}} \varphi(j,m)e^{ikm}\;\;(j\in\mathbb{Z},\; k\in
	[0,2\pi)), \]
and for fixed $k\in[0,2\pi)$ we can write $\varphi^\prime(\cdot)=\hat{\varphi}(\cdot\; ;k)$. 
\end{definition}
\begin{definition}
For fixed $k\in[0,2\pi)$, let $\Lambda_k: \ell^2(\mathbb{Z}_+;\mathbb{C}^2) \to \ell^2(\mathbb{Z}_+)$ be defined by 
	\[ (\Lambda_k \varphi')(j)=e^{i\omega (j)} \times 
        \begin{cases} 
        \langle 1| \varphi'([j/2]) \rangle  & \text{: $j$ is even,} \\ 
        \langle 0| \varphi'([j/2]) \rangle  & \text{: $j$ is odd.}
        \end{cases}
        \]
Here $[a]$ is the maximum integer such that $[a]\leq a$ for $a\in \mathbb{R}$ and 
	\[ \omega(2j)=-j\;\mathrm{arg}(\langle 0| \hat{H}_k |0 \rangle),\;\;\omega(2j+1)=(j+1)\;\mathrm{arg}(\langle 1| \hat{H}_k |1 \rangle), \]
where $\hat{H}_k \in \mathrm{SU}(2)$ is %defined in (\ref{quantumcoin}), that is, 
	\[ \hat{H}_k=\begin{bmatrix} 
        e^{-ik}\cos\alpha\cos\beta-e^{ik}\sin\alpha\sin\beta & -e^{-ik}\cos\alpha\sin\beta-e^{ik}\sin\alpha\cos\beta \\ 
        e^{-ik}\sin\alpha\cos\beta+e^{ik}\cos\alpha\sin\beta & -e^{-ik}\sin\alpha\sin\beta+e^{ik}\cos\alpha\cos\beta
        \end{bmatrix}.  \] 
\end{definition}
\begin{remark}
The inverse maps of $\mathcal{F}^{-1}: L^2(\mathbb{Z}_+\times[0,2\pi);\mathbb{C}^2)\to \ell^2(V;\mathbb{C}^2)$ 
and $\Lambda_k^{-1}: \ell^2(\mathbb{Z}_+)\to \ell^2(\mathbb{Z}_+;\mathbb{C}^2)$ are 
	\[ (\mathcal{F}^{-1}\hat{\varphi})(x,y)=\int_{0}^{2\pi} \hat{\varphi}(x;k)e^{-iky}\frac{dk}{2\pi},  \]
        \[ (\Lambda_k^{-1}f)(j)=\begin{bmatrix} e^{-i\omega(2j+1)}f(2j+1) \\ e^{-i\omega(2j)}f(2j) \end{bmatrix}. \]
\end{remark}
\begin{remark}
For every $\hat{\varphi}\in L^2(\mathbb{Z}_+\times [0,2\pi);\mathbb{C}^2)$, by Fubini's theorem, 
\begin{enumerate}
\item $\Lambda_k \hat{\varphi}$ is well-defined, because $\hat{\varphi}(\;\cdot\;;k)=\varphi'(\cdot)$ belongs to $\ell^2(\mathbb{Z}_+;\mathbb{C}^2)$ for fixed $k\in[0,2\pi)$;  
\item for every unitary operator $E$ on $\ell^2(\mathbb{Z}_+)$, we have $\Lambda_k^{-1}E \Lambda_k \hat{\varphi}(\;\cdot\;;k)\in \ell^2(V;\mathbb{C}^2)$ for fixed $k$,  
and $\Lambda_k^{-1}E \Lambda_k \hat{\varphi}\in L^2(\mathbb{Z}_+\times[0,2\pi);\mathbb{C}^2)$.
\end{enumerate}
\end{remark}
%
%Before giving the connection betetheorem, we take a short review on the CMV matrix which is a unitary operator on $\ell^2(\mathbb{Z}_+)$ as follows. 
%%%%%%%%%%%%%%%%%%%%%%%%%%%
\subsection{The CMV matrix}
%%%%%%%%%%%%%%%%%%%%%%%%%%%
We use $\mathbb{T}$ to denote the unit circle on the complex plane, that is, $\mathbb{T}=\{z\in \mathbb{C}\;|\; |z|=1\}$. 
For a given positive measure $\mu$ on $\mathbb{T}$, we set the Hilbert space $L^2_\mu(\mathbb{T})$ whose inner product is defined by 
	\[ \langle f,g \rangle_{\mu}=\int_{z\in \mathbb{T}} \overline{f(z)}g(z) d\mu(z).  \]
Let $\{\chi_j(z)\}_{j=0}^{\infty}\subset L^2_\mu(\mathbb{D})$ be the orthogonal Laurent polynomials associated with the spectral measure $d\mu(z)$ on the unit circle in the complex plane 
obtained by the orthogonalization of $\{1,z,z^{-1},z^{2},\dots\}$.  The CMV matrix $\mathcal{C}$ is expressed by 
	\[ (\mathcal{C})_{i,j} = \langle \chi_i,z\chi_j \rangle_\mu \]
which represents a canonical representation of the multiplication operation on $L^2_\mu(\mathbb{T})$; $f(z)\mapsto zf(z)$. 
There is a one-to-one correspondence between the CMV matrix and so called Verblunsky parameters $(\eta_0,\eta_1,\eta_2,\dots)$, 
which are complex parameters satisfying $|\eta_j|<1$. This correspondence is given in \cite{CMV1,CMV2}. 
The CMV matrix is expressed in full as
\begin{equation*}
\mathcal{C}=
\begin{bmatrix}
\overline{\eta}_0      & \rho_0\overline{\eta}_1    & \rho_0\rho_1  & 0               & 0             & 0               & 0            & 0      & \ldots \\
\rho_0         & -\eta_0\overline{\eta}_1      & -\eta_0\rho_1    & 0               & 0             & 0                & 0             & 0       & \ldots \\
0              & \rho_1\overline{\eta}_2    & -\eta_1\overline{\eta}_2 & \rho_2\overline{\eta}_3 & \rho_2\rho_3  & 0                & 0             & 0       & \ldots \\  
0              & \rho_1\rho_2       & -\eta_1\rho_2    & -\eta_2\overline{\eta}_3   & -\eta_2\rho_3    & 0               & 0            & 0       & \ldots \\
0              & 0                  & 0             & \rho_3\overline{\eta}_4 & -\eta_3\overline{\eta}_4 & \rho_4\overline{\eta}_5 & \rho_4\rho_5 & 0       & \ldots \\
0               & 0                   & 0              & \rho_3\rho_4    & -\eta_3\rho_4    & -\eta_4\overline{\eta}_5   & -\eta_4\rho_5   & 0      & \ldots  \\
               &\vdots              &\vdots         &\vdots           & \vdots        & \vdots          & \ddots       &  &         
\end{bmatrix},
\end{equation*}
where $\rho_j=\sqrt{1-|\alpha_j|^2}$. 

Now we are ready to state the following theorem 
which claims that 
the time evolution of the AE model converted to the vertex-based expression; 
$\Gamma=\mathcal{U}_eU^2|_{\mathcal{A}^{(\leftrightarrow)}}\mathcal{U}_e^{-1}$, is decomposed into the CMV matrices $\{\mathcal{C}_k\}_{k=0}^{2\pi}$. 
        \begin{theorem}
        Let $\mathcal{C}_k$ be the CMV matrix whose Verblunsky parameters are $(\eta(k),0,\eta(k),0,\dots)$, where $\eta(k)=\sin(\alpha-\beta)\cos k+i\sin(\alpha+\beta)\sin k$. 
        Then we have  
        	\begin{equation}\label{connectionCMV}
                (\Gamma^{n}\varphi)(x,y)=\int_{0}^{2\pi}  \left( \Lambda_k^{-1}\; ({}^T\mathcal{C}_k)^n\; \Lambda_k\hat{\varphi}\right)(x)\;e^{-iky} \frac{dk}{2\pi},\;\;((x,y)\in V).
                \end{equation}
        \end{theorem}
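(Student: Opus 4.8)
The plan is to exploit the translation invariance of $\Gamma$ in the $y$-direction to reduce the two-dimensional problem to a one-parameter family of half-line operators, and then to recognize each member of that family as a transposed CMV matrix. Concretely, I would proceed in three stages: (i) compute the Fourier-conjugated operator $\hat\Gamma_k:=\mathcal{F}\Gamma\mathcal{F}^{-1}$ fibrewise in $k$; (ii) show that the unfolding map $\Lambda_k$ turns each fibre $\hat\Gamma_k$ into ${}^T\mathcal{C}_k$, the transpose of the CMV matrix with Verblunsky parameters $(\eta(k),0,\eta(k),0,\dots)$; (iii) reassemble the iterates through the inverse transforms to obtain (\ref{connectionCMV}). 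The transpose enters naturally because the CMV matrix encodes multiplication by $z$, whereas the walk acts as the adjoint transfer in the interleaved ordering fixed by $\Lambda_k$.

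First I would carry out stage (i). Applying $\mathcal{F}$ to the explicit formulas for $\Gamma$ in Lemma~\ref{alternative2}, each vertical shift $\varphi(x,y\mp 1)$ becomes multiplication by $e^{\pm ik}$, so the $y$-variable drops out and the $\mathbb{C}^2$-weights recombine into $e^{ik}Q_\beta+e^{-ik}P_\beta$. Left-multiplying by $P_\alpha$ and $Q_\alpha$ and using $P_\alpha+Q_\alpha=H_\alpha$, one checks that the bulk fibre acts by
\[ (\hat\Gamma_k\varphi')(x)=|1\rangle\langle 1|\hat{H}_k\,\varphi'(x-1)+|0\rangle\langle 0|\hat{H}_k\,\varphi'(x+1)\qquad(x\geq 1), \]
with $\hat{H}_k=H_\alpha(e^{ik}Q_\beta+e^{-ik}P_\beta)$ equal to the matrix displayed in the Definition, while the boundary row ($x=0$) produces $|1\rangle\langle 0|\hat{H}_k\,\varphi'(0)+|0\rangle\langle 0|\hat{H}_k\,\varphi'(1)$ coming from the self-loop weight $S_\alpha$. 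The key point is that the self-loop contributes exactly the same row $\langle 0|\hat{H}_k$ as the bulk forward hop, so no anomalous boundary Verblunsky parameter appears and the boundary index $j=0$ fits the same pattern as the bulk.

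The heart of the argument, and the step I expect to be hardest, is stage (ii): verifying $\Lambda_k\hat\Gamma_k\Lambda_k^{-1}={}^T\mathcal{C}_k$ entry by entry. The map $\Lambda_k$ interleaves the two components of each site $m\in\mathbb{Z}_+$ into the consecutive scalar indices $2m$ and $2m+1$, so that the nearest-neighbour-in-$\mathbb{C}^2$ structure of $\hat\Gamma_k$ unfolds into the five-diagonal zig-zag pattern of a CMV matrix; the odd Verblunsky parameters must come out zero, reflecting the fact that the coupling between the two half-steps at a single site is a trivial swap. Since $\hat{H}_k\in\mathrm{SU}(2)$ one has $\langle 1|\hat{H}_k|1\rangle=\overline{\langle 0|\hat{H}_k|0\rangle}$ and $\langle 1|\hat{H}_k|0\rangle=-\overline{\langle 0|\hat{H}_k|1\rangle}$, and the plan is to show that the off-diagonal entry simplifies to exactly $\eta(k)$ with $|\langle 0|\hat{H}_k|0\rangle|=\sqrt{1-|\eta(k)|^2}=\rho$, placed in the canonical slots as $\overline{\eta(k)}$ and $-\eta(k)$. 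The phase weights $e^{i\omega(j)}$ are present precisely to strip off the arguments of the two diagonal entries $\langle 0|\hat{H}_k|0\rangle$ and $\langle 1|\hat{H}_k|1\rangle$ as they accumulate along the chain, so that the $\rho$-entries emerge real and positive. The main obstacle is the bookkeeping of these phases together with the trigonometry: one must track $\omega(2j)$, $\omega(2j+1)$, the floor indices $[j/2]$ and the special index $j=0$ carefully enough to confirm simultaneously that every diagonal phase cancels, that the surviving off-diagonal entries reduce (after using the addition formulas) to exactly $\eta(k)=\sin(\alpha-\beta)\cos k+i\sin(\alpha+\beta)\sin k$ in the correct positions, and that the full Verblunsky sequence is $(\eta(k),0,\eta(k),0,\dots)$ including at the boundary.

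Finally, stage (iii) is routine assembly. Because $\mathcal{F}$ and each $\Lambda_k$ are unitary and (ii) holds fibrewise, I obtain $\hat\Gamma_k^{\,n}=\Lambda_k^{-1}({}^T\mathcal{C}_k)^n\Lambda_k$ and hence $\Gamma^n=\mathcal{F}^{-1}\big(\Lambda_k^{-1}({}^T\mathcal{C}_k)^n\Lambda_k\big)\mathcal{F}$ as a direct integral over $k$. Substituting the explicit inverse transforms recorded in the two Remarks, and invoking the Fubini-type well-definedness stated there to justify interchanging the sum, the integral and the operator, then yields (\ref{connectionCMV}). I do not anticipate difficulties in this last step beyond checking that the integrand lies in $L^2(\mathbb{Z}_+\times[0,2\pi);\mathbb{C}^2)$, which the second Remark already supplies.
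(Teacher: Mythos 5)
Your proposal is correct in outline and follows the same skeleton as the paper's proof: your stage (i) is the paper's Lemma~\ref{lemma3} (the Fourier fibering of $\Gamma$ into the half-line walks $\hat{\Gamma}_k$ of (\ref{Gammahat})), your stage (iii) is its closing inverse-Fourier step, and your stage (ii) is precisely the identity (\ref{unitaryequiv2}), $\Lambda_k^{-1}({}^T\mathcal{C}_k)^n\Lambda_k\hat{\varphi}_0=\hat{\varphi}_n$. The genuine difference is how stage (ii) gets discharged. You propose to verify $\Lambda_k\hat{\Gamma}_k\Lambda_k^{-1}={}^T\mathcal{C}_k$ entry by entry, tracking the phases $\omega(2j)$, $\omega(2j+1)$ and the five-diagonal zig-zag by hand. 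The paper avoids that bookkeeping entirely: it observes that $\mathcal{U}'\hat{\Gamma}_k\mathcal{U}'^{-1}$ is a Type-I quantum walk with coin $\hat{H}_k$ and invokes the already-established CGMV/KS correspondence, Lemma~\ref{CMV}, which packages the interleaving, the vanishing odd Verblunsky parameters, and the phase gauging once and for all; since $\mathrm{det}(\hat{H}_k)=1$, the Verblunsky formula of that lemma collapses to $\eta_j=\overline{\langle 0|\hat{H}_k|1\rangle}$ for even $j$, and the factorization $\Lambda_k=\mathcal{D}\mathcal{U}'$ then yields (\ref{unitaryequiv}) and (\ref{unitaryequiv2}) in two lines. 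Your route is self-contained (in effect a re-derivation of Lemma~\ref{CMV} in the special case $\Delta=1$) at the price of exactly the bookkeeping you flag as the hardest step; the paper's route is short but rests on the cited lemma.

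One concrete warning before you invest in the stage (ii) trigonometry: the off-diagonal entry you will actually obtain is $\overline{\langle 0|\hat{H}_k|1\rangle}=-\cos k\,\sin(\alpha+\beta)+i\sin k\,\sin(\alpha-\beta)$, which is (\ref{VP}) in the paper, \emph{not} the expression $\sin(\alpha-\beta)\cos k+i\sin(\alpha+\beta)\sin k$ written in the theorem statement; the two differ by the interchange $(\alpha-\beta)\leftrightarrow(\alpha+\beta)$ and a sign. The statement's formula appears to be a typo internal to the paper: the spectral formulas used later, (\ref{rh}) and (\ref{mass}), require $|\eta(k)|^2=\cos^2k\,\sin^2(\alpha+\beta)+\sin^2k\,\sin^2(\alpha-\beta)$, which only the (\ref{VP}) form satisfies. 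So your claim that the entries ``reduce to exactly'' the stated $\eta(k)$ would fail as written; the method is fine, but you should expect the computation to terminate at (\ref{VP}).
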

The RHS of (\ref{connectionCMV}) is the $n$-th iteration value of the AE model at the vertex $(x,y)\in V$ with the initial state $\varphi\in \ell^2(V;\mathbb{C}^2)$, 
On the other hand, the integral on the LHS consists of the $2x+1$ and $2x$ components with the phase rotations $e^{-i(\omega(2 x+1)+ky)}$ and 
$e^{-i(\omega(2x)+ky)}$, respectively, 
of the $n$ th composition of the CMV matrix ${}^T\mathcal{C}_k$ with the initial state $\Lambda_k\hat{\varphi}\in \ell^2(\mathbb{Z}_+)$. 
Thus the problem is reduced to the analysis on the RHS of (\ref{connectionCMV}), 
using well developed studies on the spectral analysis on the CMV matrix to obtain spectral information about this walk. 

%%%%%%%%%%%%%%%%%%%%%%%%%%
\subsection{Reduction to a quantum walk on the half line: Type-I quantum walk}
%%%%%%%%%%%%%%%%%%%%%%%%%%
Due to the translation invariance to the $y$-axis direction of the AE model, we perform the partial Fourier transform defined in the previous subsection. 
We will show that the AE model is decomposed into a wave number dependent quantum walk on the half line. 
This reduced quantum walk is given by a vertex-based expression. 
For fixed $k\in [0,2\pi)$, we define $\hat{\Gamma}_k:\ell^2(\mathbb{Z};\mathbb{C}^2)\to \ell^2(\mathbb{Z};\mathbb{C}^2)$ by 
	\begin{equation}\label{Gammahat}
        (\hat{\Gamma}_k\varphi')(x) = 
        \begin{cases} 
        \hat{P}_k \varphi'(x+1)+\hat{Q}_k \varphi'(x-1) & \text{: $x\geq 1$, } \\
        \hat{P}_k \varphi'(x+1)+\hat{S}_k \varphi'(x) & \text{: $x=0$, }
        \end{cases}
        \end{equation}
where $\hat{P}_k=|0\rangle\langle 0|\hat{H}_k$, $\hat{Q}_k=|1\rangle\langle 1|\hat{H}_k$ and $\hat{S}_k=|1\rangle\langle 0|\hat{H}_k$. 
Here $\hat{H}_k$ is defined by $H_\alpha \hat{D}(k) H_\beta$, where 
	\[ \hat{D}_k=\begin{bmatrix} e^{-ik} & 0 \\ 0 & e^{ik} \end{bmatrix}. \]
More precisely, 
	\begin{equation}\label{quantumcoin} 
        \hat{H}_k=\begin{bmatrix} 
        e^{-ik}\cos\alpha\cos\beta-e^{ik}\sin\alpha\sin\beta & -e^{-ik}\cos\alpha\sin\beta-e^{ik}\sin\alpha\cos\beta \\ 
        e^{-ik}\sin\alpha\cos\beta+e^{ik}\cos\alpha\sin\beta & -e^{-ik}\sin\alpha\sin\beta+e^{ik}\cos\alpha\cos\beta
        \end{bmatrix}. 
        \end{equation}
Then we have the following lemma. 
	\begin{lemma}\label{lemma3}
        Let $\varphi_n\in \ell^2(V;\mathbb{C}^2)$ be the $n$-th iteration of $\Gamma$ with the initial state $\varphi_0(\bs{x})=\delta(\bs{x}){}^T[0,1]$, 
	that is, $\varphi_n=\Gamma^n\varphi_0$.
        The Fourier transform of $\varphi_n$ described by $\mathcal{F}(\varphi_n)=\hat{\varphi}_n \in L^2(\mathbb{Z}\times [0,2\pi);\mathbb{C}^2)$ is expressed by 
	\begin{equation}\label{psi}
        \hat{\varphi}_{n}=\hat{\Gamma}_k^n\hat{\varphi}_0,\;\hat{\varphi}_0=\delta(x){}^T[0,1]. 
        \end{equation}
        \end{lemma}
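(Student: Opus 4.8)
The plan is to prove the stronger fiberwise statement that the Fourier transform $\mathcal{F}$, which acts only in the translation-invariant $y$-direction and leaves $x\in\mathbb{Z}_+$ as a half-line index, conjugates $\Gamma$ into the family $\{\hat{\Gamma}_k\}_{k\in[0,2\pi)}$; that is, $(\mathcal{F}\Gamma\varphi)(\,\cdot\,;k)=\hat{\Gamma}_k\,(\mathcal{F}\varphi)(\,\cdot\,;k)$ for each fixed $k$. Granting this intertwining relation, the lemma follows by induction on $n$: the base case $n=0$ is trivial, and the inductive step reads $\hat{\varphi}_{n+1}(\,\cdot\,;k)=\mathcal{F}(\Gamma\varphi_n)(\,\cdot\,;k)=\hat{\Gamma}_k\hat{\varphi}_n(\,\cdot\,;k)=\hat{\Gamma}_k^{n+1}\hat{\varphi}_0(\,\cdot\,;k)$. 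Finally, the Fourier transform of the initial state $\varphi_0(\bs{x})=\delta(\bs{x}){}^T[0,1]=\delta(x)\delta(y){}^T[0,1]$ is $\hat{\varphi}_0(x;k)=\sum_m\delta(x)\delta(m){}^T[0,1]\,e^{ikm}=\delta(x){}^T[0,1]$, which is independent of $k$, matching the claimed $\hat{\varphi}_0$.

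The core of the argument is therefore the intertwining identity, which I would obtain by applying $\mathcal{F}$ directly to the two branches of $\Gamma$ from Lemma~\ref{alternative2}. The only analytic input is the shift rule: from $\hat{\varphi}(x;k)=\sum_m\varphi(x,m)e^{ikm}$ one reads off $\sum_y\varphi(x,y-1)e^{iky}=e^{ik}\hat{\varphi}(x;k)$ and $\sum_y\varphi(x,y+1)e^{iky}=e^{-ik}\hat{\varphi}(x;k)$. Substituting these into (\ref{eq:Gamma_D}) collects, for $x\geq 1$, the coefficient $e^{ik}Q_\alpha Q_\beta+e^{-ik}Q_\alpha P_\beta$ in front of $\hat{\varphi}(x-1;k)$ and $e^{ik}P_\alpha Q_\beta+e^{-ik}P_\alpha P_\beta$ in front of $\hat{\varphi}(x+1;k)$; substituting into (\ref{eq:Gamma_D2}) produces, at $x=0$, the coefficient $e^{ik}S_\alpha Q_\beta+e^{-ik}S_\alpha P_\beta$ in front of $\hat{\varphi}(0;k)$ and $e^{ik}P_\alpha Q_\beta+e^{-ik}P_\alpha P_\beta$ in front of $\hat{\varphi}(1;k)$.

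The key algebraic step is to recognize that each of these coefficients factors through the single diagonal matrix $\hat{D}_k=e^{-ik}|0\rangle\langle 0|+e^{ik}|1\rangle\langle 1|$. Indeed, using $P_\gamma=|0\rangle\langle 0|H_\gamma$, $Q_\gamma=|1\rangle\langle 1|H_\gamma$, $S_\gamma=|1\rangle\langle 0|H_\gamma$ together with the identity $e^{ik}|1\rangle\langle 1|+e^{-ik}|0\rangle\langle 0|=\hat{D}_k$, the bulk coefficients become
\[ e^{ik}Q_\alpha Q_\beta+e^{-ik}Q_\alpha P_\beta=|1\rangle\langle 1|H_\alpha\hat{D}_k H_\beta=|1\rangle\langle 1|\hat{H}_k=\hat{Q}_k, \]
\[ e^{ik}P_\alpha Q_\beta+e^{-ik}P_\alpha P_\beta=|0\rangle\langle 0|H_\alpha\hat{D}_k H_\beta=|0\rangle\langle 0|\hat{H}_k=\hat{P}_k, \]
while the boundary coefficient becomes $e^{ik}S_\alpha Q_\beta+e^{-ik}S_\alpha P_\beta=|1\rangle\langle 0|\hat{H}_k=\hat{S}_k$, where I have used $\hat{H}_k=H_\alpha\hat{D}_k H_\beta$. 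Comparing with the definition (\ref{Gammahat}) of $\hat{\Gamma}_k$ yields the intertwining relation in both the bulk ($x\geq 1$) and the boundary ($x=0$) cases.

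I expect no genuine obstacle here; the work is almost entirely bookkeeping. The one place demanding care is matching the signs of the Fourier shift factors $e^{\pm ik}$ to the vertical displacements $y\mp 1$, so that the $Q_\beta$ and $P_\beta$ weights pair with the correct diagonal entries of $\hat{D}_k$, and then verifying that the boundary branch (\ref{eq:Gamma_D2}) reassembles into $\hat{S}_k$ and $\hat{P}_k$ exactly as the bulk branch reassembles into $\hat{Q}_k$ and $\hat{P}_k$. Once the factorization through $\hat{D}_k$ is observed, both cases close simultaneously and the induction completes the proof.
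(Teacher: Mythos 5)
Your proposal is correct and is exactly the argument the paper intends: the paper's proof is a one-line appeal to vertical shift invariance together with Lemma~\ref{alternative2}, and your computation—applying $\mathcal{F}$ to both branches of $\Gamma$, factoring the coefficients $e^{ik}X_\alpha Q_\beta+e^{-ik}X_\alpha P_\beta$ through $\hat{D}_k$ to recover $\hat{Q}_k$, $\hat{P}_k$, $\hat{S}_k$, and concluding by induction—simply spells out the details the paper omits. All the algebra checks out, including the factorization $\hat{H}_k=H_\alpha\hat{D}_kH_\beta$ and the $k$-independence of $\hat{\varphi}_0$.
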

\begin{proof}
By the shift invariance with respect to the vertical direction and $Lemma~\ref{alternative2}$, we immediately prove this lemma. 
\end{proof}

The connection between the CMV matrix and the quantum walk was first shown in \cite{CGMV}. 
The Type-I quantum walk called by \cite{KS} is a quantum walk on the line induced by a special choice of the Verblunsky parameters of the CMV matrix. 
The Type-I quantum walk is defined by the arc-based expression as is the AE model's arc-based expression in Sect.~\ref{2.2}. 
Thus we prepare the notation for the arcs on the half line. 
Let $A'$ be the arc set of $\mathbb{Z}_+$ with the self-loop at the origin, that is, 
	\[ A'=\{ (x;0),(x;1) \;|\; x\in \mathbb{Z}_+ \}. \]
Here $(x;0)$ indicates the arc from $x+1$ to $x$, and $(x;1)$ indicates the arc from $x-1$ to $x$ for $x\geq 1$, and the self-loop for $x=0$. 
We define the subsets of $A'$ by 
	\[ A_x'=\{(x;0),(x;1)\} \]
for each $x\in \mathbb{Z}_+$. 
\begin{definition}
The Type-I quantum walk~\cite{CGMV, KS} is defined as follows: 
\begin{enumerate}
\item Total state space: $\mathcal{A}'=\ell^2(A')$
\item Time evolution: 
We set $\mathcal{A}'_x=\{ \varphi'\in \mathcal{A}'\;|\; a\notin A_x\Rightarrow \varphi'(a)=0 \}$. 
The local unitary operator on $\mathcal{A}'_j$ is defined by 
	\begin{align}\label{local}
        C'_x \cong \begin{bmatrix}  \gamma & \delta \\ \alpha & \beta \end{bmatrix} \\
        \end{align}
setting the canonical basis of $\mathcal{A}'_x$ by $\{\delta_{(x;0)}, \delta_{(x;1)}\}$ in this order. 
The time evolution $W: \mathcal{A}'\to \mathcal{A}'$ is described by 
	\begin{equation}
        (W\psi)(a)=\sum_{b:t(a)=o(b)} \langle  \delta_{\bar{a}},  C'_x \delta_{b}\rangle \psi(b) \;\mathrm{for}\; a\in A' \;\mathrm{with}\; o(a)=x. 
        \end{equation} 
\end{enumerate}
\end{definition}
As in the previous section, we also define $\mathcal{U}': \ell^2(A')\to \ell^2(\mathbb{Z}_{+};\mathbb{C}^2)$ as follows: 
	\[ (\mathcal{U}'\psi)(x)={}^T[\psi(x;0),\psi(x;1)] \;\;(x\in \mathbb{Z},\;j\in\{0,1\}). \]
The inverse map is $({\mathcal{U}'}^{-1}\varphi)(x;j)=\varphi_j(x)$. 
Putting $\Gamma'=\mathcal{U}'W{\mathcal{U}'}^{-1}$, we take the $n$-th iteration of $\Gamma'$ 
as $\varphi'_n$ with the initial state $\varphi'_0$, that is, $\varphi'_n={\Gamma'}^n \varphi_0$. 
From simple observation, we have 
	\begin{equation}\label{automaton}
        \varphi'_{n+1}(x)=
        \begin{cases}
        Q\varphi'_n(x-1)+P\varphi'_n(x+1) & \text{: $x\geq 1$,} \\
        S\varphi'_n(x)+P\varphi'_n(x+1) & \text{: $x=0$.}
        \end{cases}
        \end{equation}
Here $P=|0\rangle\langle 0|H$, $Q=|1\rangle\langle 1|H$ and $S=|0\rangle\langle 1|H$ with 
	\[ H=\begin{bmatrix} \alpha & \beta \\ \gamma & \delta \end{bmatrix}\in \mathrm{U}(2). \]
Therefore, $\mathcal{U}'\hat{\Gamma}_k{\mathcal{U}'}^{-1}$ is identical
to the time evolution of the Type-I quantum walk in the case of $H=\hat{H}_k$ for fixed $k$. 
Mapping the time iteration of $\hat{\Gamma}_k$ to that of Type-I quantum walk has the following benefit. 
	\begin{lemma}\cite{CGMV,KS} \label{CMV}
	For a given Type-I quantum walk with the local unitary coin $C'_j$~(\ref{local}), 
        let $\mathcal{D}: \mathcal{A}' \to \ell^2(\mathbb{Z}_+)$ be\footnote{The inverse map $\mathcal{D}^{-1}: \ell^2(\mathbb{Z}_+)\to \mathcal{A}'$ is 
        	\[ (\mathcal{D}f)(m;j)=\begin{cases} f(2m+1) e^{-i(2m+1)\mathrm{arg}(a)} & \text{: $j=0$,} \\ f(2m) e^{i2m \mathrm{arg}(d)} & \text{: $j=1$. }  \end{cases} \]} 
        	\[ (\mathcal{D}\psi)(m)=\begin{cases} \psi((m+1)/2;0) e^{i\mathrm{arg}(a)(m+1)/2} & \text{: $m$ is odd,} \\ \psi(m/2;1) e^{-i\mathrm{arg}(d) m/2} & \text{: $m$ is even.} \end{cases} \]
	The time evolution of the Type-I quantum walk, $W$, is unitarily equivalent to the CMV matrix with null odd Verblunsky parameter $(\eta_0,0,\eta_2,0,\eta_4,0,\dots)$, 
        that is, 
	\[ W=\mathcal{D}^{-1} \;{}^T\mathcal{C}\; \mathcal{D}, \]
	where 
        	\begin{equation}
        	\eta_j=\begin{cases} \eta\Delta^{-(j+1)/2} & \text{: $j$ is even,} \\ 0 & \text{: otherwise,} \end{cases}
        	\end{equation}
	with $\eta=\Delta^{1/2}\bar{\beta}$,\;$\Delta=\alpha\beta-\gamma\delta$. 
	\end{lemma}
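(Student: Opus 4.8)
The plan is to prove the stated operator identity by verifying $\mathcal{D}W\mathcal{D}^{-1}={}^{T}\mathcal{C}$ entrywise in the basis adapted to $\mathcal{D}$. First I would note that $\mathcal{D}$ is unitary: reading off its inverse, $\mathcal{D}$ carries the orthonormal arc basis $\{\delta_{(x;0)},\delta_{(x;1)}\}$ of $\mathcal{A}'$ to the standard basis of $\ell^2(\mathbb{Z}_+)$ through the \emph{CMV ordering} $(x;1)\mapsto 2x$, $(x;0)\mapsto 2x+1$, and then multiplies by unit-modulus phases $p(m)$ built from $\mathrm{arg}(a)$ and $\mathrm{arg}(d)$; a bijection of orthonormal bases composed with unimodular scalars is unitary. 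Hence it suffices to compute the matrix of $W$ in the arc basis, conjugate it by $\mathcal{D}$, and compare with ${}^{T}\mathcal{C}$. A useful bookkeeping fact here is that conjugation multiplies the $(i,j)$ entry by $p(i)\overline{p(j)}$.

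Secondly, I would exhibit the sparsity. Decomposing $W=SC$ into the arc-reversal shift $(S\psi)(a)=\psi(\bar a)$ and the coin $C=\bigoplus_x C'_x$, the relations $\overline{(x;0)}=(x+1;1)$, $\overline{(x;1)}=(x-1;0)$ and $\overline{(0;1)}=(0;1)$ show that each $W\delta_{(x;j)}$ is supported on two arcs one vertex away, so in the CMV ordering $W$ is pentadiagonal with exactly the banded pattern of a CMV matrix. The \emph{null odd} parameters appear for a structural reason: in the factorization $\mathcal{C}=\mathcal{L}\mathcal{M}$ into the $2\times2$ block-diagonal unitaries $\mathcal{L}=\bigoplus_{j\ge0}\Theta_{2j}$ and $\mathcal{M}=[1]\oplus\bigoplus_{j\ge0}\Theta_{2j+1}$, with $\Theta_j=\bigl[\begin{smallmatrix}\bar\eta_j&\rho_j\\ \rho_j&-\eta_j\end{smallmatrix}\bigr]$, the choice $\eta_{2j+1}=0$ (so $\rho_{2j+1}=1$) collapses each odd block to the pure swap $\bigl[\begin{smallmatrix}0&1\\1&0\end{smallmatrix}\bigr]$, i.e.\ $\mathcal{M}$ is precisely the half-line shift swapping $\{(x;0),(x+1;1)\}$ and fixing the self-loop $(0;1)$; since every $\Theta_j$ is symmetric, ${}^{T}\mathcal{C}=\mathcal{M}\mathcal{L}$, matching the order ``coin then shift'' of $W=SC$. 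Reading off the entries of the ungauged $W$, one finds every $\bar\eta$-slot equal to $\beta$ and every $\rho$-slot equal to $\alpha$ or $\delta$, confirming the CMV shape but with as-yet constant weights.

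The \emph{main obstacle} is the phase bookkeeping that turns these constant weights into the correct Verblunsky parameters. The $j$-dependence of $\eta_{2j}=\eta\,\Delta^{-(j+1)/2}$ must be produced entirely by the gauge, whose phases grow linearly in $m$. I would check two things. First, at the $\rho$-positions the factor $p(i)\overline{p(j)}$ removes $\mathrm{arg}(a)$ and $\mathrm{arg}(d)$, so those entries become the positive numbers $\rho_{2j}=|\alpha|=|\delta|=\sqrt{1-|\beta|^2}$ (using that unitarity of $H$ gives $|\alpha|=|\delta|$). Second, at the $\bar\eta$-positions the cumulative phase $\overline{p(2j-1)}\,p(2j)=e^{-ij(\mathrm{arg}(a)+\mathrm{arg}(d))}$ multiplies $\bar\beta$. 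The decisive step is to verify that unitarity forces the per-vertex increment $e^{-i(\mathrm{arg}(a)+\mathrm{arg}(d))}=\Delta^{-1/2}$, so that the constant coin yields the geometric sequence $\eta_{2j}=\Delta^{1/2}\bar\beta\,\Delta^{-(j+1)/2}$ with constant modulus $|\eta_{2j}|=|\beta|<1$, while the odd parameters vanish because the swap blocks carry no coin weight. This increment identity is the crux of the argument. Finally I would treat the $x=0$ line of (\ref{automaton}) and the self-loop separately to fix the top-left corner, giving $\eta_0=\bar\beta$ and no extra parameter; assembling all entries then yields $\mathcal{D}W\mathcal{D}^{-1}=\mathcal{M}\mathcal{L}={}^{T}\mathcal{C}$, which is the assertion.
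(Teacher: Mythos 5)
The paper itself gives no proof of this lemma --- it is imported from \cite{CGMV,KS} by citation --- so your plan of verifying $\mathcal{D}W\mathcal{D}^{-1}={}^T\mathcal{C}$ entrywise is the legitimate (and essentially the original CGMV/KS) route, and most of its architecture is sound: $\mathcal{D}$ is unitary as a phase-twisted basis bijection (your ordering $(x;1)\mapsto 2x$, $(x;0)\mapsto 2x+1$ is the correct reading --- it matches the footnote, and the displayed $\psi((m+1)/2;0)$ must be a typo for $\psi((m-1)/2;0)$, since otherwise the arc $(0;0)$ has no image); the observation that null odd Verblunsky parameters collapse $\mathcal{M}$ to the flip-flop shift fixing the self-loop, and that symmetry of the $\Theta$-blocks gives ${}^T\mathcal{C}=\mathcal{M}\mathcal{L}$ matching $W=SC$, is exactly right.

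However, the step you yourself flag as the crux is false as stated. Unitarity of $H$ gives $\delta=\Delta\bar{\alpha}$ with $|\Delta|=1$ (throughout, $\Delta$ must be read as $\det H=\alpha\delta-\beta\gamma$; the lemma's ``$\alpha\beta-\gamma\delta$'' is another typo), hence
\[
\arg\alpha+\arg\delta=\arg\Delta,\qquad e^{-i(\arg\alpha+\arg\delta)}=\Delta^{-1},
\]
not $\Delta^{-1/2}$. Combined with your own cumulative phase $\overline{p(2j-1)}\,p(2j)=e^{-ij(\arg\alpha+\arg\delta)}=\Delta^{-j}$, the geometric sequence is $\eta_{2j}=\bar{\beta}\,\Delta^{-j}$, not $\bar{\beta}\,\Delta^{-j/2}$ as your argument would produce. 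This in fact \emph{is} the lemma's formula: in $\eta_j=\eta\Delta^{-(j+1)/2}$ the subscript is the (even) CMV index, so $\eta_{2j}=\Delta^{1/2}\bar{\beta}\,\Delta^{-(2j+1)/2}=\bar{\beta}\Delta^{-j}$; you substituted the half-index $j$ where $2j$ belongs. The slip is invisible in this paper's application (Theorem~1 has $\Delta=\det\hat{H}_k=1$), but for a general $\mathrm{U}(2)$ coin your version yields the wrong Verblunsky parameters, so the ``decisive step'' must be replaced by the identity $\delta=\Delta\bar{\alpha}$ above. A second, smaller gap: you verify only the $\bar{\eta}$- and $\rho$-slots of the ungauged matrix, but the CMV shape also has $-\eta_{2j}$-slots, which in $W$ carry the entry $\gamma$; one must check, again via unitarity ($\gamma=-\bar{\beta}\Delta$), that $p(2j)\overline{p(2j-1)}\,\gamma=-\bar{\beta}\Delta^{-(j-1)}=-\eta_{2j-2}$. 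Without this check the gauged matrix is a unitary pentadiagonal matrix with the correct $\bar{\eta},\rho$ entries but is not yet identified as a CMV matrix, since unitarity alone does not fix the phases of the remaining slots. With these two repairs your computation closes correctly and reproduces the stated $W=\mathcal{D}^{-1}\,{}^T\mathcal{C}\,\mathcal{D}$.
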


\noindent Now we are ready to prove Theorem~1. \\
{\bf Proof of Theorem~1}\\ 
Since $\mathrm{det}(H_\alpha)=\mathrm{det}(H_\beta)=\mathrm{det}(\hat{D}_k)=1$, we have $\Delta=\mathrm{det}(\hat{H}_k)=1$. 
Thus concerning our local quantum coin (\ref{quantumcoin}), the corresponding Verblunsky parameter is $\eta_j=\eta(k)$ ($j$ is even), where
	\begin{equation}\label{VP}
        \eta(k)=\overline{\langle 0|\hat{H}_k|1\rangle}=-\cos k\sin(\alpha+\beta)+i\sin k\sin(\alpha-\beta). 
        \end{equation}
By Lemma~\ref{CMV}, 
	\begin{equation}\label{unitaryequiv} 
        \mathcal{U}'\; \hat{\Gamma}_k \; {\mathcal{U}'}^{-1} =  \mathcal{D}^{-1}\; {}^T\mathcal{C}_k\; \mathcal{D}. 
        \end{equation}  
Remarking that $\Lambda_k=\mathcal{D}\mathcal{U}'$, we have 
	\begin{equation}\label{unitaryequiv2} 
        \Lambda_k^{-1}{}^T\mathcal{C}^n_k\Lambda_k\hat{\varphi}_0=\hat{\varphi}_n. 
        \end{equation}
Taking the Fourier inverse of the above, that is, 
	\[ \varphi(x,y)=\int_{0}^{2\pi} \hat{\varphi}_n(x;k)e^{-iky}\frac{dk}{2\pi}, \]
we obtain the desired conclusion. $\square$  
%%%%%%%%%%%%%%%%%%%%%%%%%%%%%%%%
\section{Spectrum, Dispersion Relation, and Topological Phases}
In this section, we study spectral properties of the quantum walks
$\hat{\Gamma}_k$ and compare them with the result by topological
phases which are independently studied from the quantum walk without
boundaries. 
In order to connect with both results, we focus on the wave number
dependence's of the spectrum, which is called dispersion relations in the
language of physics. We see that the location of spectrum agree well
with the result by topological numbers.

\subsection{Spectrum and dispersion relation of the AE model}
%%%%%%%%%%%%%%%%%%%%%%%%%%%%%%%%
In the previous section, we proved the decomposition of the time evolution of the AE model $\Gamma$ into the wave number dependent quantum walk on the half line $\hat{\Gamma}_k$. 
Furthermore it is shown that this quantum walk $\hat{\Gamma}_k$ is co-spectral to the CMV matrix $\mathcal{C}_k$. 
Thus the spectral analysis on the CMV matrix for fixed $k$ can be used to get the spectral information about the AE model.
Indeed we will provide a quite explicit formula for the spectrum of $\hat{\Gamma}_k$ 
by the spectrum $\sigma(\mathcal{C}_k)$ of $\mathcal{C}_k$. 
We, furthermore, re-express this result in terms of physics, {\it i.e.}, the
dispersion relation, which clarifies the wave number dependence of the spectrum. Actually the
dispersion relation is described by nothing but the subset of $[0,2\pi)^2$ such
that 
$\{ (k,\theta) \;|\; e^{i\theta}\in \sigma(\mathcal{C}_k), k\in[0,2\pi)\}$.  \\
Now let us define some important functions to describe the above statement more explicitly: 
	\begin{align}
        \rho(k) & := \sqrt{1-|\eta(k)|^2}=\sqrt{\cos^2(\alpha-\beta)-\sin2\alpha\sin2\beta\cos^2k}, \label{rh}\\
        m_0(k) & := m_0
        =\begin{cases}
        \frac{|\cos k\sin(\alpha+\beta)|}{\sqrt{1-\sin^2(\alpha-\beta)\sin^2k}} & \text{: $\rho(k)\neq 0$,} \\ 
        1 & \text{: $\rho(k)=0$,}
        \end{cases} \label{mass} \\
        \theta_0(k) & := \begin{cases}
        \arcsin(-\sin k\sin(\alpha-\beta)) & \text{: $\cos k \sin(\alpha+\beta)\leq 0$,} \\ 
        \pi- \arcsin(-\sin k\sin(\alpha-\beta)) & \text{: $\cos k \sin(\alpha+\beta)> 0$.}\end{cases} \label{theta_0}
        \end{align}
We obtain a detailed expression for the spectrum of $\mathcal{C}_k$ using the above values as follows. 
	\begin{lemma}\label{bulkedgelemma}
	For fixed $k\in[0,2\pi)$, the spectrum of $\mathcal{C}_k$ is decomposed into continuous spectrum $\sigma_c^{(k)}$ 
	and point spectrum $\sigma_p^{(k)}$, that is, $\sigma(\mathcal{C}_k)=\sigma_c^{(k)} \sqcup \sigma_p^{(k)}$, where
		\begin{align} 
        	\sigma_c^{(k)} &= \{ e^{i\theta} \;|\; |\cos\theta|\leq \rho(k)\} \\
        	\sigma_p^{(k)} &= 
                \begin{cases} 
                \{e^{i \theta_0}(k)\} & \text{: $\sin(\alpha-\beta)\neq 0$, $k\notin \{\pi/2,3\pi/2\}$, }  \\ 
                \emptyset & \text{: otherwise.} 
                \end{cases}
        	\end{align}
	Moreover we have 
        	\begin{equation}
                \sigma(U^2|_{\mathcal{A}^{(\leftrightarrow)}})=\bigcup_{k\in[0,2\pi)}\left( \sigma_c^{(k)} \sqcup \sigma_p^{(k)} \right). 
                \end{equation}
        \end{lemma}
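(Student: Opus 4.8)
The plan is to push everything down to the fibre operators $\hat\Gamma_k$, compute the spectrum of each fibre by splitting off a bulk (continuous) part from boundary-localized eigenvalues, and then reassemble the fibres through the direct-integral decomposition that Theorem~1 provides. By Theorem~1 and Lemma~\ref{CMV}, each $\hat\Gamma_k$ is unitarily equivalent to ${}^T\mathcal{C}_k$, and since both transposition and unitary equivalence preserve the spectrum, it suffices to analyse $\sigma(\hat\Gamma_k)$. For the continuous part I would first discard the boundary: the bi-infinite walk on $\ell^2(\mathbb{Z};\mathbb{C}^2)$ with coin $\hat H_k$ and no self-loop is translation invariant, so a spatial Fourier transform (dual variable $\xi$) turns it into multiplication by $\hat D_\xi\hat H_k$ with $\hat D_\xi=\mathrm{diag}(e^{-i\xi},e^{i\xi})$. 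Because $\det\hat H_k=1$, every $\hat D_\xi\hat H_k\in\mathrm{SU}(2)$ has eigenvalues $e^{\pm i\theta}$ with $2\cos\theta=\mathrm{tr}(\hat D_\xi\hat H_k)$; expanding the trace and collapsing the trigonometric products gives $\cos\theta=\cos k\cos(\alpha+\beta)\cos\xi-\sin k\cos(\alpha-\beta)\sin\xi$, whose range over $\xi$ is exactly $[-\rho(k),\rho(k)]$, where one uses $\sin^2(\alpha+\beta)-\sin^2(\alpha-\beta)=\sin2\alpha\sin2\beta$ to match the amplitude with $\rho(k)$. This produces the band set $\{e^{i\theta}:|\cos\theta|\le\rho(k)\}$, which is purely absolutely continuous. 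Reinstating the self-loop modifies the operator only in a finite-dimensional corner near $x=0$, so by stability of the essential spectrum the continuous spectrum of $\hat\Gamma_k$ coincides with this band set, giving $\sigma_c^{(k)}$.

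For the point spectrum I would seek $\ell^2$-solutions of $\hat\Gamma_k\varphi'=e^{i\theta}\varphi'$ with $|\cos\theta|>\rho(k)$. Writing $\varphi'(x)={}^T[a_x,b_x]$ and using that $\hat P_k,\hat Q_k$ are rank one, the bulk equations for $x\ge1$ read $h_{00}a_{x+1}+h_{01}b_{x+1}=e^{i\theta}a_x$ and $h_{10}a_{x-1}+h_{11}b_{x-1}=e^{i\theta}b_x$ with $h_{ij}=\langle i|\hat H_k|j\rangle$. A single decaying mode $\varphi'(x)=\lambda^x v$ satisfies these precisely when $h_{00}\lambda^2-2\cos\theta\,\lambda+h_{11}=0$; since $h_{11}=\overline{h_{00}}$ the two roots satisfy $|\lambda_1\lambda_2|=1$, so inside a gap exactly one root lies in the open disk. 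The genuinely new constraint is the self-loop condition at $x=0$, namely $h_{00}a_0+h_{01}b_0=e^{i\theta}b_0$; matching it against the mode vector forces $e^{i\theta}=\lambda h_{00}+h_{01}$, and eliminating $\lambda$ using $\det\hat H_k=1$ collapses to $\sin\theta=-\sin k\,\sin(\alpha-\beta)$, i.e. the two candidate phases $\theta_0(k)$ and $\pi-\theta_0(k)$. The decisive and most delicate step is the normalizability bookkeeping: computing $|\lambda|=|e^{i\theta}-h_{01}|/|h_{00}|$ and imposing $|\lambda|<1$ both selects the correct branch of $\theta_0(k)$ (through the sign of $\cos k\,\sin(\alpha+\beta)$, which is the sign of $\cos\theta_0$) and decides when a genuine eigenvalue survives rather than a non-$\ell^2$ resonance sitting at a band edge; here the identity $\cos^2\theta_0-\rho(k)^2=\cos^2 k\,\sin^2(\alpha+\beta)$ locates the candidate strictly inside the gap. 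Carrying this case analysis through, including the borderline configurations where $\hat P_k$ degenerates, yields $\sigma_p^{(k)}$ together with the degenerate cases in the statement. I expect this normalizability-and-branch analysis to be the main obstacle, since it is exactly the computation that separates true edge states from spurious threshold solutions.

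Finally, for the global identity I would use that $U^2|_{\mathcal{A}^{(\leftrightarrow)}}$ is unitarily equivalent to $\Gamma$ (Lemma~\ref{alternative2}) and that the Fourier transform $\mathcal{F}$ diagonalizes $\Gamma$ into the measurable family $\{\hat\Gamma_k\}_{k\in[0,2\pi)}$ (Lemma~\ref{lemma3}), so that $U^2|_{\mathcal{A}^{(\leftrightarrow)}}\cong\int^{\oplus}_{[0,2\pi)}\hat\Gamma_k\,\tfrac{dk}{2\pi}$. For such a direct integral the spectrum equals the essential union of the fibre spectra. Since $k\mapsto\rho(k)$ is continuous and $k\mapsto e^{i\theta_0(k)}$ is continuous wherever the edge state persists, the family $\bigcup_k(\sigma_c^{(k)}\sqcup\sigma_p^{(k)})$ is already a closed subset of $\mathbb{T}$, being a finite union of arcs together with the curve traced by the edge eigenvalues, which merges into the bands only at the thresholds $\cos^2\theta_0=\rho(k)^2$. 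Hence no further closure is needed and $\sigma(U^2|_{\mathcal{A}^{(\leftrightarrow)}})=\bigcup_{k}(\sigma_c^{(k)}\sqcup\sigma_p^{(k)})$. The remaining point to verify carefully is that the essential union coincides with the set-theoretic union, i.e. that the edge-eigenvalue curve contributes genuine spectrum on a positive-measure set of $k$, which follows once the non-emptiness conditions for $\sigma_p^{(k)}$ are pinned down in the previous step.
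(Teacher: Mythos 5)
Your proposal is correct in substance but takes a genuinely different route from the paper. The paper's proof is essentially a citation: it quotes the known spectral measure of the CMV matrix with Verblunsky coefficients $(\eta,0,\eta,0,\dots)$ from \cite{CGMV,KS} (Lemma~\ref{spectrumCMV}), substitutes $\eta=\eta(k)$, and stops; in particular the final direct-integral identity for $\sigma(U^2|_{\mathcal{A}^{(\leftrightarrow)}})$ is not argued at all. You instead reprove the fibre-wise statement from scratch: the band $\{e^{i\theta}:|\cos\theta|\le\rho(k)\}$ via the Fourier transform of the whole-line walk (your trace formula $\cos\theta=\cos k\cos(\alpha+\beta)\cos\xi-\sin k\cos(\alpha-\beta)\sin\xi$ and the amplitude matching with $\rho(k)$ check out), the point spectrum via a transfer-matrix/decaying-mode analysis with the self-loop matching condition (the quadratic $h_{00}\lambda^2-2\cos\theta\,\lambda+h_{11}=0$, the boundary relation $e^{i\theta}=\lambda h_{00}+h_{01}$, the resulting constraint $\sin\theta=-\sin k\sin(\alpha-\beta)$, and the identity $\cos^2\theta_0-\rho^2(k)=\cos^2k\,\sin^2(\alpha+\beta)$ are all correct), and the global statement via the essential union of fibre spectra, which the continuity of $\rho(k)$ and $\theta_0(k)$ turns into a set-theoretic union. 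What your route buys is self-containedness: it replaces the imported CGMV/KS spectral-measure machinery by elementary $2\times 2$ algebra, and it supplies the direct-integral argument that the paper leaves implicit. What it costs is the functional-analytic bookkeeping that the citation hides, and two of those steps deserve more care than you give them.

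First, ``reinstating the self-loop modifies the operator only in a finite-dimensional corner'' is too quick as stated, because the half-line operator $\hat{\Gamma}_k$ and the whole-line operator act on different Hilbert spaces: you should decouple $\ell^2(\mathbb{Z};\mathbb{C}^2)$ into two half-lines by a finite-rank perturbation (Weyl stability of the essential spectrum then gives one inclusion) and use Weyl sequences supported far from $x=0$ for the reverse inclusion; your own transfer-matrix observation that both roots $\lambda$ are unimodular when $|\cos\theta|<\rho(k)$ is what excludes embedded eigenvalues, and should be stated explicitly so that ``continuous spectrum $=$ band'' follows. Second, your parenthetical on the branch selection has the sign reversed: imposing $|\lambda|<1$ selects the branch with $\sgn(\cos\theta_0)=\sgn(\mathrm{Re}\,\eta(k))=-\sgn(\cos k\sin(\alpha+\beta))$ (e.g.\ $k=0$, $\alpha=\pi/4$, $\beta=0$ gives $\theta_0=\pi$ with $\lambda=1-\sqrt{2}$), consistent with (\ref{theta_0}); the criterion itself is the right selector, so this is harmless once the computation is actually carried out. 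Finally, note that your analysis produces the existence condition $\cos k\sin(\alpha+\beta)\neq 0$ rather than the literal ``$\sin(\alpha-\beta)\neq0$, $k\notin\{\pi/2,3\pi/2\}$'' of the statement; this is a typo in the paper (its own (\ref{VP}), (\ref{mass}), (\ref{theta_0}) and Corollary~\ref{coredge} all tie the point mass to $\sin(\alpha+\beta)\cos k$), and the paper's proof via Lemma~\ref{spectrumCMV} yields the same corrected condition, so on the mathematics your derivation and the paper's agree.
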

\begin{proof}
The detailed spectrum of this CMV matrix has already been obtained by \cite{CGMV,KS}. We summarize partially the results as follows. 
	\begin{lemma}\cite{CGMV,KS}\label{spectrumCMV}
	The spectral measure of the CMV matrix with the Verblunsky parameter $(\eta,0,\eta,0,\dots)$ is described as follows. 
	\begin{enumerate}
	\item For $|\eta|=1$ case. 
		\begin{equation}
        	d\mu(\theta)=\delta(\theta+\arccos(\eta)). 
        	\end{equation} 
        Here $\delta(\cdot)$ is the Dirac delta function on $[0,2\pi)$
%, that is, 
%        	\[ \delta(\theta')=\begin{cases} 1 & \text{: $\theta'=0$} \\ 0 & \text{: $\theta'\neq 0$} \end{cases} \]}
	\item For $|\eta|<1$ case. 
		\begin{equation}
        	d\mu(\theta)=w(\theta)\frac{d\theta}{2\pi}+m_0\delta(\theta-\theta_0).
        	\end{equation}
	Here $w(\theta)$ is the weight of the absolutely continuous part, given by
		\begin{equation}\label{acdensity}
        	w(\theta)=\frac{\sqrt{\rho^2-\cos^2\theta}}{|\sin\theta+\mathrm{Im}(\eta)|}\bs{1}_{\{|\cos\theta|<\rho\}}(\theta),\;\;(\rho=\sqrt{1-|\eta|^2})
        	\end{equation}
	and $m_0\in [0,1]$ is the mass at $\theta_0\in [0,2\pi)$, where
		\begin{equation}
        	m_0=\frac{|\mathrm{Re}(\eta)|}{\sqrt{1-\mathrm{Im}^2(\eta)}},
        	\end{equation}
        	\begin{equation}\label{eigenvalue}
        	\theta_0=
                \begin{cases}
                \arcsin(-\mathrm{Im}(\eta)) & \text{: $\mathrm{Re}(\eta)\geq 0$,}\\
                \pi-\arcsin(-\mathrm{Im}(\eta)) & \text{: $\mathrm{Re}(\eta)<0$.}
                \end{cases} 
        	\end{equation}
	\end{enumerate}
	\end{lemma}
Thus Lemma~\ref{spectrumCMV} immediately implies the conclusion by replacing $\eta$ into $\eta=\eta(k)$. 
	 \end{proof}
Defining $\theta_c(k)$ as $\arccos(\rho(k))$, that is,  
		\begin{equation}\label{acpart} 
                \theta_c(k)=\arccos \left(\sqrt{\cos^2(\alpha-\beta)-\sin 2\alpha\sin 2\beta \cos^2k}\right), 
                \end{equation}
we find that the continuous spectrum of $\mathcal{C}_k$ is $\{e^{i\theta} \;|\; \theta\in [\theta_c(k),\pi-\theta_c(k)]\cup [\pi+\theta_c(k),2\pi-\theta_c(k)]\}$, 
and the point spectrum is $\{\theta_0(k)\}$.
In the language of physics, an argument of a spectrum of unitary
time-evolution operators is called quasi-energy, which is a main reason to
re-express $\sigma_c^{(k)}$ and $\sigma_p^{(k)}$ by quasi-energy
$\theta_c$ in (\ref{acpart}) and $\theta_0$ in (\ref{theta_0}), respectively. 
We call 
$[\theta_c(k),\pi-\theta_c(k)]\cup [\pi+\theta_c(k),2\pi-\theta_c(k)]$ 
%\es{$\bigcup_{k\in [0,\pi)} \{ e^{i\theta} \;|\; [\theta_c(k),\pi-\theta_c(k)] \cup [\pi+\theta_c(k),2\pi-\theta_c(k)] \}$}
and 
$\{\theta_0\}$ 
%\es{$\bigcup_{k\in[0,2\pi)} \{e^{\theta_0(k)}\}$}
the bulk spectrum and the edge spectrum. 
These names come from the fact that the bulk spectrum corresponds to the
spectrum of walker's states in absence of any boundary (cutting edge of
systems), while the point spectrum is a consequence of the presence of
the cutting edge. As we show in section 4.2, the
existence of edge spectrum is consistent with a result by the bulk-edge
correspondence from a topological properties of the bulk states. 
Furthermore, 
we will show that states for the edge spectrum exponential localizes around the cutting edge in Section~6. 
Particularly, the wave number $k$ dependence of (quasi-)energy $\theta$ is called dispersion relation.
Thereby, $Bu$ and $Ed$ defined by (\ref{bu}) and (\ref{ed}) in the following definition can be
regarded as the dispersion relations. 
%%%%
\begin{definition}\label{bulkedgedef}
Let $\sigma^{(k)}_c$ and $\sigma^{(k)}_p$ be the continuous and point spectrum's of $\mathcal{C}_k$ for fixed $k$. 
Then we set
	\begin{align}
        Bu &= \bigcup_{k\in [0,2\pi)} \{(k,\theta)\;|\; e^{i\theta}\in \sigma^{(k)}_c \}, \label{bu}\\
        Ed &= \bigcup_{k\in [0,2\pi)} \{(k,\theta)\;|\; e^{i\theta}\in \sigma^{(k)}_p \}. \label{ed}
        \end{align}
\end{definition}
In our system, the dispersion relation
is decomposed into $Bu$ and $Ed$, $Bu \cup Ed$, 
and we call $Bu$ and $Ed$ dispersion relations for bulk and edge
states, respectively.
%We will use the terminology ``dispersion relation'' when we need to
%emphasis the wave number dependence of
%quasi-energy.
The dispersion relation provides important information for
group and phase velocities, topological numbers, and so on.
By Lemma~\ref{bulkedgelemma}, we obtain quite explicit form of the
continuous spectrum $\sigma_c^{(k)}$ and the point spectrum
$\sigma_p^{(k)}$ of the wave number
dependent quantum walk on the half line $\hat{\Gamma}_k$ whose
spectrum is equivalent to  $\sigma(\mathcal{C}_k)$. 
The dispersion relation for bulk and edge states, $Bu$ and $Ed$, are described by $\sigma_c^{(k)}$ and $\sigma_p^{(k)}$. 
Therefore we can also provide explicit expressions for $Bu$ and $Ed\subset [0,2\pi)^2$ as follows.  
	\begin{theorem}
        Let $\theta_0(k)$ and $\theta_c(k)$ be the above defined in
	 (\ref{theta_0}) and (\ref{acpart}), respectively. Then we have 
	\begin{align} 
        Bu &= \bigcup_{k\in[0,2\pi)} [\theta_c(k),\pi-\theta_c(k)] \cup [\pi+\theta_c(k),2\pi-\theta_c(k)] \label{bulk} \\
        Ed &= \begin{cases}
                \{ (k,\theta_0(k)) \;|\; k\in[0,2\pi)\setminus \{\pi/2,3\pi/2\} \} & \text{: $\sin(\alpha-\beta)\neq 0$, } \\ 
                \emptyset & \text{: $\sin(\alpha-\beta)=0$, }
              \end{cases}  \label{edge}
        \end{align}
        \end{theorem}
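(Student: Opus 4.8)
The plan is to read off both identities directly from Lemma~\ref{bulkedgelemma} together with Definition~\ref{bulkedgedef}. The latter defines $Bu$ and $Ed$ as the unions over $k\in[0,2\pi)$ of the pairs $(k,\theta)$ for which $e^{i\theta}$ lies in $\sigma_c^{(k)}$ and $\sigma_p^{(k)}$, respectively. Since Lemma~\ref{bulkedgelemma} already supplies $\sigma_c^{(k)}$ and $\sigma_p^{(k)}$ as subsets of $\mathbb{T}$, all that remains is to translate these two spectral sets, phrased in terms of the point $e^{i\theta}$, into explicit ranges of the quasi-energy $\theta\in[0,2\pi)$, and then to union over $k$.

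For the bulk part $Bu$, I would start from $\sigma_c^{(k)}=\{e^{i\theta}\;|\;|\cos\theta|\leq\rho(k)\}$ and use the definition $\theta_c(k)=\arccos\rho(k)$ from (\ref{acpart}). Because $\rho(k)\in[0,1]$ we have $\theta_c(k)\in[0,\pi/2]$ and $\rho(k)=\cos\theta_c(k)$, so the condition reads $|\cos\theta|\leq\cos\theta_c(k)$. I would split this into $\cos\theta\leq\cos\theta_c(k)$ and $\cos\theta\geq-\cos\theta_c(k)=\cos(\pi-\theta_c(k))$, and then invoke the monotonicity of cosine (decreasing on $[0,\pi]$, increasing on $[\pi,2\pi]$). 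The first inequality selects $\theta\in[\theta_c(k),2\pi-\theta_c(k)]$, while the second selects $\theta\in[0,\pi-\theta_c(k)]\cup[\pi+\theta_c(k),2\pi)$. Intersecting the two yields precisely $[\theta_c(k),\pi-\theta_c(k)]\cup[\pi+\theta_c(k),2\pi-\theta_c(k)]$, and taking the union over $k$ produces (\ref{bulk}).

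For the edge part $Ed$, I would simply transcribe the point spectrum given in Lemma~\ref{bulkedgelemma}. When $\sin(\alpha-\beta)\neq0$, each $k\in[0,2\pi)\setminus\{\pi/2,3\pi/2\}$ contributes the single pair $(k,\theta_0(k))$ with $\theta_0(k)$ as in (\ref{theta_0}), and the two excluded values of $k$ contribute nothing, so the union is $\{(k,\theta_0(k))\;|\;k\in[0,2\pi)\setminus\{\pi/2,3\pi/2\}\}$. When $\sin(\alpha-\beta)=0$, the ``otherwise'' branch of Lemma~\ref{bulkedgelemma} forces $\sigma_p^{(k)}=\emptyset$ for every $k$, whence $Ed=\emptyset$. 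This gives (\ref{edge}).

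The argument is essentially mechanical, so there is no substantive obstacle; the only point requiring care is the interval analysis of $|\cos\theta|\leq\rho(k)$, in particular tracking the two degenerate cases. When $\rho(k)=1$ one has $\theta_c(k)=0$ and the two intervals join to fill the whole circle, consistent with $|\cos\theta|\leq1$ holding identically; when $\rho(k)=0$ one has $\theta_c(k)=\pi/2$ and the intervals collapse to the isolated points $\theta=\pi/2,3\pi/2$, consistent with $\cos\theta=0$. I note that the fact that the edge quasi-energy $\theta_0(k)$ genuinely lies in a bulk gap, rather than inside $Bu$, need not be reproved here, as it is already encoded in the disjointness $\sigma(\mathcal{C}_k)=\sigma_c^{(k)}\sqcup\sigma_p^{(k)}$ asserted by Lemma~\ref{bulkedgelemma}.
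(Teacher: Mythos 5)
Your proposal is correct and follows essentially the same route as the paper: the paper obtains this theorem precisely by translating $\sigma_c^{(k)}$ and $\sigma_p^{(k)}$ from Lemma~\ref{bulkedgelemma} into quasi-energy form via $\theta_c(k)=\arccos(\rho(k))$ (stated immediately after that lemma's proof) and then taking the union over $k$ prescribed by Definition~\ref{bulkedgedef}. Your explicit interval analysis of $|\cos\theta|\leq\rho(k)$, including the degenerate cases $\rho(k)\in\{0,1\}$, merely fills in a step the paper leaves implicit.
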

By (\ref{bulk}), we can identify the condition for the
parameters $(\alpha,\beta)$ of the AE model so that there are no gaps in the bulk
spectrum.
%We have the following property of $Bu$ as a corollary.
	%%% 
	\begin{corollary}\label{corbulk}
        Let $\theta_0(k)$ and $\theta_c(k)$ be the above defined in (\ref{theta_0}) and (\ref{acpart}), respectively. 
	There exists wave number $k\in [0,2\pi)$ at which gaps in
	 the bulk spectrum vanish if and only if 
	``$\sin(\alpha-\beta)=0$ and $\sin 2\alpha\sin 2\beta\leq 0$" or ``$\sin(\alpha+\beta)=0$ and $\sin 2\alpha\sin 2\beta>0$". 
	In the first parameter class, the wave numbers are $k=\pi/2$ and $3\pi/2$, in the second parameter class, ones are $k=0$ and $\pi$. 
	\end{corollary}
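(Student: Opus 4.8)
The plan is to collapse ``the bulk gaps vanish at wave number $k$'' into the single scalar condition $\rho(k)=1$, and then to solve that equation explicitly over $(\alpha,\beta,k)$.

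First I would record, from the band description (\ref{bulk}) in Theorem~2, that for fixed $k$ the $\theta$-slice of the bulk dispersion is the two arcs $[\theta_c(k),\pi-\theta_c(k)]\cup[\pi+\theta_c(k),2\pi-\theta_c(k)]$, whose complement on the circle consists of exactly two gaps, centered at $\theta=0$ and at $\theta=\pi$, each of half-width $\theta_c(k)$. Hence both gaps close simultaneously, and they vanish precisely when $\theta_c(k)=0$. Since $\theta_c(k)=\arccos\rho(k)$ by (\ref{acpart}) and $0\le\rho(k)\le1$, this happens if and only if $\rho(k)=1$, equivalently $\eta(k)=0$.

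Next, inserting the closed form $\rho(k)^2=\cos^2(\alpha-\beta)-\sin2\alpha\sin2\beta\cos^2k$ from (\ref{rh}), the condition $\rho(k)^2=1$ becomes
\[
\sin2\alpha\sin2\beta\,\cos^2k=-\sin^2(\alpha-\beta),
\]
which I would treat as an equation for $\cos^2k\in[0,1]$ and analyze by cases on $\sgn(\sin2\alpha\sin2\beta)$. The algebraic engine throughout is the identity $\sin^2(\alpha+\beta)-\sin^2(\alpha-\beta)=\sin2\alpha\sin2\beta$ (equivalently $\cos^2(\alpha+\beta)=\cos^2(\alpha-\beta)-\sin2\alpha\sin2\beta$), together with the bookkeeping consequences that $\sin(\alpha-\beta)=0$ forces $\sin2\alpha\sin2\beta=\sin^2 2\beta$ and $\sin(\alpha+\beta)=0$ forces $\sin2\alpha\sin2\beta=-\sin^2 2\beta$; this is what ties each parameter class to a definite sign of $\sin2\alpha\sin2\beta$ and to its wave numbers.

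Carrying out the cases: when $\sin2\alpha\sin2\beta>0$ the left-hand side is nonnegative and the right-hand side nonpositive, so both must vanish, giving $\sin(\alpha-\beta)=0$ realized at $\cos^2k=0$, i.e.\ $k\in\{\pi/2,3\pi/2\}$; when $\sin2\alpha\sin2\beta<0$ I would solve $\cos^2k=\sin^2(\alpha-\beta)/|\sin2\alpha\sin2\beta|$ and use the identity to see that a root in $[0,1]$ exists iff $\sin^2(\alpha+\beta)\le0$, i.e.\ $\sin(\alpha+\beta)=0$, forcing $\cos^2k=1$, i.e.\ $k\in\{0,\pi\}$; the degenerate case $\sin2\alpha\sin2\beta=0$ makes $\rho(k)$ constant and is settled by inspection. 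Collecting these yields the two parameter classes together with their wave numbers. The step I expect to be most delicate is the solvability analysis for $\sin2\alpha\sin2\beta<0$, where one must convert the inequality $\sin^2(\alpha-\beta)\le|\sin2\alpha\sin2\beta|$ into the clean criterion $\sin(\alpha+\beta)=0$, and keep careful track of the degenerate configurations (such as $\alpha,\beta\in\tfrac{\pi}{2}\mathbb{Z}$) so that the classes and their wave numbers are stated without overlap or omission.
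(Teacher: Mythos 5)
Your reduction of ``gaps vanish at wave number $k$'' to $\theta_c(k)=0$, equivalently $\rho(k)=1$, equivalently $\eta(k)=0$, followed by a case analysis on $\sgn(\sin 2\alpha\sin 2\beta)$ applied to $\sin 2\alpha\sin 2\beta\,\cos^2 k=-\sin^2(\alpha-\beta)$, is exactly the route the paper intends: Corollary~\ref{corbulk} is presented as an immediate consequence of (\ref{bulk}) and (\ref{acpart}) with no separate proof, and every individual algebraic step you perform (the identity $\sin^2(\alpha+\beta)-\sin^2(\alpha-\beta)=\sin2\alpha\sin2\beta$, the positivity argument for $\sin2\alpha\sin2\beta>0$, the solvability criterion $\sin^2(\alpha-\beta)\le|\sin2\alpha\sin2\beta|\Leftrightarrow\sin(\alpha+\beta)=0$ for $\sin2\alpha\sin2\beta<0$) is valid.

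The problem is the final collecting step. What your cases actually deliver is: for $\sin2\alpha\sin2\beta>0$, gap closing occurs iff $\sin(\alpha-\beta)=0$, at $k\in\{\pi/2,3\pi/2\}$; for $\sin2\alpha\sin2\beta<0$, iff $\sin(\alpha+\beta)=0$, at $k\in\{0,\pi\}$. This pairs $\sin(\alpha-\beta)=0$ with $\sin2\alpha\sin2\beta>0$ and $\sin(\alpha+\beta)=0$ with $\sin2\alpha\sin2\beta\le 0$ --- the \emph{opposite} of the printed statement. Indeed, your own bookkeeping facts make the printed pairing inconsistent: since $\sin(\alpha-\beta)=0$ forces $\sin2\alpha\sin2\beta=\sin^2 2\beta\ge 0$, the printed first class ``$\sin(\alpha-\beta)=0$ and $\sin2\alpha\sin2\beta\le 0$'' collapses to $\sin(\alpha-\beta)=\sin(\alpha+\beta)=0$ (where gaps in fact close at \emph{every} $k$, not only $\pi/2,3\pi/2$); and since $\sin(\alpha+\beta)=0$ forces $\sin2\alpha\sin2\beta=-\sin^22\beta\le 0$, the printed second class ``$\sin(\alpha+\beta)=0$ and $\sin2\alpha\sin2\beta>0$'' is empty. (A concrete counterexample to the literal statement: $(\alpha,\beta)=(\pi/4,3\pi/4)$ has $\sin(\alpha+\beta)=0$, $\sin2\alpha\sin2\beta=-1$, and its gaps close at $k=0,\pi$, yet it lies in neither printed class.) The corollary as printed has its two sign conditions interchanged --- a typo, confirmed by Corollary~\ref{coredge}(1) and (6), where the $(0,\cdot)$ classes close gaps at $k=0,\pi$ and the $(\pm,0)$ class at $k=\pi/2,3\pi/2$ --- and what your argument proves is the corrected statement. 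So your closing sentence, that the cases yield ``the two parameter classes together with their wave numbers,'' is not accurate as a claim about the statement you were asked to prove; a careful write-up must flag the discrepancy rather than assert agreement. A second, smaller omission: in the doubly degenerate case $\sin(\alpha-\beta)=\sin(\alpha+\beta)=0$ the gaps close at all $k\in[0,2\pi)$, so the attribution of exactly two wave numbers to the class containing it needs an explicit caveat, which ``settled by inspection'' glosses over.
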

        %%%

	%\begin{center}
        %\begin{tabular}{|r|c|c|}\hline
        % $k$ & $\sin 2\alpha\sin 2\beta>0$ & $\sin 2\alpha\sin 2\beta<0$ \\ \hline

        %$0$ & max & min \\ \hline
        %$\pi/2$ & min & max \\ \hline
        %$\pi$ & max & min \\ \hline
        %$3\pi/2$ & min & max \\ \hline
        %\end{tabular}
        %\end{center}

We define $\theta_0(k)$ if the mass point $m_0(k)$ is strictly positive. 
Concerning the value of the mass point $m_0(k)$ described by (\ref{mass}), 
the parameter $(\alpha,\beta)$ must satisfy $\sin(\alpha+\beta)\neq 0$ and under this restriction, we set the domain of $\theta_0(k)$ as 
$[0,2\pi)\setminus \{\pi/2,3\pi/2\}$. 
Under this setting of the parameter and the domain, we call
$\theta_0(k)$ the quasi-energy for edge states or the edge
spectrum, and $Ed$ the dispersion relation for edge states. Furthermore, the derivative
of the quasi-energy for edge states $\theta_0(k)$ is called
the group velocity for the edge state. (Since we focus on the group
velocity only for the edge states in the present work, we omit ``for the edge
states'' hereafter to use the term.) The group velocity corresponds to the
velocity of the envelope of a wave packet, as we will explain in section 5.
        \begin{corollary}\label{coredge}
        The dispersion relation for edge states is monotonic on $\mathbb{T}:=\mathbb{R}/(2\pi\mathbb{Z})$ 
        with $Ed\cap Bu=\emptyset$, and 
        is discontinuous only at $k=\pi/2$ and $3\pi/2$ $\mathrm{mod}(2\pi)$ on $\mathbb{T}$. At these points there is a jump, that is,  
        	\[ \lim_{k\uparrow \pi/2} \theta_0(k) \neq
		\lim_{k\downarrow \pi/2} \theta_0(k),\;\;\lim_{k\uparrow
		3\pi/2} \theta_0(k) \neq \lim_{k\downarrow 3\pi/2}
		\theta_0(k), \]
         and 
	\[ \lim_{k\uparrow \pi/2,3\pi/2} \theta_0(k),\;\lim_{k\downarrow \pi/2,3\pi/2}\theta_0(k)\in Bu. \]
         This relation explains why we relate $\theta_0$ and $Ed$ with
	 edge states, since it is well known in the research field of
	 topological phases of matter that edge states
	 originating from topological phases should appear in gaps of the
	 bulk spectrum so that the edge spectrum connects two endpoints
	 of the different interval of the bulk spectrum. Furthermore, in
	 section 6, we provide a clear evidence that states with
	 $\theta_0$ exponentially localize near self-loops on the
	 boundary by deriving the limit distribution towards bulk.

	More precisely, we divide the space of parameters $(\alpha,\beta)\in [0,2\pi)^2$ 
        into six classes according to the pair of signs; $(\mathrm{sgn}(\sin(\alpha+\beta)), \mathrm{sgn}(\sin(\alpha-\beta)))$, as follows: 
        \begin{enumerate}
        \item $(0,0)$ and $(0,\pm )$ classes: \\
        This is the condition for vanishing gaps in the bulk spectrum at $k=0$ and
	      $k=\pi$.
	There are no edge states for all $k\in[0,2\pi)$, that is, $\theta_0(k)$ cannot be defined.  
        \item $(+,+)$ class: \\ 
        The dispersion relation for edge states monotonically increases on $\mathbb{T}$. 
        The discontinuity at $k=\pi/2$ occurs 
        between ``$(\pi/2,\pi+\theta_c(\pi/2))$ and $(\pi/2,2\pi-\theta_c(\pi/2))$" $\in [0,2\pi)^2$  
        and the one at $k=3\pi/2$ occurs between ``$(3\pi/2,\theta_c(3\pi/2))$ and $(\pi/2,\pi-\theta_c(3\pi/2))$" $\in [0,2\pi)^2$, respectively. 
        \item $(+,-)$ class: \\
        The dispersion relation for edge states monotonically decreases on $\mathbb{T}$. 
        The discontinuity at $k=\pi/2$ occurs 
        between ``$(\pi/2,\theta_c(\pi/2))$ and $(\pi/2,\pi-\theta_c(\pi/2))$" $\in [0,2\pi)^2$ 
        and that at $k=3\pi/2$ occurs between ``$(3\pi/2,\pi+\theta_c(3\pi/2))$ and $(\pi/2,2\pi-\theta_c(3\pi/2))$"$\in [0,2\pi)^2$, respectively. 
        \item $(-,+)$ class: \\
        The dispersion relation for edge states monotonically decreases on $\mathbb{T}$. 
        The discontinuity at $k=\pi/2$ occurs 
        between ``$(\pi/2,\pi+\theta_c(\pi/2))$ and $(\pi/2,2\pi-\theta_c(\pi/2))$" $\in [0,2\pi)^2$ 
        and that at $k=3\pi/2$ occurs between ``$(3\pi/2,\pi+\theta_c(3\pi/2))$ and $(\pi/2,2\pi-\theta_c(3\pi/2))$"$\in [0,2\pi)^2$, respectively. 
        \item $(-,-)$ class: \\
        The dispersion relation for edge states monotonically increases on $\mathbb{T}$. 
        The discontinuity at $k=\pi/2$ occurs 
        between ``$(\pi/2,\theta_c(\pi/2))$ and $(\pi/2,\pi-\theta_c(\pi/2))$" $\in [0,2\pi)^2$ 
        and that at $k=3\pi/2$ occurs between ``$(3\pi/2,\pi+\theta_c(3\pi/2))$ and $(\pi/2,2\pi-\theta_c(3\pi/2))$"$\in [0,2\pi)^2$, respectively. 
        \item $(\pm ,0)$ class: \\
        This is the condition for vanishing gaps in the bulk spectrum at $k=\pi/2$ and $k=3\pi/2$. 
        The group velocity is zero, that is, the quasi-energy for the edge states has no wave number dependence.
        \end{enumerate}
        \end{corollary}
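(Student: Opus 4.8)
The plan is to regard $\theta_0$ as an explicit function on $\mathbb{T}=\mathbb{R}/(2\pi\mathbb{Z})$ through its branch formula (\ref{theta_0}) and to read off every asserted property---monotonicity, the location of the discontinuities, the disjointness $Ed\cap Bu=\emptyset$, and the membership of the one-sided limits in $Bu$---directly from elementary trigonometry, working under the standing hypotheses $\sin(\alpha-\beta)\neq0$ and $\sin(\alpha+\beta)\neq0$ under which $\theta_0$ is defined. The first thing I would record is that on \emph{both} branches of (\ref{theta_0}) one has $\sin\theta_0(k)=-\sin k\,\sin(\alpha-\beta)$, while the branch is selected by the sign of $\cos k\,\sin(\alpha+\beta)$; since $\sin(\alpha+\beta)\neq0$, this sign changes precisely where $\cos k=0$, i.e.\ at $k=\pi/2,3\pi/2$. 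Hence $\theta_0$ is given by a single smooth formula on each of the two open arcs of $\mathbb{T}\setminus\{\pi/2,3\pi/2\}$ and is in particular continuous there, so the only candidates for discontinuities are $k=\pi/2,3\pi/2$.

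For monotonicity I would differentiate on each arc. On the branch $\theta_0=\arcsin(-\sin k\,\sin(\alpha-\beta))$ (where $\cos k\,\sin(\alpha+\beta)\le0$) the derivative is $-\sin(\alpha-\beta)\cos k/D$, and on the branch $\theta_0=\pi-\arcsin(-\sin k\,\sin(\alpha-\beta))$ (where $\cos k\,\sin(\alpha+\beta)>0$) it is $+\sin(\alpha-\beta)\cos k/D$, with $D=\sqrt{1-\sin^2(\alpha-\beta)\sin^2 k}>0$. Using the branch condition to fix $\sgn(\cos k)$, both expressions collapse to the single value $\sgn(\theta_0'(k))=\sgn(\sin(\alpha-\beta)\sin(\alpha+\beta))$, constant on all of $\mathbb{T}\setminus\{\pi/2,3\pi/2\}$. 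This yields monotone increase for the classes $(+,+)$ and $(-,-)$ and monotone decrease for $(+,-)$ and $(-,+)$, matching items (2)--(5).

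The crux is the comparison of $\theta_0(k)$ with the bulk band $\sigma_c^{(k)}=\{e^{i\theta}:|\cos\theta|\le\rho(k)\}$. From $\sin\theta_0=-\sin k\,\sin(\alpha-\beta)$ I get $|\cos\theta_0(k)|=\sqrt{1-\sin^2(\alpha-\beta)\sin^2 k}$, so by (\ref{rh})
\[ |\cos\theta_0(k)|^2-\rho(k)^2=\cos^2 k\,\big(\sin^2(\alpha-\beta)+\sin 2\alpha\sin 2\beta\big)=\cos^2 k\,\sin^2(\alpha+\beta), \]
the last equality being the identity $\sin^2(\alpha-\beta)+\sin 2\alpha\sin 2\beta=\sin^2(\alpha+\beta)$. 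Because $\sin(\alpha+\beta)\neq0$, this is strictly positive for $k\neq\pi/2,3\pi/2$, so $|\cos\theta_0(k)|>\rho(k)$ and $\theta_0(k)$ lies in a genuine gap of $\sigma_c^{(k)}$; this is exactly $Ed\cap Bu=\emptyset$. At $k=\pi/2,3\pi/2$ the right-hand side vanishes, so $|\cos\theta_0|=\rho$, forcing each one-sided limit of $\theta_0$ onto a band edge $\theta_c,\pi-\theta_c,\pi+\theta_c$ or $2\pi-\theta_c$ evaluated there, which proves $\lim_{k\to\pi/2,3\pi/2}\theta_0(k)\in Bu$. That the two one-sided limits are \emph{distinct} edges---hence that a real jump occurs---follows because $\cos\theta_0$ has opposite signs on the two branches meeting at $\pi/2$ (resp.\ $3\pi/2$), so the limits land on edges with opposite sign of $\cos$.

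It remains to identify the precise endpoints in each class and to dispose of the degenerate classes, and here lies essentially all the labor. For classes (2)--(5) I would evaluate the one-sided limits at $\pi/2$ ($\sin k\to1$) and at $3\pi/2$ ($\sin k\to-1$): the quadrant of each limit is pinned down by $\sgn(\sin\theta_0)=\sgn(\mp\sin(\alpha-\beta))$ together with $\sgn(\cos\theta_0)$ read off the active branch, while which branch governs each side is dictated by $\sgn(\sin(\alpha+\beta))$; matching the quadrants to $\theta_c,\pi-\theta_c,\pi+\theta_c,2\pi-\theta_c$ reproduces the endpoint pairs stated in (2)--(5). For the boundary classes, $\sin(\alpha+\beta)=0$ makes the mass $m_0(k)$ in (\ref{mass}) vanish, so no edge spectrum exists, giving item (1); and $\sin(\alpha-\beta)=0$ forces $\theta_0\in\{0,\pi\}$ independently of $k$, i.e.\ zero group velocity, giving item (6), consistently with the gap-closing criterion of Corollary~\ref{corbulk}. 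I expect this final sign-and-quadrant enumeration---not any single inequality---to be the main obstacle: it is purely mechanical yet easy to get wrong, since for each of the four sign patterns one must track which branch applies on each side of each discontinuity and hence which of the four band edges each one-sided limit selects.
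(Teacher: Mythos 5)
Your proposal is correct and follows essentially the same route as the paper: the paper gives no separate proof of this corollary, treating it as a direct read-off from the explicit CMV-derived formulas for $\theta_0(k)$, $\rho(k)$, $m_0(k)$ and $\theta_c(k)$ (together with the identity $\sin^2(\alpha+\beta)-\sin^2(\alpha-\beta)=\sin 2\alpha\sin 2\beta$), which is exactly the computation you carry out, and your sign/quadrant bookkeeping reproduces the paper's monotonicity classes and band-edge endpoints (e.g.\ $\pi+\theta_c(\pi/2)$ versus $2\pi-\theta_c(\pi/2)$ in the $(+,+)$ class). The only caveat, which you share with the paper's own statement, is that the ``jump'' argument tacitly assumes $\cos\theta_0\neq 0$ at the limiting band edge, i.e.\ $|\sin(\alpha-\beta)|\neq 1$; in that degenerate sub-case the edge dispersion is linear and passes continuously through $k=\pi/2,3\pi/2$.
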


 By the above corollaries, the pattern shape of the dispersion relation $Bu\cup Ed$ is determined by the 
signs of $\sin 2\alpha\sin2\beta=\sin^2(\alpha+\beta)-\sin^2(\alpha-\beta)$, $\sin(\alpha+\beta)$ and 
$\sin(\alpha-\beta)$. See Figure 2. 
We set the triple of the signs $(\epsilon_1,\epsilon_2,\epsilon_3)$, $(\epsilon_j\in\{\pm\})$, 
where $\epsilon_1=\sgn(\sin 2\alpha\sin2\beta)$, $\epsilon_2=\sgn(\sin(\alpha+\beta)$ and $\epsilon_3=\sgn(\sin(\alpha-\beta))$.  
The first sign provides the shape of $Bu$: we have two bands which are symmetric with respect to $\theta=\pi$, 
and each band has period $\pi$; 
if $\epsilon_1=+$, then ``peaks" appear at $k=\pi/2,3\pi/2$  and ``troughs" appear at $k=0,\pi$ 
while if $\epsilon_1=-$, then we have the $\pi$-phase difference of the $\epsilon_1=+$ case. 
If $\sin(\alpha+\beta)\neq 0$, the edge state appears as a bridge connecting the peaks (resp. troughs ) of the upper band 
and the lower band for $\epsilon_1=+$ case (resp. $\epsilon_1=-$). 
The signs $\epsilon_2$ and $\epsilon_3$ describe the pattern shape of the edge state. 
If $\epsilon_2=+$, then $\theta_0(0)=\pi$ and $\theta_0(\pi)=0(=2\pi)$, while if $\epsilon_2=-$, then we have the $\pi$-phase shift of $\epsilon_2=+$ case. 
If $\epsilon_2\epsilon_3=+$, then the group velocity is positive, 
while if $\epsilon_2\epsilon_3=-$, then the group velocity is negative. 

%Fig.1
\begin{figure}[htbp]
  \begin{center}
              \includegraphics[clip, width=14cm]{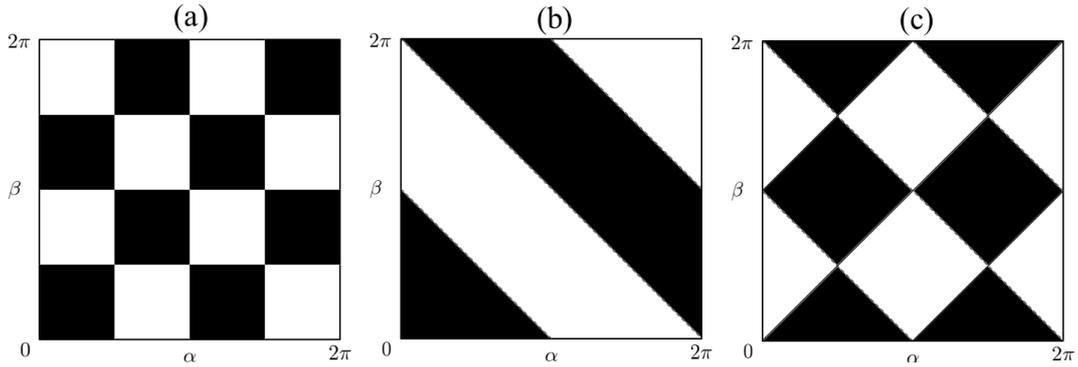}
              \caption{The black regions of $[0,2\pi)^2$ 
              depict (a) $\epsilon_1=\sin 2\alpha\sin2\beta>0$ (b) $\epsilon_2=\sin(\alpha+\beta)>0$ (c) $\epsilon_2\epsilon_3=\sin(\alpha+\beta)\sin(\alpha-\beta)>0$, respectively. 
              On the lines $\alpha+\beta=\pi,2\pi,3\pi$ (see i.g., (b)), the edge state does not appear. }
  \end{center}
\end{figure}

%%%%%%
\subsection{Topological phases}
Here, we study symmetry and topological phases of the AE model. 
At first, we review basic notions of topological phases, then
explain symmetry and topological numbers of the AE model
studied in \cite{AE}. We remark that the topological  number of the
AE model derived in
\cite{AE} is well-defined only when gaps in the bulk spectrum exist.
As we will explain later, the result by obtained using the topological phases
agrees with the presence or absence of edge states by Corollary 2
(1)-(5), but not Corollary 2 (6). This is because 
Corollary 2 (6) requires the absence of gaps in the bulk spectrum, where
the topological numbers in \cite{AE} are not well-defined.
We solve this problem by introducing a new topological number of the AE
model, which is defined only when there are no gaps in the bulk spectrum.

Recently, localization appearing in space-inhomogeneous
quantum walks has
attracted attention, since it has been related to topological phase.
Here, the topological numbers originating from non-trivial topological
properties of the eigen functions in wavenumber space\cite{volovik07} are
related to fundamental objects of study in algebraic topology, such as
winding number or Chern number. This approach
has emerged from recent research of topological phases of matter (known
as topological insulators and superconductors) in condensed-matter
physics\cite{bernevig13, chiu15, hasan10, qi11}. 
Since a
quantum walk can be considered as a simplified
theoretical model of a topological insulator (more preciously, Floquet
topological insulators\cite{Kitagawa3,Lindner11}, since quantum
walks can be regarded as time-periodically driven systems), the topological
phases of quantum walks in inhomogeneous systems have been intensively studied~\cite{A,CGGSVWW,CGSVWW,EKO,EEKST,OK,OANK}. 
In the field of theoretical physics of the topological insulators, a fundamental principle,
the so called bulk-edge correspondence, states that
the number of edge states localized at the interface between two adjacent
regions in position space is the same as the difference in topological number between
the two regions.

Now, we briefly introduce results concerning topological numbers in \cite{AE}.
Since topological phases of matter are easier to consider for
systems with translation symmetry (though this is not necessary for
topological insulators), we focus
on a system without boundaries and  explain topological numbers originating
from non-trivial topological properties
of eigen functions in the wavenumber space. Accordingly, 
we consider the time-evolution operator $\Gamma$ in 
(\ref{eq:Gamma_D}) for the vertex set
	\[ V_0=\mathbb{Z} \times \mathbb{Z}=\{ (x,y)\;|\; x\in \mathbb{Z}, y\in \mathbb{Z} \}, \]
in this section. 
Unlike (\ref{Gammahat}),
in applying the Fourier transform to the two-dimensional space of $\Gamma$, 
we derive the time-evolution operator in the wavenumber representation 
$\hat{\Gamma}_0(k_x,k_y) : \mathbb{C}^2 \to
\mathbb{C}^2$ for fixed $(k_x,k_y) \in [0,2\pi)^2$
described as
\begin{equation}
 \hat{\Gamma}_0(k_x,k_y) = \hat{D}(k_x) H_\alpha  \hat{D}(k_y) H_\beta.
\label{eq:Gamma_2D}
\end{equation}
We note that $\hat{D}(k)$ and $H_\gamma$ can be written as
\begin{align*}
\hat{D}(k)&= e^{-ik \sigma_3},\\
H_\gamma & = e^{-i \gamma \sigma_2},
\end{align*}
respectively, 
using an identity matrix $\sigma_0$ and Pauli
matrices $\sigma_i$ $(i=1,2,3)$
\[
\sigma_0=
\begin{bmatrix}
1  & 0\\
0 & 1
\end{bmatrix} 
,
\sigma_1=
\begin{bmatrix}
0  & 1\\
1 & 0
\end{bmatrix} 
,
\sigma_2=
\begin{bmatrix}
0  & -i\\
i & 0
\end{bmatrix} 
,
\sigma_3=
\begin{bmatrix}
1  & 0\\
0 & -1
\end{bmatrix} .
\]

First, we identify the symmetries
of the time-evolution operator, $\hat{\Gamma}_0(k_x,k_y)$.
It is well known that there are three important symmetries for topological phases, namely, time-reversal, particle-hole, and
 chiral symmetries\cite{bernevig13, chiu15, hasan10, qi11}.
These symmetries require the time-evolution operator to satisfy
the following relations with anti-linear symmetry operators $\hat{T}$ and 
$\hat{P}$, and a linear symmetry operator $\hat{Y}$(see details in \cite{EKO}):
\begin{align}
\text{Time-reversal symmetry:}\quad\quad \hat{T} \hat{\Gamma}_0(k_x,k_y) \hat{T}^{-1} &= \hat{\Gamma}_0(-k_x,-k_y)^{-1},\label{eq:TRS}\\
\text{Particle-hole symmetry:}\quad\quad \hat{P} \hat{\Gamma}_0(k_x,k_y) \hat{P}^{-1} &= \hat{\Gamma}_0(-k_x,-k_y),\label{eq:PHS}\\
\text{Chiral symmetry:}\quad\quad \hat{Y} \hat{\Gamma}_0(k_x,k_y) \hat{Y}^{-1} &= \hat{\Gamma}_0(k_x,k_y)^{-1},\label{eq:chiral}
\end{align}
On one hand, we see from (\ref{eq:Gamma_2D}) and (\ref{eq:PHS}) that particle-hole symmetry is retained  with the symmetry operator
\[
 \hat{P} = K,
\]
where $K$ stands for the complex conjugation.
On the other hand, time-reversal and chiral symmetries
are not generally satisfied.
% even if we employ the symmetry-time frame method\cite{AO}.
Therefore, the AE model belongs to class D in the classification
table of topological phases\cite{SRFL}. 

Roughly speaking,  topological phases of matter are characterized by topological properties of
wavefunctions, which are periodic functions in wavenumber space. 
The topological number $\nu_{2d}$ for the AE model derived in
Ref.\ \cite{AE} is obtained from
a winding number originating from the periodicity of two
wavenumber space $k_x$ and $k_y$ in addition to the time direction 
due to the periodic-time drive of quantum walks\cite{RLBL}.
The topological number $\nu_{2d}$'s dependence on $a$ and $b$
is summarized in figure \ref{fig:topological_numbers} (a). 
We note that the topological number $\nu_{2d}$ is finite except on lines in figure
\ref{fig:topological_numbers} (a), where there are no gaps in the
bulk spectrum and 
$\nu_{2d}$ is not well-defined.
Along with the bulk-edge correspondence, we can predict the emergence of
non-degenerated edge states in the AE model when the topological number $\nu_{2d}$ assigned to the parameter set
$(\alpha,\beta)$ is non-zero, since in the outer region from the cutting
edge of the AE model, $\nu_{2d}$ is zero. 
Since the topological number $\nu_{2d}$ is finite whenever there exist
gaps in bulk spectrum.
the phase diagram seems to agree with Corollary~\ref{coredge} (1)-(5).
\begin{figure}[tbp]
  \begin{center}
              \includegraphics[clip, width=17cm]{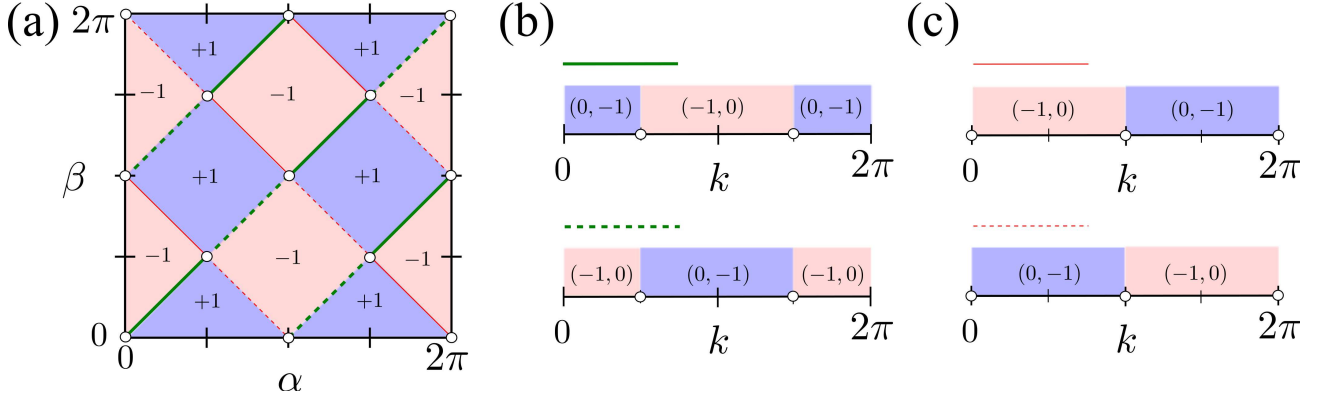}
              \caption{(a) The phase diagram of topological number
   $\nu_{2d}$ derived in Ref.\ \cite{AE}
 as functions of $\alpha$ and  $\beta$. In addition, the
   green thick and red thin (solid and dashed) lines represent 
   vanishing gaps in the bulk spectrum} at the quasienergies of $0$ and $\pi$, 
 where the topological numbers $(\nu_0,\nu_\pi)$ are given in (\ref{eq:toponum1}) and
   (\ref{eq:toponum2}). (b and c) The wavenumber $k$ dependence
   of the topological numbers $(\nu_0,\nu_\pi)$ on the green thick and red
   thin (solid and dashed) lines, respectively.
At the points shown by open
   circles in all figures, no
   topological numbers are defined. Note that the topological
   numbers on the red thin lines cannot be applied to the AE model. 
\label{fig:topological_numbers}
  \end{center}
\end{figure}

Corollary~\ref{coredge} (6), however, states that edge states
appear despite the fact that there are no gaps in the bulk spectrum and the
topological number is not well-defined. 
This disagreement can be resolved by introducing a topological number
that is calculated from $\hat{\Gamma}_0(k_x,k_y)$ with fixed wave-number $k_y$ such
that gaps in the spectrum in the subspace $k_x \in
[0,2\pi)$ exist\cite{SR}. Therefore, the new topological number is similar to
one defined in one dimension, as previously studied\cite{AO}. Remarkably, this topological
number is well-defined only when there are no gaps in the
two-dimensional bulk spectrum in order to retain chiral symmetry. This is the main result of this section, and we explain it hereafter.

First, we impose the following restrictions on $\alpha$ and $\beta$ in (\ref{eq:Gamma_2D}) to satisfy case (6) of Corollary~\ref{coredge}:
\begin{equation}
 \beta \in \{\alpha+n\pi\;|\; n\in\mathbb{Z},  \alpha \in[0,2\pi)
  \setminus \{0,\frac{\pi}{2}, \pi,\frac{3\pi}{2}\} \}.
\label{eq:alpha_beta_class6}
\end{equation}
Then, the time-evolution operator becomes
\begin{align*}
\hat{\Gamma}_0(k_x, k_y; n,\alpha) = (-1)^n \hat{\Gamma}_0(k_x,k_y;\alpha),\quad
 \hat{\Gamma}_0(k_x,k_y;\alpha) = \hat{D}(k_x) H_\alpha  \hat{D}(k_y) H_\alpha.
%\label{eq:Gamma_1D}
\end{align*}
Since the prefactor $(-1)^n=e^{in\pi}$ only shifts the quasi-energy by
$n\pi$, we can focus on $\hat{\Gamma}_0(k_x,k_y;\alpha)$ and recover the prefactor
at the end of the calculations.
Now, by shifting the origin of time to derive a time-evolution operator fitted in the
symmetry time frame\cite{AO}, we obtain
$\hat{\Gamma}_0^\prime(k_x,k_y;\alpha)$
\begin{align}
\hat{\Gamma}_0^\prime(k_x,k_y;\alpha) &=  \hat{D}(k_x/2) H_{\alpha} \hat{D}(k_y)
 H_\alpha  \hat{D}(k_x/2).
\label{eq:Gamma_2D_chiral}
\end{align}
We see that $\hat{\Gamma}_0^\prime(k_x,k_y;\alpha)$ satisfies the relation
\begin{align*}
\hat{Y} \hat{\Gamma}_0^\prime(k_x,k_y;\alpha) \hat{Y}^{-1} &=
\hat{\Gamma}_0^\prime(k_x,k_y;\alpha)^{-1},
\end{align*}
with chiral symmetry operator
\begin{equation}
 \hat{Y}= \sigma_1.
\label{eq:Y1}
\end{equation}
Note that the presence of chiral symmetry relies on the
condition (\ref{eq:alpha_beta_class6}) resulting in the absence of gaps
in the bulk spectrum, as explained by Corollary 2 (6).
Since $\hat{\Gamma}_0^\prime(k_x,k_y;\alpha)$ also retains particle-hole
symmetry, the system with parameters (\ref{eq:alpha_beta_class6}) belongs to the class BDI. 
While the class BDI in two dimensions does not have finite
topological numbers, chiral symmetry classes can have an integer topological number in
one dimension\cite{SRFL}. 
Therefore, we fix $k_y$ of the time-evolution operator
$\hat{\Gamma}_0^\prime(k_x,k_y;\alpha)$ in (\ref{eq:Gamma_2D_chiral}) and regard it as a
parameter, $k \in [0,2\pi)$. In the end, we consider the following
time-evolution operator,
\begin{align}
\hat{\Gamma}_0^\prime(k_x; k, n) = (-1)^n \hat{\Gamma}_0^\prime(k_x;k),\quad
 \hat{\Gamma}_0^\prime(k_x;k) = \hat{D}(k_x/2) H_\alpha
 \hat{D}(k) H_\alpha \hat{D}(k_x/2).
\label{eq:Gamma_1D_chiral}
\end{align}
Since $k$ is fixed, $\hat{\Gamma}_0^\prime(k_x;k)$ effectively describes 
time-evolution in one dimension with chiral symmetry, while $\hat{D}(k)$
introduces a complex factor $e^{-ik\sigma_3}$. Therefore, the system
belongs to the class AIII.
Because of the presence of chiral symmetry, the topological
number is calculated by the following the method in Ref.\ \cite{AO} as long as the
gaps in the spectrum in the subspace $k_x\in[0,2\pi)$ exist for a fixed $k$. 
In order to calculate the winding number, it is better to express
$\hat{\Gamma}_0^\prime(k_x;k)$ using Pauli matrices as
\begin{align*}
\hat{\Gamma}_0^\prime(k_x;k) &= d_0^\prime(k_x)\sigma_0 + i \sum_{j=1,2,3}
 d_j^\prime(k_x)\sigma_j,\nonumber \\
d_0^\prime(k_x) & = \cos k\cos(2\alpha) \cos k_x - \sin k \sin
 k_x,\nonumber \\
d_1^\prime(k_x) & = 0,\nonumber \\
d_2^\prime(k_x) & = d_2^\prime= -\sin (2\alpha) \cos k,\\
d_3^\prime(k_x) & = -\sin k \cos k_x - \cos k \cos(2\alpha) \sin k_x.
\end{align*}
Furthermore, we apply a unitary transformation to $\hat{\Gamma}_0^\prime(k_x;k)$ such
that the chiral symmetry operator becomes $\hat{Y}=\sigma_3$,
\begin{align}
\tilde{\Gamma}_0^\prime(k_x;k) & = e^{-i\frac{\pi}{4}\sigma_2}
\hat{\Gamma}_0^\prime(k_x;k) e^{i\frac{\pi}{4}\sigma_2},\nonumber \\
&=
d_0^\prime(k_x) \sigma_0 + id_3^\prime(k_x)\sigma_1 + i d_2^\prime(k_x)\sigma_2.\nonumber
\end{align}
The eigenvalue $\lambda_\pm$ and corresponding eigenvectors $\psi_\pm^\prime(k_x)$ of $\tilde{\Gamma}_0^\prime(k_x;k)$ are given by
\begin{align}
\lambda_\pm &= d_0^\prime(k_x) \pm i \sqrt{1-d_0^{\prime2}(k_x)},\nonumber\\
\psi_\pm^\prime(k_x)  &= \frac{1}{\sqrt{2}}
\begin{bmatrix}
\mp i e^{i \theta^\prime(k_x)} \\ 1
\end{bmatrix},\nonumber \\
e^{i \theta^\prime(k_x)} &=
 \frac{d_2^\prime(k_x)+id_3^\prime(k_x)}{\sqrt{1-d_0^{\prime 2}(k_x)}} \label{eq:theta}
\end{align}
Then, the winding number characterizing the topological properties of
the eigenvector in wavenumber space is given by
 \begin{align*}
 \nu^\prime &= -\frac{i}{\pi} \int_0^{2\pi} \overline{\psi_\pm^\prime(k_x)} \,\,
\frac{d \psi_\pm^\prime(k_x)}{dk_x}  dk_x\nonumber \\
&= \frac{1}{2}\oint d\theta^\prime(k_x).
 \end{align*}

Therefore, the winding number is determined by whether 
$e^{i\theta^\prime(k_x)}$ in (\ref{eq:theta}) forms a closed circle including the
origin in the complex plane when $k_x$ changes from $0$ to $2\pi$.
Since $\sqrt{1-d_0^{\prime 2}(k_x)}$ is positive finite as long as the spectrum
$\lambda_\pm$ has gaps, we focus only upon the numerator in 
(\ref{eq:theta}), $d_2^\prime(k_x)+id_3^\prime(k_x)$, in the following argument. 
Then, because  $d^\prime_2(k_x)$ is independent of $k_x$,
$e^{i\theta^\prime(k_x)}$ does not wind around the origin and 
we have 
\[
\nu^{\prime}=0 
\]
for any $\alpha$ and $k$.

According to \cite{AO}, in order to calculate the topological numbers $\nu_0$ and $\nu_\pi$ for
quasienergies $0$ and $\pi$, respectively, we need to calculate the
winding number for the other time-evolution operator $\hat{\Gamma}_0^{\prime\prime}(k_x;k)$ fitted in the
different symmetry time frame
\[
 \hat{\Gamma}_0^{\prime\prime}(k_x;k) = \hat{D}(k/2) H_{\alpha} \hat{D}(k_x)
 H_\alpha  \hat{D}(k/2).
% =\hat{\Gamma}_0^\prime(k;k_x).
\]
Because of the relation 
$\hat{\Gamma}_0^{\prime\prime}(k_x;k)=\hat{\Gamma}_0^\prime(k;k_x)$,
the winding number $\nu^{\prime\prime}$ of $\hat{\Gamma}_0^{\prime\prime}(k_x;k)$ is
given by switching $k_x$ and $k$ in the derivation of $\hat{\Gamma}_0^\prime(k;k_x)$. In
other words, the topological number $\nu^{\prime\prime}$ is 
given by
\begin{align}
 \nu^{\prime\prime} &= \frac{1}{2}\oint d\theta^{\prime\prime}(k_x),\nonumber \\
e^{i \theta^{\prime\prime}(k_x)} &=
 \frac{d_2^{\prime\prime}(k_x)+id_3^{\prime\prime}(k_x)}{\sqrt{1-d_0^{\prime\prime
 2}(k_x)}},\nonumber \\
d_2^{\prime\prime}(k_x) & = -\sin (2\alpha) \cos k_x,\nonumber \\
d_3^{\prime\prime}(k_x) & = -\cos k \sin k_x - \sin k \cos(2\alpha) \cos k_x.\nonumber
\end{align}
Putting $u=d_2^{\prime\prime}(k_x)$ and
$v=d_3^{\prime\prime}(k_x)$, we obtain the elliptic function of $u$ and $v$
\begin{equation*}
(\cos^2 k + \sin^2 k \cos^2(2\alpha)) u^2 + \sin^2(2\alpha) v^2 
-2\sin k \cos(2 \alpha) \sin(2\alpha) uv = \cos^2 k \sin^2(2\alpha)
\end{equation*}
By properly rotating the coordinate, the above elliptic function is
simply written as
\begin{align}
&u^{\prime2} + \cos^2 k \sin^2(2\alpha) v^{\prime 2}= \cos^2 k
 \sin^2(2\alpha),\label{eq:elliptic}\\
&
\begin{bmatrix}
u^\prime \\ v^\prime 
\end{bmatrix}
=
\begin{bmatrix}
\cos \eta & \sin \eta \\
-\sin \eta & \cos \eta
\end{bmatrix}
\begin{bmatrix}
u \\ v
\end{bmatrix},\nonumber \\
&\cos \eta =\frac{\cos(2\alpha)}{\Xi}, \quad 
\sin \eta  =\frac{\sin k \sin(2\alpha)}{\Xi},\quad
\Xi = \sqrt{1-\cos^2 k \sin^2 (2\alpha)}.\nonumber
\end{align}

From (\ref{eq:elliptic}), the winding number $\nu^{\prime\prime}$ is
determined as follows.
\begin{enumerate}
 \item The case of
       $\alpha=0,\frac{\pi}{2},\pi,\frac{3\pi}{2}$: (these values are excluded by (\ref{eq:alpha_beta_class6}))

Since (\ref{eq:elliptic}) is reduced to $u^\prime=u=0$, the
       trajectory of $\theta^{\prime\prime}(k_x)$ 
       becomes a line, and then $\nu^{\prime\prime}=0$.

 \item The case of $k=\frac{\pi}{2}, \frac{3\pi}{2}$:

Since (\ref{eq:elliptic}) is reduced to
       $u^{\prime}=\cos(2\alpha)u \pm \sin(2\alpha) v=0$, the trajectory of $\theta^{\prime\prime}(k_x)$ 
       becomes a line and then $\nu^{\prime\prime}=0$.

 \item Otherwise:
 
The trajectory of $\bs{p}=(u(k_x), v(k_x))$  $k_x\in [0,2\pi)$ 
       is a rotated ellipse enclosing the origin. 
The curvature of $\bs{p}$ is expressed by 
       \[ \kappa(k_x)=\frac{\dot{u}(k_x)\ddot{v}(k_x)-\ddot{u}(k_x)\dot{v}(k_x)}{(\dot{u}(k_x)^2+\dot{v}(k_x)^2)^{3/2}}. \]
Here, $\dot{w}=dw/dk_x$, and $\ddot{w}=d^2w/dk_x^2$, $(w=u,v)$. 
The numerator of $\kappa(k_x)$ in this case can be computed as $\sin 2\alpha \cos k$. 
Thus, the sign of $\kappa$ coincides with that of $\sin 2\alpha \cos k$, 
meaning $\nu^{\prime\prime}\ne 0$ if and only if $\sin 2\alpha \cos k \ne 0$. 
Moreover, since $\dot{u}(k_x)=\sin k_x \sin 2\alpha$, then $k_x=\pi$ is a unique point in $(0,2\pi)$ 
such that the sign of $\dot{u}(k_x)$ is inverted, that is, for $\sin{2\alpha}>0$
	\[ \dot{u}(k_x)>0\; (k_x\in (0,\pi)),\;\; \dot{u}(k_x)<0\;
	(k_x\in (\pi,2\pi)),\] 
and vice versa for $\sin{2\alpha}<0$.
Thus since the orbit is a rotated ellipse enclosing the origin, we have $|\nu'|=1$. 
Taking into account the prefactor $(-1)^n$ in (\ref{eq:Gamma_1D_chiral}), we can conclude that 
       \[ \nu^{\prime\prime}= \begin{cases} 1 & \text{: $(-1)^n\cos k \sin 2\alpha>0$, } \\ -1 & \text{: $(-1)^n\cos k \sin 2\alpha<0$}. \end{cases} \]
\end{enumerate}

Finally, we  use the formula to calculate the topological numbers for
the zero and $\pi$ energy states, $\nu_0$ and $\nu_\pi$, respectively, for
one-dimensional quantum walks with chiral symmetry\cite{AO}:
\[
(\nu_0, \nu_\pi) = \left(
\frac{\nu^\prime+\nu^{\prime\prime}-1}{2}, 
\frac{\nu^\prime-\nu^{\prime\prime}-1}{2} 
\right).
\]
%Note, we shift the value of $\nu_0$ and $\nu_\pi$ by add $-1/2$, which does not affect
%bulk-edge correspondence. 
Putting $\nu^\prime$ and $\nu^{\prime\prime}$ into the above formula, we
have
\begin{align}\label{eq:toponum1}
 (\nu_0,\nu_\pi)=
\begin{cases}
(0,-1) & \text{: $(-1)^n\cos k \sin 2\alpha>0$,}\\
(-1,0) & \text{: $(-1)^n\cos k \sin 2\alpha<0$,}\\
\text{undefined} & \text{: $\cos k \sin 2\alpha=0$.}
\end{cases}
\end{align}
The above result is summarized in figures
\ref{fig:topological_numbers}(a) and (b). We remark that topological
numbers $\nu_0$ and $\nu_\pi$ depend  not only upon the angle of the coin
operators $\alpha$ and $n$ but also the wavenumber $k$.
Since the absolute values of both topological numbers is one, we predict
single edge states at quasi-energies of $0$ and $\pi$ for each $k$ using the
bulk-edge correspondence.
Therefore, in the semi-infinite space $V$ where $k$ is continuous in
$[0,2\pi)$, the number of edge states at quasi-energies $0$ and $\pi$ 
become infinite as a consequence of the accumulation of topological
numbers for each $k$. Then, infinitely degenerated edge states develop
wavenumber independent dispersion relations at quasi-energies $0$ and $\pi$,
which is consistent with the proof of the case (6) in Corollary~2.

The above procedure can also be applied to case (1) in
Corollary~2,
\begin{equation}
 \beta \in \{-\alpha+n\pi\;|\; n\in\mathbb{Z},  \alpha \in[0,2\pi)\},
\label{eq:class1}
\end{equation}
where there are neither gaps in the bulk spectrum, nor and there are no edge states.
Putting (\ref{eq:class1}) into (\ref{eq:Gamma_2D}) and
regarding $k_y$ as a fixed parameter $k\in[0,2\pi)$,
the time-evolution operator in the effective one-dimensional
fitted in symmetry time frame is expressed as
\begin{align}
\check{Y} \check{\Gamma}_0^\prime(k_x;k) \check{Y}^{-1} &=
\check{\Gamma}_0^\prime(k_x;k)^{-1},
\nonumber \\
\check{\Gamma}_0^\prime(k_x;k) &=  \hat{D}(k_x/2) H_{\alpha} \hat{D}(k)
 H_{-\alpha}  \hat{D}(k_x/2),
\label{eq:Gamma_2D_chiral_case1}\\
\check{\Gamma}_0^\prime(k_x;k,n) &= (-1)^n \Gamma_0^\prime(k_x;k)\nonumber
\end{align}
with the chiral symmetry operator
\begin{equation}
 \check{Y}= \sigma_2.
\label{eq:Y2}
\end{equation}
More explicitly, $\check{\Gamma}_0^\prime(k_x;k)$ is written as
\begin{align*}
\check{\Gamma}_0^\prime(k_x;k) &= e_0^\prime(k_x) \sigma_0 + i \sum_{j=1,2,3}e_j^\prime(k_x)
 \sigma_j,\\
e_0^\prime(k_x) &= \cos k \cos k_x - \sin k \cos(2\alpha) \sin k_x,\\
e_1^\prime(k_x) &= -\sin k \sin(2\alpha),\\
e_2^\prime(k_x) &= 0,\\
e_3^\prime(k_x) &= -\cos k \sin k_x - \sin k \cos(2 \alpha) \cos k_x.
\end{align*}
We remark that the chiral symmetry operators $\check{Y}$ is
different from $\hat{Y}$ in (\ref{eq:Y1}).
Applying the unitary transform $e^{i\frac{\pi}{4}\sigma_1} \check{\Gamma}_0^\prime(k_x;k) e^{-i
\frac{\pi}{4}\sigma_1}$, the topological numbers for energies of zero and
$\pi$, $\nu_0$ and $\nu_\pi$, are derived through the same procedure as before.
The result is summarized as
\begin{align}\label{eq:toponum2}
 (\nu_0,\nu_\pi)=
\begin{cases}
(-1,0) & \text{: $(-1)^n\sin k \sin 2\alpha>0$,}\\
(0,-1) & \text{: $(-1)^n\sin k \sin 2\alpha<0$,}\\
\text{undefined} & \text{: $\sin k \sin 2\alpha=0$.}
\end{cases}
\end{align}
We summarize the above result in figures
\ref{fig:topological_numbers} (a) and (c).
We note, however, that the topological numbers in (\ref{eq:toponum2}) cannot be
applied to the AE model for the reason explained below.

Remarkably, when we consider case (1) in Corollary~2, which proves
the absence of edge states, topological numbers $\nu_0$ and $\nu_\pi$ become finite,
except $\sin k \sin 2\alpha=0$. 
%This inconsistency comes from the fact
%that the topological numbers are derived from wave functions in the
%bulk, {\it i.e.}, in the infinite homogeneous systems. 
However, when we predict
the number of edge states using the bulk-edge correspondence for the
system with a boundary, we keep in mind that the boundary does not break
the symmetry retained in the
bulk.
Then, we check whether the boundary realized by the shift operator in 
(\ref{eq:S''}) [or (\ref{S'})] retains chiral symmetry.
Since we consider the shift operator in the position space in
(\ref{eq:S''}), we redefine the chiral symmetry operators as
\[
 \hat{Y}=\sum_{x,y}\ket{x,y}\bra{x,y}\otimes
\begin{bmatrix}
\sigma_1 & 0\\
0 &\sigma_1 
\end{bmatrix},\quad
 \check{Y}=\sum_{x,y}\ket{x,y}\bra{x,y}\otimes
\begin{bmatrix}
\sigma_2 & 0\\
0 &\sigma_2 
\end{bmatrix}.
\]
In order to retain chiral symmetry, the relation  $X S^{\prime\prime} X^{-1}=S^{\prime\prime-1}$
($X \in \{\hat{Y}, \check{Y}\}$) should be satisfied. 
We can confirm that
\begin{align*}
\hat{Y} S^{\prime\prime} \hat{Y} S^{\prime\prime} &=
\sum_{x,y} 
\ket{x,y}\bra{x,y}\otimes
(\ket{R}\bra{R}+\ket{L}\bra{L}+\ket{D}\bra{D}+\ket{U}\bra{U})\\
&=\sum_{x,y} \ket{x,y}\bra{x,y} \otimes I_4
\quad \text{: $(x,y)\in V$,}
\end{align*}
where $I_4$ is an identity matrix of $4\times 4$ matrices, while
\begin{align*}
\check{Y} S^{\prime\prime} \check{Y} S^{\prime\prime} =
\begin{cases}
\sum_{x,y} 
\ket{x,y}\bra{x,y}\otimes I_4
&\text{: $(x,y)\notin \partial V$,}\\
\sum_{x,y} 
\ket{x,y}\bra{x,y}\otimes
(-\ket{R}\bra{R}+\ket{L}\bra{L}+\ket{D}\bra{D}+\ket{U}\bra{U})
&\text{: $(x,y)\in \partial V$.}
\end{cases}
\end{align*}
Therefore, the boundary retains chiral symmetry given by $\hat{Y}$ in
(\ref{eq:Y1}), while it
breaks chiral symmetry defined by the symmetry operator $\check{Y}$ in (\ref{eq:Y2}).
Thus, the topological numbers in (\ref{eq:toponum2}) cannot be
applied to the AE model.

We note that a boundary realized by a shift operator with an extra phase $i$
on the self loop
\begin{align*}
S''_i |x,y\rangle |J\rangle &= S^\prime|x,y\rangle |J\rangle, \; (J\in \{ R,D,U
\}), \\
S''_i |x,y\rangle |L\rangle
 &=
 \begin{cases}
  S'|x,y\rangle |L\rangle & \text{: $(x,y) \notin \partial V$, } \\
  i S'| x,y \rangle | L\rangle & \text{: $(x,y) \in \partial V$,}
 \end{cases}
\end{align*}
satisfies $\check{Y} S_i^{\prime\prime} \check{Y}^{-1} =
S_i^{\prime\prime-1}$, but $\hat{Y} S_i^{\prime\prime} \hat{Y}^{-1} \ne
S_i^{\prime\prime-1}$. With the shift operator $S_i^{\prime\prime}$, we numerically
confirm that a wavenumber independent dispersion relation for the edge states appears and agrees with the
topological numbers in 
(\ref{eq:toponum2}) for the parameters in case (1) in Corollary~2,
whereas there appear to be no edge states in case (6).

%%%%%%%%%%%%%%%%%%%%%%%%%%%%%%%%%%%%%%%
%Fig.2 (a)(b)
\begin{figure}[htbp]
\begin{center}
    \begin{tabular}{c}
     \includegraphics[clip, width=13.20cm]{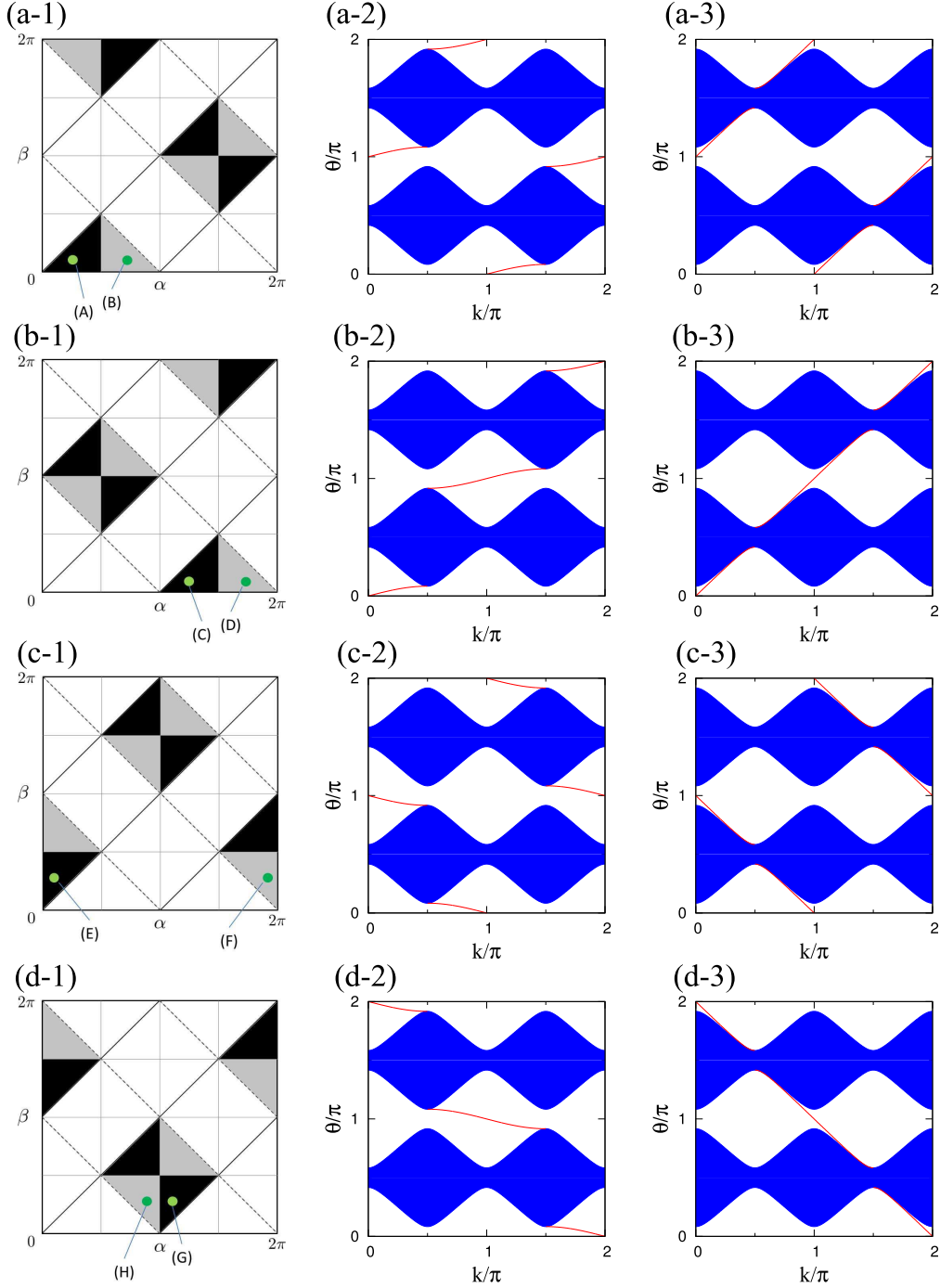}%\hspace{1cm}
    \end{tabular}
\end{center}
    \caption{(left column) The black and gray regions represents two
   parameter classes $(\varepsilon_1,\varepsilon_2,\varepsilon_3)$
   considered the middle and right columns, respectively.
              (middle and right columns) The dispersion relations for bulk (filled blue
   regions) and edge states (red curves) for a specific value of
   $(\alpha, \beta)$ included in  two parameter classes shown in the
   left column. The specific values of $(\alpha,\beta)$ are as follows:
   (a-2) (A):$(\pi/4,\pi/6)\in (+,+,+)$, (a-3) (B):$(3\pi/4,\pi/6)\in (-,+,+)$,
   (b-2) (C):$(5\pi/4,\pi/6)\in (+,-,-)$, (b-3) (D):$(7\pi/4,\pi/6)\in
   (-,-,-)$,
   (c-2) (E):$(\pi/6,\pi/4)\in (+,+,-)$, (c-3) (F):$(11\pi/6,\pi/4)\in
   (-,+,-)$,
   (d-2) (G):$(7\pi/6,\pi/4)\in (+,-,+)$, and (d-3) (H):$(5\pi/6,\pi/4)\in
   (-,-,+)$.
   The topological number is $\nu_{2d}=+1$ in (a) and (b), but
   $\nu_{2d}=-1$ in (c) and (d).}
    \label{Fig.2}
%  \end{center}
\end{figure}
\begin{figure}[htbp]
\begin{center}
    \begin{tabular}{c}
    \includegraphics[clip, width=14.5cm]{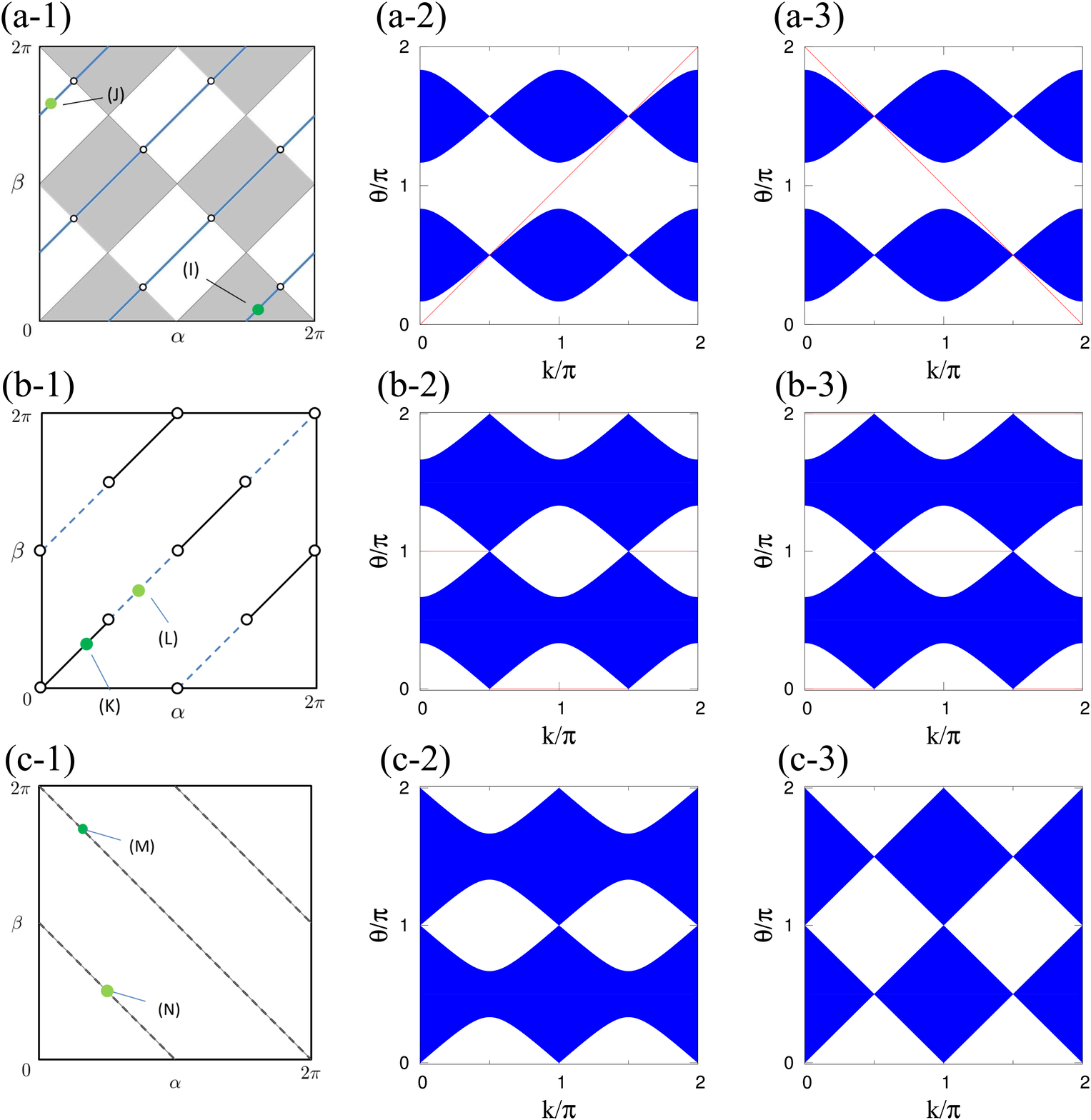}
\end{tabular}\end{center}
    \caption{(left column) The black and gray regions represents two
   parameter classes $(\varepsilon_1,\varepsilon_2,\varepsilon_3)$
   considered the middle and right columns, respectively.
              (middle and right columns) The dispersion relations for bulk (filled blue
   regions) and edge states (red curves) for a specific value of
   $(\alpha, \beta)$ included in  two parameter classes shown in the
   left column. The specific values of $(\alpha,\beta)$ are as follows:
   (a-2) (I):$(5\pi/3,\pi/6)\in (-,-,-)$, (a-3) (J):$(\pi/6,5\pi/3)\in(-,-,+)$,
   (b-2) (K):$(\pi/3,\pi/3)\in (+,+,0)$, (b-3) (L):$(2\pi/3,2\pi/3)\in (+,-,0)$
   (c-2) (M):$(\pi/3,5\pi/3)\in (-,0,+)$ , and (c-3) (N):$(\pi/4,3\pi/4)\in (-,0,-)$
   In (a), the topological number is $\nu_{2d}=+1$. In (b-1), the real lines depict the parameter classes of $(+,+,0)$ 
where the topological numbers are given in figure 2(b). The
   dashed lines depict $(+,-,0)$ where the topological numbers are given
   in figure2(c). In (c), there are no well-defined topological numbers.}
    \label{Fig.7}
\end{figure}
%
%%%%%
\subsection{Demonstrations}

Here, we present plots of dispersion relations for bulk and edge states,
which are written in Theorem~2, 
for specific values of $\alpha$ and $\beta$ in Figures
4,5 and compare them with predictions based on topological numbers.
According to the bulk-edge correspondence, if topological numbers
$\nu_{2d}$, $\nu_0$, and $\nu_\pi$ are finite, we expect
edge states.
We can confirm that the dispersion relations for edge states agree well
with the prediction based on topological numbers.
For the triple of the signs $(\epsilon_1,\epsilon_2,\epsilon_3)$, $(\epsilon_j\in\{\pm\})$, 
where $\epsilon_1=\sgn(\sin 2\alpha\sin2\beta)$,
$\epsilon_2=\sgn(\sin(\alpha+\beta)$, and $\epsilon_3=\sgn(\sin(\alpha-\beta))$,
we have the topological number $\nu_{2d}=\epsilon_2\epsilon_3$. 

%%%%%%%%%%%%%%%%%%%%%%%%%%%%%%%%%%%%%%%%%%%%
%%%%%%%%%%%%%%%%%%%%%%%%%%%%%%%%%%%%%%%%%%%%
\section{Asymptotic behavior of the quantum walk}
%%%%%%%%%%%%%%%%%%%%%%%%%%%%%%%%%%%%%%%%%%%%
%%%%%%%%%%%%%%%%%%%%%%%%%%%%%%%%%%%%%%%%%%%%
We have obtained the dispersion relation of the quantum walk, and
found the existence of the edge spectrum in gaps of the bulk
spectrum, which agrees with the prediction by the bulk-edge
corresponding of topological numbers. Moreover, in the next section,
we shown that the states corresponding to the edge spectrum exponentially localize near self-loops on the boundary. These
observations give a certification to call these states the edge
states. In this section, 
we calculate the finding probability along the boundary.
In the previous studies on the quantum walk on the one dimensional lattice related to the topological phases, 
the response on the finding probability is characterized by a specific property of quantum walks named the localization~\cite{Kitagawa,OK}. 
Since our model treats two-dimensional model, our interest is naturally how the response is deformed in this higher dimensional system. 

To this end, we concentrate upon the finding probability at the boundaries $\partial V$ to observe how the quantum walker recognizes the boundary: 
we set $\nu_n: \mathbb{Z}\to [0,1]$ as the probability that a quantum walker is measured at time $n$ at the boundary,   
	\begin{equation}\label{target} 
        \nu_n(j)=|(U^{2n}\psi_0)(a_j)|^2. 
        \end{equation}
Here $a_j\in A$ is the self-loop at $[0,j]\in V$. 
Recall that the initial state is fixed to the self-loop at the origin $a_0\in A$. 
We have  
	\[ \hat{\psi}_n(k):=\sum_{j\in \mathbb{Z}}(U^{2n}\psi_0)(a_j) \; e^{ikj}=\hat{\varphi}_{n,1}(0;k). \]
Now we define some asymptotic behaviors of $\nu_n$ 
which will be the responses to the cutting edges of the quantum walk as follows. 
	\begin{definition}\label{sto}
        \noindent
        \begin{enumerate}
        \item We say linear spreading occurs at the boundary if and only if there exists 
        a monotonically non-decreasing function $F(\neq 0):\mathbb{R}_{\geq 0}\to [0,1]$, which is not given by $c_0{\sf H}(\cdot)$ 
        with some non-zero constant $c_0$, 
        such that 
        	\[ \lim_{n\to\infty}\sum_{|j|<n y} \nu_n(j)=F(y) \mathrm{\;for\; any\;} y\in \mathbb{R}_{\geq 0}. \]
        Here ${\sf H}(x)$ is the unit step function, that is, 
        	\[ {\sf H}(x)=\begin{cases} 1 & \text{: $x\geq 0$,} \\ 0 & \text{: $x<0$.} \end{cases}\]
        In particular, 
        \begin{enumerate}
        \item if $F(y)$ is continuous on $\mathbb{R}_{\geq 0}$, then we say continuously-linear spreading occurs;
        \item if $F(y)$ is described by $c_0{\sf H}(y-1)$, then we say ballistic spreading occurs. 
        \end{enumerate}
        \item We say localization occurs at the boundary if and only if there exists $j\in\mathbb{Z}$ such that
        	\[ \limsup_{T\to\infty}\frac{1}{T}\sum_{n=0}^{T-1}\nu_n(j)>0. \]
        \end{enumerate}
	\end{definition}
\begin{remark}
If there exists $q\in[0,1)$ and monotonic function $G(\neq 0):\mathbb{R}_{\geq 0}\to [0,1]$ such that 
	\[  \lim_{n\to\infty}\sum_{|j|<n^{q} y} \nu_n(j)=G(y), \]
then 
	\[  \lim_{n\to\infty}\sum_{|j|<n y} \nu_n(j)=\lim_{n\to\infty}G(n^{1-q}y)=\lim_{n\to\infty}c_0{\sf H}(y) \]
holds, where $c_0=\lim_{y\to\infty}G(y)$. 
Thus the condition that $F\neq c_0{\sf H}$ for the definition of the linear spreading prohibits such a slow spreading.  
%We will show that if there is no edges state, then $F=0$. 
\end{remark}
We have the following theorem: 
	\begin{theorem}\label{maintheorem}
        Let us define $s:=\sin(\alpha+\beta)\neq 0$, and $r:=\sin(\alpha-\beta)$. 
        \begin{enumerate}
        \item\label{conti_linear} If $0<|r|<1$, then the continuously linear spreading occurs: for any $a<b\in \mathbb{R}$,  
        	\begin{equation}\label{eq:int_g}
                \lim_{n\to\infty}\sum_{a\leq j/n \leq b} \nu_n(j) = \int_a^b g(y)dy
                \end{equation} 
        Here the density of $F(y)$ is 
        	\begin{equation}\label{g} 
                g(y)=\frac{2s^2}{r^2}y^2 f_K(y;|r|) \times \begin{cases} \bs{1}_{[0,\infty)}(y) & \text{: $sr>0$, } \\ \bs{1}_{(-\infty,0]}(y) & \text{: $sr<0$, } \end{cases} 
                \end{equation}
        where $f_K(x;|r|)$ is the Konno function with the parameter $|r|$. 
        \item\label{ballistic} If $|r|= 1$, then the ballistic spreading occurs: 
                \begin{equation}%\label{ballistic2}
                \lim_{n\to\infty}\sum_{-\infty \leq j/n \leq y} \nu_n(j) 
                	= s^2 {\sf H}(y-\sgn(rs)) 
                \end{equation}
	which implies the density of $F(y)$ is $s^2\delta(y-\sgn(rs))$. Moreover it holds that 
                \begin{equation}\label{ballistic2}
                \lim_{n\to\infty}\nu_n\left(\sgn(rs)(n-j)\right)=s^2\; \delta_0(j). 
                \end{equation}
        \item\label{localization} If $|r|=0$, then localization occurs: 
        	\begin{equation} \label{localization2}
                \lim_{n\to\infty}\nu_{2n}(j)
                	=
                        \begin{cases}
                        \frac{s^2}{\pi^2}\frac{4}{(j^2-1)^2} & \text{: $j$ is even,} \\
                        0 & \text{: otherwise. }
                        \end{cases}
                \end{equation} 
                \begin{equation}\label{eq:odd}
                \lim_{n\to\infty}\nu_{2n+1}(j)
                	=
                        \begin{cases}
                        \frac{s^2}{4} & \text{: $j\in \{\pm 1\}$,} \\
                        0 & \text{: otherwise. }
                        \end{cases}
                \end{equation}
        \end{enumerate}
        Here $f_K$ is the density of the Konno distribution~\cite{Konno,Konno2} with the parameter $p\in(0,1)$ denoted by 
	\begin{equation}
	f_K(x;p)=\frac{\sqrt{1-p^2}}{\pi (1-x^2)\sqrt{p^2-x^2}} \bs{1}_{(-p,p)}(x). 
	\end{equation}
        \end{theorem}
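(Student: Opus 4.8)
The plan is to reduce all three cases to the single scalar quantity $\hat\psi_n(k)=\hat\varphi_{n,1}(0;k)$, the $y$-Fourier transform of the self-loop amplitude, and to recognise it as a moment of the CMV spectral measure. First I would trace the self-loop vector through the unitary equivalences of Theorem~1. The initial state $\varphi_0=\delta(\bs{x}){}^T[0,1]$ is carried by $\Lambda_k$ to the cyclic vector $\delta_0\in\ell^2(\mathbb{Z}_+)$ (the phase $\omega(0)=0$ drops out), and the observable at the self-loop on $(0,j)$ is again the $\langle 1|$-component at $x=0$, where $\Lambda_k^{-1}$ contributes only $e^{-i\omega(0)}=1$. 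Hence
\[ \hat\psi_n(k)=\big(({}^T\mathcal{C}_k)^n\big)_{0,0}=(\mathcal{C}_k^n)_{0,0}=\int_{0}^{2\pi}e^{in\theta}\,d\mu_k(\theta), \]
where $\mu_k$ is the orthogonality measure of $\mathcal{C}_k$ given explicitly in Lemma~\ref{spectrumCMV} with $\eta=\eta(k)$. Splitting $d\mu_k=w_k(\theta)\tfrac{d\theta}{2\pi}+m_0(k)\,\delta(\theta-\theta_0(k))$ isolates the absolutely continuous (bulk) part $A_n(k):=\int e^{in\theta}w_k(\theta)\tfrac{d\theta}{2\pi}$ from the point-mass (edge) part $m_0(k)e^{in\theta_0(k)}$.

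The engine for all cases is the characteristic-function identity obtained by Parseval in the $y$-variable,
\[ \sum_{j\in\mathbb{Z}}e^{i\xi j/n}\nu_n(j)=\int_0^{2\pi}\overline{\hat\psi_n(k)}\,\hat\psi_n(k+\xi/n)\,\frac{dk}{2\pi}. \]
I would insert the bulk/edge splitting and argue that every term containing $A_n$ is negligible: since $w_k(\theta)$ has only integrable inverse-square-root singularities at the band edges $\theta=\pm\theta_c(k)$, a Riemann--Lebesgue/stationary-phase estimate yields $A_n(k)=O(n^{-1/2})$ with enough uniformity that the pure-bulk term is $O(n^{-1})$, while the bulk--edge cross terms carry an extra oscillation $e^{in\theta_0}$ with no stationary point in $\theta$ (because $\theta_0$ lies in a gap, $Ed\cap Bu=\emptyset$) and also vanish. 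This is the step I expect to be the main obstacle, as it demands uniform control of the oscillatory integrals near the band edges and near $k=\pi/2,3\pi/2$, where $\theta_0$ is discontinuous, $\theta_0(k)$ meets the edge of the band, and $m_0(k)\to 0$ simultaneously. Granting this, the limit reads
\[ \lim_{n\to\infty}\sum_j e^{i\xi j/n}\nu_n(j)=\int_0^{2\pi}m_0(k)^2\,e^{i\xi\,\theta_0'(k)}\,\frac{dk}{2\pi}=\int e^{i\xi y}g(y)\,dy, \]
so the weak limit $g$ is the pushforward of $m_0(k)^2\tfrac{dk}{2\pi}$ under the group velocity $v(k)=\theta_0'(k)$, and Lévy continuity for sub-probability measures gives the integrated statement (\ref{eq:int_g}).

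For case~\ref{conti_linear} I would carry out this pushforward explicitly. Differentiating $\theta_0$ in (\ref{theta_0}) gives $v(k)=\pm r\cos k/\sqrt{1-r^2\sin^2 k}$, whose sign is constant and equal to $\sgn(rs)$ across all branches; this is precisely the unidirectionality of the edge state and explains the one-sided indicator in (\ref{g}). A direct computation yields the clean relation $m_0(k)^2=(s^2/r^2)\,v(k)^2$, which pulls the factor $y^2$ out front. Changing variables from $k$ to $v$, using $1-r^2\sin^2 k=(1-r^2)/(1-v^2)$ and $\sin^2 k=(r^2-v^2)/\big(r^2(1-v^2)\big)$, the four $k$-preimages of each value $v$ combine to give $\int_0^{2\pi}\delta(y-v(k))\tfrac{dk}{2\pi}=2f_K(y;|r|)$, whence $g(y)=(2s^2/r^2)\,y^2 f_K(y;|r|)$ on the half-line selected by $\sgn(rs)$, which is exactly (\ref{g}).

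The remaining cases are degenerations of the same formula. For case~\ref{ballistic} ($|r|=1$) one has $v(k)\equiv\sgn(rs)$ and, from (\ref{mass}), $m_0(k)\equiv|s|$, so the pushforward collapses to $s^2\delta(y-\sgn(rs))$; since then $\hat\psi_n(k)\approx|s|e^{in\theta_0(k)}$ with $\theta_0(k)$ affine in $k$ is a pure translation along the boundary, a local limit theorem upgrades this to the sharp statement (\ref{ballistic2}). For case~\ref{localization} ($r=0$) the dispersion is flat, $\theta_0(k)\in\{0,\pi\}$ with $v\equiv 0$, so there is no spreading and instead $\hat\psi_n(k)$ converges pointwise: the even step count gives $e^{i\cdot 2n\theta_0(k)}=1$, so $\hat\psi_{2n}(k)\to m_0(k)=|s\cos k|$, while the odd step count flips the sign on $\{\theta_0=\pi\}$ and yields $\hat\psi_{2n+1}(k)\to -s\cos k$. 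Inverting the $j$-Fourier transform then reproduces the explicit profiles: the Fourier coefficients of $|s\cos k|$ give $\nu_{2n}(j)\to(s^2/\pi^2)\,4/(j^2-1)^2$ at even $j$, and those of $-s\cos k$ give $\nu_{2n+1}(j)\to s^2/4$ at $j=\pm1$, completing (\ref{localization2}) and (\ref{eq:odd}).
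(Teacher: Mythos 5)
Your proposal is correct and follows essentially the same route as the paper's proof: identify $\hat{\psi}_n(k)$ with $(\mathcal{C}_k^n)_{0,0}$ via the Karlin--MacGregor-type formula, retain only the point-mass (edge) term by Riemann--Lebesgue, pass to the characteristic function $\phi_n(\xi/n)=\int_0^{2\pi}\overline{\hat{\psi}_n(k)}\,\hat{\psi}_n(k+\xi/n)\,\frac{dk}{2\pi}$, and push forward $m_0^2(k)\,\frac{dk}{2\pi}$ under the group velocity $v(k)=\theta_0'(k)$, with the Konno density emerging from the identical change of variables (your delta-function pushforward with four preimages is the paper's splitting into four monotone pieces) and the degenerate cases $|r|\in\{0,1\}$ settled by explicit inverse Fourier computations exactly as in the paper. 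The only substantive remark is that the uniformity issue you flag for the $o(1/\sqrt{n})$ error near the band edges and near $k=\pi/2,3\pi/2$ is real but is not addressed in the paper's proof either, which applies Riemann--Lebesgue pointwise in $k$ and exchanges limit and integral without comment.
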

The dispersion relation classes of the limit theorem in the parameter setting of (\ref{localization}) in Theorem~\ref{maintheorem}, 
that is, $|r|=|\sin(\alpha-\beta)|=0$, 
is $(\epsilon_1,\epsilon_2,0)$, ($\epsilon_j\in \{\pm\}$ ($j=1,2$)). 
In this case, we have a wave number independent dispersion relation, that is, $\theta_0(k)/dk=0$ for all $k$. 
On the other hand, the dispersion relation classes of the limit theorem in the parameter setting of (\ref{ballistic}), 
that is, $|r|=|\sin(\alpha-\beta)|=1$, is a special class of $(-, \epsilon_2, \epsilon_3)$, ($\epsilon_j\in \{\pm\}$ ($j=2,3$)). 
In this case, we have the linear dispersion relation for the edge state, that is, $|\theta_0(k)/dk|=1$ for all $k$. 
The the wave number independence and linearity of the edge state are reflected in the stochastic behaviors of the quantum walk on the boundary as localization and ballistic spreading, respectively. 

\begin{proof}
The $n$-th iteration of the CMV matrix is expressed by a ``quantum" version of the formula corresponding to the Karlin MacGregor formula for random walks: 
	\begin{equation}\label{KM} 
        (\mathcal{C}^n)_{i,j}=\int_{|z|=1} z^n \overline{\chi_i(z)}\chi_j(z) d\mu(z). 
        \end{equation}
Concerning the one-to-one correspondence between the canonical bases of
 $\mathcal{C}$ and the Type-I quantum walk described by Lemma~\ref{CMV}, 
	$\mathcal{C}^n_{0,0}$ expresses the complex-valued amplitude at return to the origin at time $n$, that is, 
        \begin{equation}\label{returnamplitude}
        \mathcal{C}^n_{0,0}=\hat{\psi}_n(k). 
        \end{equation}
Combining (\ref{KM}) with (\ref{returnamplitude}) and Lemma~\ref{spectrumCMV}, we have 
	\begin{equation}
        \hat{\psi}_n(k)=\int_{0}^{2\pi}  e^{in\theta} \left(\frac{w(\theta)}{2\pi}+m_0(k)\delta(\theta_0(k)-\theta) \right) d\theta.
        \end{equation}
The Riemann and Lebesgue lemma implies
	\begin{equation}\label{assymptotic}
        \hat{\psi}_n(k)=e^{in\theta_0(k)} m_0(k)+o(1/\sqrt{n}). 
        \end{equation}
The assumption that $s\neq 0$ ensures $m_0(k)\neq 0$ for almost every $k$. 
%Thus if $s=0$, then $\hat{\psi}_n(k)\to 0$ which implies $F=0$.
We define the Fourier transform of $\nu_n$ by $\phi_n(\xi):=\sum_{j\in \mathbb{Z}} \nu_n(j)e^{i\xi j}$ $(\xi \in \mathbb{R})$. 
Here explicit expression of $\theta_0(k)$ and $m_0(k)$ are described by (\ref{mass}) and (\ref{theta_0}), respectively. 
Now we will prove our statement for each case. \\
%%%%%%%%%%%%%
{\bf Proof of part (\ref{conti_linear}).}
It holds that
	\begin{equation}\label{integration0}
        \phi_n(\xi)=\int_{0}^{2\pi} \overline{\hat{\psi}_n(k)}\;\hat{\psi}_n(k+\xi) \frac{dk}{2\pi}. 
        \end{equation} 
Under the assumption $\sin(\alpha-\beta)\sin(\alpha+\beta)\neq 0$, then using the expression (\ref{assymptotic}), we obtain
	\begin{equation}\label{integration}
        \phi_n(\xi/n)=\int_{0}^{2\pi} e^{i\xi v(k)}m_0^2(k)\frac{dk}{2\pi} + o(1/\sqrt{n}), 
        \end{equation}
where $v(k)=d\theta_0(k)/dk$:
	\begin{align}\label{v(k)expression} 
        \frac{d\theta_0}{dk} = \mathrm{sgn}(\sin(\alpha-\beta)\sin(\alpha+\beta))|\sin(\alpha-\beta)|\frac{|\cos k|}{\sqrt{1-\sin^2(\alpha-\beta)}\sin^2 k}. 
        \end{align}
Since 
	\[ \frac{d}{dk}\left(\frac{\cos k}{\sqrt{1-r^2\sin^2 k}}\right)
        	= -\frac{\sin k \cos^2(\alpha-\beta)}{(1-\sin^2(\alpha-\beta)\sin^2k)^{3/2}}, \]
we have $0\leq v(k)\leq |r|$ $(rs>0)$ and $-|r|\leq v(k)\leq 0$ $(rs<0)$. 
We consider $rs>0$ case. 
For $0\leq y \leq |r|$, due to the assumption $|r|<1$, the solutions for $y=v(k)$ with respect to $k$ are $k_1<k_2<k_3<k_4$, which satisfy 
	\begin{equation}\label{sin k}
        \sin^2k_j=\frac{r^2-y^2}{r^2(1-y^2)}\;\;(j=1,2,3,4)
        \end{equation}
which implies $0<k_1<\pi/2<k_2<\pi<k_3<3\pi/2<2\pi$, and $k(y):=k_1$, $k_2=\pi-k(y)$, $k_3=\pi+k(y)$ and $k_4=2\pi-k(y)$. 
We divide the integration (\ref{integration}) into the four parts: $\int_0^{\pi/2}+\int_{\pi/2}^{\pi}+\int_{\pi}^{3\pi/2}+\int_{3\pi/2}^{2\pi}$. 
By taking $x=v(k)$ together with (\ref{sin k}), we have 
	\begin{equation}\label{konno1} 
        \frac{dv(k)}{dk}=-\frac{(1-y^2)\sqrt{r^2-y^2}}{\sqrt{1-r^2}}=\frac{1}{\pi f_K(y;r)}. 
        \end{equation}
and 
	\begin{equation}\label{massfanc} 
        m_0^2(k)=\frac{s^2}{r^2}y^2. 
        \end{equation}
Then it holds 
        \begin{align} 
        I_1(\xi) &:= \int_{0}^{\pi/2} e^{i\xi v(k)} m_0^2(k) \frac{dk}{2\pi} \\
        	 &= \int_{0}^{|r|} e^{i\xi y} \frac{s^2}{r^2}y^2 \frac{1}{2}f_K(y;r) dy. 
        \end{align}
In a similar fashion, we have $I_1(\xi)=I_2(\xi)=I_3(\xi)=I_4(\xi)$. 
Therefore we can rewrite (\ref{integration}) by 
	\begin{equation}
        \phi(\xi):=\lim_{n\to\infty}\phi(\xi/n)=\int_{0}^{|r|} e^{i\xi y} \frac{s^2}{r^2}y^2f_K(y;r) dy. 
	\end{equation}
In the same way, for $rs<0$ case we have 
	\begin{equation}
        \lim_{n\to\infty}\phi(\xi/n)=\int_{-|r|}^{0} e^{i\xi y} \frac{s^2}{r^2}y^2f_K(y;r) dy. 
	\end{equation}
Since $\phi(\xi)$ is continuous at $\xi=0$, then the continuity theorem implies that, for any $a<b\in \mathbb{R}$, 
	\[ \lim_{n\to\infty} \sum_{a\leq j/n \leq b} \nu_n(j)=\int_a^b g(y)dy.  \;\;\;\square\] 
%%%%%%%%%%%%%
\\{\bf Proof of part (\ref{ballistic}).}
Since $|r|=1$, then $m_0(k)$ and $\theta_0(k)$ are reduced to $m_0(k)=|s|$, and 
	\[ \theta_0(k)=
        \begin{cases}  
        \pi+k & \text{: $(\sgn(s),\sgn(r))=(+,+)$,} \\ 
        \pi-k & \text{: $(\sgn(s),\sgn(r))=(+,-)$,} \\
        2\pi-k & \text{: $(\sgn(s),\sgn(r))=(+,-)$,} \\
        k & \text{: $(\sgn(s),\sgn(r))=(-,-)$,} 
        \end{cases}
        \]
respectively. 
From (\ref{assymptotic}) and (\ref{integration0}), the Fourier transform of $\nu_n$ is expressed by 
	\[ \phi_n(\xi/n)\sim s^2\begin{cases} e^{i\xi} & \text{: $(\sgn(s),\sgn(r))\in \{(+,+),(-,-)\}$} \\ 
        e^{i\xi} & \text{: $(\sgn(s),\sgn(r))\in \{(+,-),(-,+)\}$}.  \end{cases} \]
Therefore $F(y)=s^2\bs{1}_{[1,\infty)}(y)$ holds. 
We set $\tau(j):=(U^{2n}\psi_0)(a_j)$. 
Taking the inverse Fourier transform with insertion of these values into $m_0(k)$ and $\theta_0(k)$ in (\ref{assymptotic}), we have 
\begin{align} 
        \tau(n-j) &:= (U^{2n}\psi)(a_{n-j})\sim \int_{0}^{2\pi}e^{in\theta_0(k)}m_0(k) e^{-i(n-j)k} \frac{dk}{2\pi} \\
        	&= (-1)^n|s|\delta_0(j)
        \end{align}
for the $(\sgn(s),\sgn(r))=(+,+)$ and $(-,-)$ cases. 
In the same way, we have the $\tau(-n+j) \sim (-1)^n|s|\delta_0(j)$ $(\sgn(s),\sgn(r))=(+,-)$ and $(-,+)$ cases. $\square$
%%%%%%%%%%%%%
\\{\bf Proof of part (\ref{localization}).}
Let $B_1=(-\pi/2,\pi/2)\;\mathrm{mod}(2\pi)$, and $B_2=(\pi/2,3\pi/2)$. Since $r=0$, $\theta_0(k)$ is flat, that is, by (\ref{theta_0}), 
	\[ \theta_0(k)= \begin{cases} \pi & \text{: $k\in B_1$} \\ 0 & \text{: $k\in B_2$} \end{cases} \]
for $s>0$, and 
	\[ \theta_0(k)= \begin{cases} 0 & \text{: $k\in B_1$} \\ \pi & \text{: $k\in B_2$} \end{cases}\]
for $s<0$. Moreover, $m_0(k)$ is reduced to $|s||\cos k|$. 
Then we have 
	\[ \hat{\psi}_n(k)=|s|\times \begin{cases} (-1)^n|\cos k| & \text{: $k\in B_1$} \\ |\cos k| & \text{: $k\in B_2$} \end{cases} \;\; (s>0)\]
        \[ \hat{\psi}_n(k)=|s|\times \begin{cases} |\cos k| & \text{: $k\in B_1$} \\ (-1)^n|\cos k| & \text{: $k\in B_2$} \end{cases} \;\; (s<0)\]
For $s>0$, taking the inverse Fourier transform to $\hat{\psi}_n(k)$, we have 
	\begin{align}
        \tau_n(j) &\sim \int_{0}^{2\pi} \hat{\psi}_n(k) e^{-ikj} \frac{dk}{2\pi} \\
        	&= |s| \left\{(-1)^n\int_{k\in B_1} \cos k\; e^{-ikj} \frac{dk}{2\pi}+\int_{k\in B_2} |\cos k|\; e^{-ikj} \frac{dk}{2\pi}\right\}
        \end{align}
We have 
	\begin{align*} 
        J_1(j) &:= \int_{k\in B_1} \cos k\; e^{-ikj} \frac{dk}{2\pi} 
        	= 
                \begin{cases}
                \frac{(-1)^{j/2}}{\pi} \frac{1}{1-j^2} & \text{: $j$ is even, } \\
                1/4 & \text{: $j\in \{\pm 1\}$, } \\
                0 & \text{: otherwise; } 
                \end{cases} \\
        J_2(j) &:=\int_{k\in B_2} |\cos k|\; e^{-ikj} \frac{dk}{2\pi} 
        	= \begin{cases}
                \frac{(-1)^{j/2}}{\pi} \frac{1}{1-j^2} & \text{: $j$ is even, } \\
                -1/4 & \text{: $j\in \{\pm 1\}$, } \\
                0 & \text{: otherwise. } 
                \end{cases}
        \end{align*}
Then it holds that 
	\[ \tau_n(j)\sim |s|((-1)^nJ_1(j)+J_2(j))=
        	\begin{cases}
                \frac{1+(-1)^n}{2} \;\; 2|s|(-1)^{j/2}/(\pi(1-j^2)) & \text{: $j$ is even, } \\
                \frac{1+(-1)^{n-1}}{2}\;\; |s|/2 & \text{: $j\in \{\pm 1\}$, } \\
                0 & \text{: otherwise. } 
                \end{cases}  \]
In the same way for $s<0$, we obtain the same expression for $\tau_n(j)$. Taking the square modulus to $\tau_n(j)$, we obtain the desired conclusion. 
\end{proof}

The group-velocity of the edge state is $v(k)=d\theta_0(k)/dk$, and the effective mass of the edge state is $M(k)=|1/(d^2\theta_0^2(k)/dk^2)|$. 
The limit distribution is expressed by the above physical quantities: 
\begin{corollary}\label{parameterexpression}
Let $v(k)$, $M(k)$ be the above. 
The density function of the limit distribution for $0<|\sin(\alpha-\beta)|<1$ is denoted by $g(\cdot)$. 
Then we have the following parametric plot of the density function:
	\[ \{ (y,g(y)): y\in \mathbb{R} \}=\{ (v(k), 2m_0^2(k)M(k))/\pi: k\in [0,2\pi) \} \]
Here $m_0(k)$ is the density of the edge state at $k\in[0,2\pi)$ defined by (\ref{mass}).
\end{corollary}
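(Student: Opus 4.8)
The plan is to read the statement off directly from the intermediate identities already established in the proof of Theorem~\ref{maintheorem}~(\ref{conti_linear}), reinterpreting them as a change of variables between the wave number $k$ and the group velocity $y=v(k)$. The only genuinely new ingredient is to recognize the quasi-effective mass $M(k)$ as the reciprocal of $dv/dk$, so that the Jacobian appearing implicitly in the limit theorem is precisely $M(k)$. First I would recall from (\ref{g}) that the density is $g(y)=(2s^2/r^2)\,y^2\,f_K(y;|r|)$, supported on $[0,|r|]$ when $rs>0$ and on $[-|r|,0]$ when $rs<0$. Since $v(k)=d\theta_0(k)/dk$ by (\ref{v(k)expression}), differentiating once more gives $dv/dk=d^2\theta_0/dk^2$, whence $M(k)=|1/(d^2\theta_0/dk^2)|=|dk/dv|$.

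Next I would invoke the two computations already carried out in the proof of part~(\ref{conti_linear}): the change-of-variables identity (\ref{konno1}), which after taking absolute values yields $|dv/dk|=1/(\pi f_K(v(k);|r|))$, equivalently $M(k)=\pi f_K(v(k);|r|)$; and the mass identity (\ref{massfanc}), which gives $m_0^2(k)=(s^2/r^2)\,v(k)^2$. Substituting both into the second coordinate of the parametric curve produces
\[
\frac{2\,m_0^2(k)\,M(k)}{\pi}=\frac{2s^2}{r^2}\,v(k)^2\,f_K(v(k);|r|)=g(v(k)),
\]
so that every point $(v(k),\,2m_0^2(k)M(k)/\pi)$ lies on the graph of $g$, giving one inclusion.

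Finally, to promote this inclusion to equality of the two sets, I would check that $v$ is surjective onto the support of $g$. This follows from (\ref{v(k)expression}) together with the monotonicity of $\cos k/\sqrt{1-r^2\sin^2 k}$ noted in the proof of part~(\ref{conti_linear}): on each of the four subintervals of $[0,2\pi)$ determined by the solutions $k_1<k_2<k_3<k_4$ of $y=v(k)$, the map $v$ is a homeomorphism onto $[0,|r|]$ (or $[-|r|,0]$), so the parametric curve sweeps out the entire graph of $g$. The main obstacle is purely bookkeeping, namely tracking the sign of $dv/dk$ on the four branches where $v$ is four-to-one; but because $M(k)$ is defined with an absolute value and the corollary concerns only the image set, the non-injectivity of $v$ and the branch-dependent sign of $dv/dk$ cause no difficulty. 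I expect the entire argument to be short, its substance being a single substitution once $M(k)=\pi f_K(v(k);|r|)$ is identified.
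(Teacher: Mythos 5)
Your proposal is correct and follows essentially the same route as the paper, which treats the corollary as an immediate consequence of the identities (\ref{konno1}) and (\ref{massfanc}) from the proof of Theorem~\ref{maintheorem}~(\ref{conti_linear}), with $M(k)=|dk/dv|=\pi f_K(v(k);|r|)$ supplying the Jacobian. Your explicit check that $v$ sweeps each of the four monotone branches onto the support of $g$ (and your taking absolute values in (\ref{konno1}), where the paper's displayed sign is inconsistent) only makes explicit what the paper leaves implicit.
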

Now let's consider the following situation. Assume that we are given the distribution $\nu_n$ ($n>>1$) in advance, which has been
obtained by the measurement of the quantum state $\psi_n$; which is the $n$-th iteration of AE model with the initial state $\psi_0$. 
Here the parameters that determine the AE model; $r=\sin(\alpha-\beta)$ and $s=\sin(\alpha+\beta)$, are not known by us. 
In the rest of this section, we will try to estimate the group velocity $v(k)$ using only this $\nu_n$. 
First, in the following Corollary, we present a means of knowing $v(k)$ using the limit distribution of $\nu_n$; $g(y)$. 
Secondly, we demonstrate a means of estimating $\nu_n$ using this Corollary (see Fig~\ref{fig:Corollary4}). 
\begin{corollary}\label{parametricexpressionCor}
Let $g(\cdot)$, $r$ be the above and $C_0$ be $\int_{-\infty}^\infty g(y)dy$. 
Then the group velocity $v(k)$ satisfies the following: 
	\begin{align} 
        g(v(k))\frac{d v(k)}{dk} &= \frac{2s^2/r^2}{\pi}\;v^2(k), \label{diff}\\   
        \int_{0}^{2\pi} |v(k)| \frac{dk}{2\pi} &= \frac{2\arcsin |r|}{\pi}%2\int_{0}^\infty yf_K(y;|r|)dy. 
        \label{boundary_condition}
        \end{align}
Here $s^2$ is expressed as
	\begin{equation}\label{s^2} s^2=\frac{r^2}{1-\sqrt{1-r^2}}C_0. \end{equation}
\end{corollary}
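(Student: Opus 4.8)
The plan is to verify the three displayed identities directly, since no new machinery is required beyond the explicit limit density $g$ in (\ref{g}) and the two Jacobian-type relations (\ref{konno1}) and (\ref{massfanc}) already established in the proof of Theorem~\ref{maintheorem}. The whole corollary is then a short sequence of computations built on the same $k\mapsto v(k)$ change of variables used there, where $v(k)=d\theta_0(k)/dk$ is the explicit group velocity (\ref{v(k)expression}).

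For the differential equation (\ref{diff}) I would substitute $y=v(k)$ into $g$. By (\ref{g}) one has $g(v(k))=\tfrac{2s^2}{r^2}\,v^2(k)\,f_K(v(k);|r|)$, the indicator in (\ref{g}) being automatically met because $v(k)$ carries the sign $\sgn(rs)$ by (\ref{v(k)expression}). Multiplying by $dv(k)/dk$ and invoking $f_K(v(k);|r|)\,\tfrac{dv(k)}{dk}=\tfrac1\pi$ from (\ref{konno1}) collapses the left-hand side to $\tfrac{2s^2/r^2}{\pi}v^2(k)$, which is (\ref{diff}). The one point that needs care is sign bookkeeping: $v(k)$ is four-to-one on $[0,2\pi)$ and $dv/dk$ changes sign across $k=\pi/2,\pi,3\pi/2$, so (\ref{diff}) is to be read branch-wise on an interval of monotonicity, e.g.\ $k\in(0,\pi/2)$ where (\ref{konno1}) was derived, and then extended by the symmetry of $v$.

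For the normalization (\ref{boundary_condition}) I would compute the integral outright. From (\ref{v(k)expression}), $|v(k)|=|r|\,|\cos k|/\sqrt{1-r^2\sin^2 k}$, and using the $\pi/2$-symmetry of the integrand together with $u=\sin k$,
\begin{align*}
\int_0^{2\pi}|v(k)|\,\frac{dk}{2\pi}
=\frac{|r|}{2\pi}\cdot 4\int_0^{\pi/2}\frac{\cos k}{\sqrt{1-r^2\sin^2 k}}\,dk
=\frac{2|r|}{\pi}\int_0^{1}\frac{du}{\sqrt{1-r^2u^2}}
=\frac{2\arcsin|r|}{\pi},
\end{align*}
which is exactly (\ref{boundary_condition}).

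Finally, for the constant I would integrate $g$ to obtain $C_0$. Since $y^2 f_K(y;|r|)$ is even and the indicator in (\ref{g}) restricts the support to one half-line, $C_0=\int_{-\infty}^{\infty} g(y)\,dy=\tfrac{2s^2}{r^2}\int_0^{|r|}y^2 f_K(y;|r|)\,dy$. The remaining integral is elementary after $y=|r|\sin t$: writing $\tfrac{r^2\sin^2 t}{1-r^2\sin^2 t}=-1+\tfrac{1}{1-r^2\sin^2 t}$ and using $\int_0^{\pi/2}\tfrac{dt}{1-r^2\sin^2 t}=\tfrac{\pi}{2\sqrt{1-r^2}}$ gives $\int_0^{|r|}y^2 f_K(y;|r|)\,dy=\tfrac12(1-\sqrt{1-r^2})$, hence $C_0=\tfrac{s^2}{r^2}(1-\sqrt{1-r^2})$. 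Solving for $s^2$ yields the stated identity. The main obstacle throughout is not any single integral but keeping the sign convention of $v(k)$, the support of $g$, and the orientation of the change of variables mutually consistent across the three parts.
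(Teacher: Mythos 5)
Your proposal is correct, and its backbone for (\ref{diff}) is the same as the paper's: the paper derives (\ref{diff}) by combining Corollary~\ref{parameterexpression} with (\ref{massfanc}), which, once Corollary~\ref{parameterexpression} is unwound (it encodes $g(v(k))=2m_0^2(k)M(k)/\pi$ with $M(k)=|dk/dv|$), is exactly your substitution of $y=v(k)$ into (\ref{g}) followed by (\ref{konno1}). You differ in two self-contained ways. For (\ref{boundary_condition}) the paper splits $[0,2\pi)$ into the four monotonicity branches of $v$, changes variables to $y$ via (\ref{konno1}), and evaluates $\int_0^{|r|} y\,f_K(y;r)\,dy=\arcsin|r|/\pi$; you instead integrate $|v(k)|=|r||\cos k|/\sqrt{1-r^2\sin^2 k}$ directly with $u=\sin k$, which is more elementary and sidesteps the Konno density entirely (and implicitly corrects a typo in (\ref{v(k)expression}), where the factor $\sin^2 k$ should sit inside the square root). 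For (\ref{s^2}) the paper simply cites \cite{Konno} for the second moment $1-\sqrt{1-r^2}$ of $f_K$, while you rederive it via $y=|r|\sin t$; your computation checks out. Your insistence on reading (\ref{konno1}) branch-wise with careful sign bookkeeping is also warranted: as printed in the paper, the middle expression of (\ref{konno1}) is negative while $1/(\pi f_K)$ is positive, so the identity only holds per monotonicity interval up to sign, exactly as you flag.
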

The second equation is a kind of boundary condition of the above differential equation. 
The above Corollary~\ref{parametricexpressionCor} suggests that we can estimate the underlying spectral information 
of the quantum system from information obtained by observation. 
\begin{proof}
Since the second moment of $f_K(x;r)$ is $1-\sqrt{1-r^2}$~\cite{Konno} and $f_K(x;|r|)$ is an even function, 
then the total mass $C_0$ can be computed explicitly as (\ref{s^2}). (\ref{diff}) comes directly from 
Corollary~\ref{parameterexpression} and (\ref{massfanc}). Concerning the expression of $v(k)$ in (\ref{v(k)expression}), 
we divide the integration into four parts $\int_0^{\pi/2}+\int_{\pi/2}^{\pi}+\int_{\pi}^{3\pi/2}+\int_{3\pi/2}^{2\pi}$ such that 
$v(k)$ is a monotonic function in each region. 
By (\ref{konno1}), 
	\[ \int_{m \pi/2}^{(m+1)\;\pi/2} |v(k)|\frac{dk}{2\pi}
        	=\frac{1}{2}\int_{0}^{|r|} y f_K(y;r)dy,\;\;(m=0,1,2,3). \]
The RHS can be expressed as $\arcsin|r|/(2\pi)$. 
Thus we have (\ref{boundary_condition}). 
\end{proof}
Regarding the explicit expression for $g(x)$ in (\ref{g}), we also notice from this linear differential equation for the group-velocity $v(k)=d\theta_0/dk$ that 
the group velocity of the edge state is expressed by an inverse of Konno's distribution $F_K(y;r)=\int^y f_K(x;r)dx$, that is, 
	\[ v(k)=F_K^{-1}(2s^2 k/(r^2 \pi) + \zeta). \]
Here $\zeta$ satisfies
	\[ \int_{0}^{2\pi} \left|F_K^{-1}(2s^2 k/(r^2\pi)+ \zeta)\right| \frac{dk}{2\pi}= \frac{2\arcsin |r|}{\pi}. \] 
%%%%%%%%%%%%%%%%%%%%%%%%%%%%%%%%%%%%%%%%%%%%

Before closing this section, we compare numerical results for the
probability of the self-loop at $[0,j]$ at the time step $n$, $\nu_n(j)$, 
with Theorem~3. In Figures \ref{fig:linear-spreading} (a) and (b), we 
consider $0<|r|<1$ which gives continuously linear spreading behaviors. 
In Figures \ref{fig:linear-spreading} (a-1) and  (b-1), we see 
that the scaled limit measure $g(y)/n$ in (\ref{g}) runs through the ``middle" of the oscillating $\nu_n(j)$. 
When we take the cumulation of $\nu_n(j)$ to $ny$, that is, $\sum_{j<ny}\nu_n(j)$, 
then in the right figures, we can see that the oscillation is almost
wiped, and the cumulations of $\nu_n(j)$ and the integral of $g(y)$ almost overlap. 
This is a fundamental effect of the convergence in the distribution of Theorem~3.  

Next, we consider the ballistic spreading with
$|r|=1$ and localization with $|r|=0$ in Figure 
\ref{fig:linear-spreading} (c-1). In this case, Theorem~3 and the numerical
results are almost consistent each other. We also confirm that the
numerical result with $|r|=0$ at odd-time steps completely agrees with (\ref{eq:odd}).

Finally, applying Corollary~4, we derive the group velocity $v(k)$ from the numerical result
$\nu_{400}(j)$ in Figure \ref{fig:linear-spreading} as an example. To this end, we need to extract
$g(v(k))$, $r$, and $C_0$ from $\nu_n(j)$. The parameter $r$ is roughly
estimated from Figure \ref{fig:linear-spreading} and
$C_0$ is calculated from $C_0=\sum_{j=0}^{400}\nu_{400}(j)$.
To determine $g(v(k))$, we use the relation $y=j/n=v(k)$ and the fact that
the cumulation of $\nu_n(y)$ shows fewer oscillations, making the numerical fittings stable. 
Assuming that the integral of $g(y)$ is well approximated by a polynomial whose lowest order is third, we have 
\begin{equation}
G(y)=\int_0^y g(y^\prime)
dy^\prime \approx 
\begin{cases} 
\sum_{m=3}^M g_m y^m & \text{: $0\leq y\leq r$,} \\
C_0 & \text{: $y>r$.}
\end{cases}
\label{eq:G(y)}
\end{equation}
We determine the coefficient $g_m$ up to the order $M=5$ by numerically fitting to the
cumulation of $\nu_{n}(j)$ as shown in Figure \ref{fig:Corollary4}(a).
We summarize the extracted values as follows:
\[
 g_3=52.47 \pm 5.64,\  g_4=-483.16\pm54.24,\ g_5=1449.38\pm129.12,\
 r=0.26,\  C_0=0.47456. 
\]
From (\ref{diff}) and (\ref{eq:G(y)}), we obtain
\begin{equation}
 \sum_{m=3}^{M} \frac{m}{m-2}g_m y^{m-2} = \frac{2 s^2/r^2}{\pi} (k-k_0),
\end{equation}
where $k_0$ is a constant of integration. In the case of $M=5$, one of
the general solutions of $y=v(k)$ is guaranteed to be real. Then, we determine $k_0$
such that it satisfies (\ref{boundary_condition}). In figure
\ref{fig:Corollary4}(b), we compare the $|v(k)|$
obtained from the probability $n_{400}(j)$ with the exact solution in 
(\ref{v(k)expression}) and confirm that we can derive $|v(k)|$ from $\nu_n(j)$.

\begin{figure*}[thbp]
\begin{center}
	\includegraphics[width=140mm]{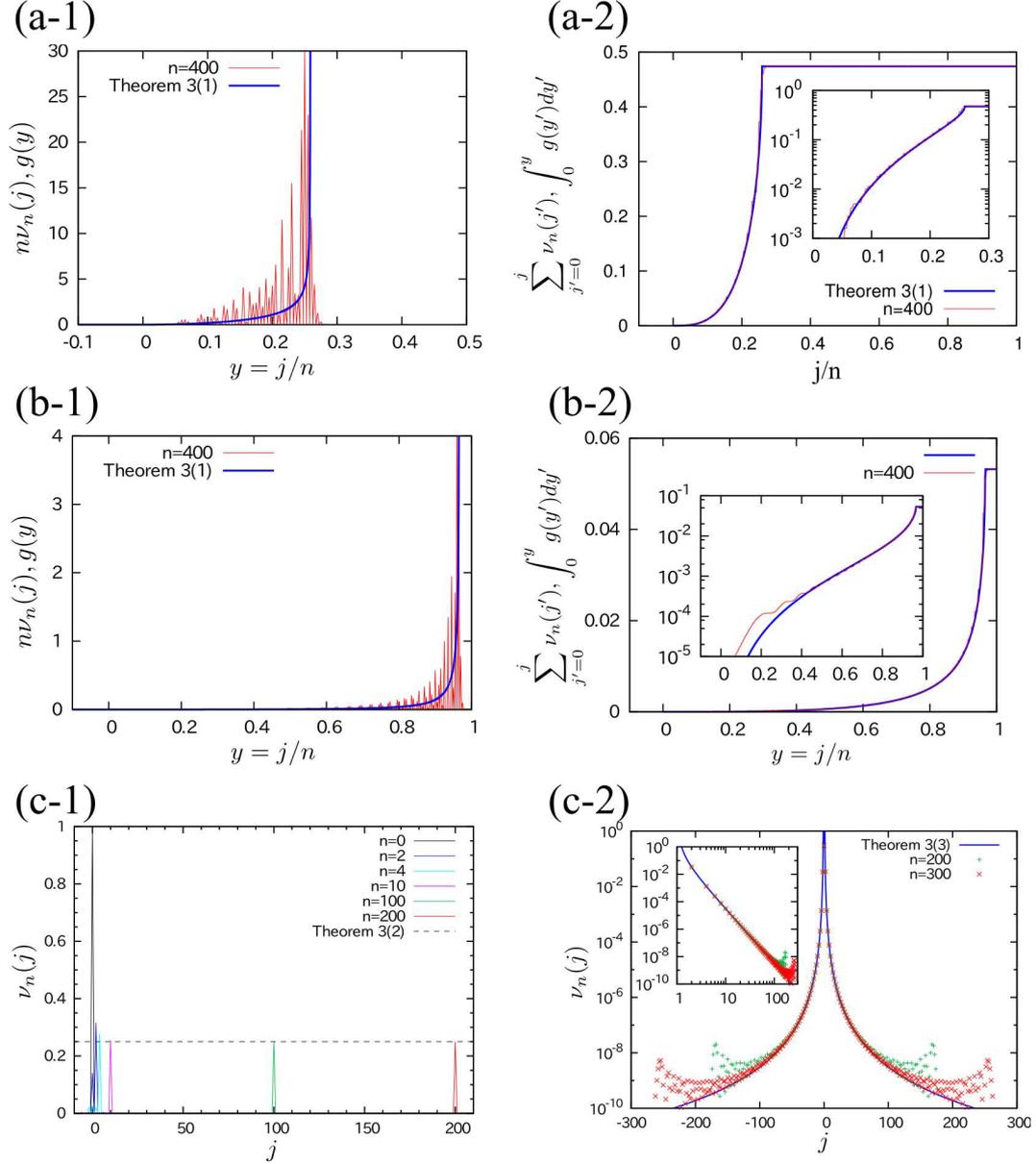}
\end{center}
\caption{ The rescaled measure on self-loops $n \nu_n(j)$ at $n=400$ time steps, as 
 obtained by numerical simulations (a-1) and (b-1), and the cumulative distributions (thin red curves)
 are obtained by (a-2) and (b-2). 
 Figures (a) and (b) are results for the case of $(\alpha,\beta)=(\pi/4, \pi/6)$ whose 
 dispersion relations are given in (a-2) in Figure 4 and 
 the case of $(\alpha,\beta)=(3\pi/4, \pi/6)$, whose 
 dispersion relations are given in (a-3) in Figure 4, respectively.
 The corresponding limit density, $g(y)$, in (\ref{g}) and the distribution in (\ref{eq:int_g}) in
 Theorem~\ref{maintheorem} (1) are shown by thick blue curves. Inset of
 (a-2 and (b-2)): the semi-$log$ plot of the distribution.
(c): The measures on self-loops $n \nu_n(j)$ at various time steps, 
 as obtained by the numerical simulations with (c-1)
 $(\alpha,\beta)=(5\pi/3, \pi/6)$ and (c-2) $(\pi/3,\pi/3)$ 
 with dispersion relations given in (a-2) and (b-2) in
Figure 5, respectively. (c-1) The corresponding limit measures $\lim_{n\rightarrow
 \infty} \nu_n(j)$ in (\ref{ballistic2}) in
 Theorem~\ref{maintheorem} (2) is shown by the dashed line.
(c-2) The corresponding limit measures $\lim_{n\rightarrow
 \infty} \nu_n(j)$ in (\ref{localization2}) in
 Theorem~\ref{maintheorem} (3) is shown by the thick blue curves. Inset
 of (c-2):
 the log-log plot of the main figure.
}
\label{fig:linear-spreading}
\end{figure*}

\begin{figure*}[htbp]
\begin{center}
	\includegraphics[width=160mm]{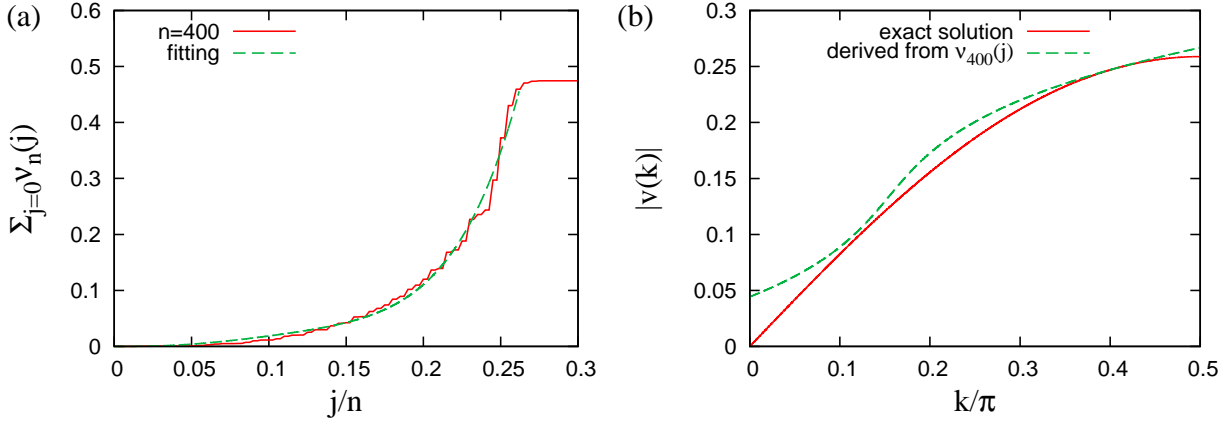}
\end{center}
\caption{
(a) The cumulation of $\nu_{400}(j)$ (solid red curve) for the $(\alpha,\beta)=(\pi/4,\pi/6)$ parameter case
 and $G(y)=\sum_{m=3}^5 g_m y^m$ obtained by the numerical fitting (dashed green curve).
(b) The group velocity $|v(k)|$ derived from  $\nu_{400}(j)$ with $k_0=-0.602$(green
 dashed curve) and the
 exact solution in (\ref{v(k)expression}) (solid red curve).
\label{fig:Corollary4}
}
\end{figure*}

%%%%%%%%%%%%%%%%%%%%%%%%%%%%%%%%%%%%%%%%%%%%
\section{Limit distributions toward bulk}
In the previous section, we showed limit theorems for the finding probability at the self-loops on the boundary. 
In this section, 
we will show how the contribution of the edge states to the behavior of the quantum walk decays toward the bulk. 
To this end, we provide a limit theorem corresponding to Theorem~3 for the finding probability at other arcs 
for $0<|r|<1$ and $s\neq 0$ case. 

We set $\varphi'_n:=\hat{\Gamma}^n_k \varphi'_0$ with $\varphi'_0(j)=\delta(j){}^T[0,1]$, 
where $\hat{\Gamma}: \ell^2(\mathbb{Z}_+;\mathbb{C}^2)\to \ell^2(\mathbb{Z}_+;\mathbb{C}^2)$ is defined in (\ref{Gammahat}). 
Recall that we have shown that $\varphi'_n$, which is the $n$-th iteration of the quantum walk on the Fourier space, 
is expressed by the $n$-th power of the CMV matrix $\mathcal{C}_k$ in (\ref{unitaryequiv2}), that is, 
	\begin{equation*}
        \varphi'_n= \Lambda_k^{-1} ({}^T\mathcal{C}_k)^{n} \Lambda_k\varphi'_0. 
        \end{equation*} 
We have 
	\begin{align}
        \varphi'_n(j) &= \begin{bmatrix}  e^{-iw(2j+1)}(\mathcal{C}_k)_{0,2j+1} \\ e^{-iw(2j)}(\mathcal{C}_k)_{0,2j} \end{bmatrix} \notag \\
        	&= \int_{|z|=1} z^n \begin{bmatrix}  e^{-iw(2j+1)}x_{2j+1}(z) \\ e^{-iw(2j)}x_{2j}(z) \end{bmatrix} d\mu(z), 
        \end{align}
where $x_{j}(z)=\overline{\chi_j(1/\bar{z})}$. 
The first equality derives from the definition of $\Lambda_k$ with $(\Lambda_k\varphi'_0)(j)=\delta(j)$ and 
$({}^T\mathcal{C}_k)_{j,i}=(\mathcal{C}_k)_{i,j}$; the second equality is given by (\ref{KM}). 
This Lemma follows from \cite{CGMV,KS}: 
\begin{lemma}\label{lemKS}
Let $\mathcal{C}$ be the CMV matrix with the Verblunsky parameter $(\gamma,0,\gamma,0,\dots)$. 
Assume $\mathrm{Re}(\gamma)\neq 0$ which is a necessary and sufficient condition for $\sigma_p(\mathcal{C})\neq \emptyset$. 
Then $\sigma_p(\mathcal{C})=\{e^{i\beta}\}$ satisfies (\ref{eigenvalue}) and 
	\begin{align*}
        x_{2j}(e^{i\beta}) = \lambda^j,\; x_{2j+1}(e^{i\beta})=\lambda^{2j+1}, 
        \end{align*} 
where 
	\[ \lambda = \frac{\sgn(\mathrm{Re}(\gamma))}{\rho}\left( \sqrt{1-\mathrm{Im}^2(\gamma)}-|\mathrm{Re}(\gamma)| \right). \]
\end{lemma}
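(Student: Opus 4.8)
The plan is to realize $e^{i\beta}$ as a point mass of the spectral measure $\mu$ and to recover the eigenfunction as the sequence of values of the orthonormal Laurent polynomials at that point. First I would settle the dichotomy: by Lemma~\ref{spectrumCMV}, for $(\gamma,0,\gamma,0,\dots)$ with $|\gamma|<1$ the measure has a pure point part of mass $m_0=|\mathrm{Re}(\gamma)|/\sqrt{1-\mathrm{Im}^2(\gamma)}$ located at $e^{i\beta}$ with $\beta=\theta_0$ given by (\ref{eigenvalue}); since $\sqrt{1-\mathrm{Im}^2(\gamma)}>0$, this mass is positive exactly when $\mathrm{Re}(\gamma)\neq0$, which is the asserted criterion for $\sigma_p(\mathcal{C})\neq\emptyset$. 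I would then use that such a point mass yields an $\ell^2$ eigenvector of $\mathcal{C}$: the indicator $\mathbf 1_{\{e^{i\beta}\}}\in L^2_\mu(\mathbb{T})$ is an eigenvector of multiplication by $z$, and its coefficients in the basis $\{\chi_j\}$ are $m_0\,\overline{\chi_j(e^{i\beta})}=m_0\,x_j(e^{i\beta})$ (using $1/\bar z=z$ on $\mathbb{T}$). Because $\mathcal{C}$ is the matrix of a cyclic unitary, its eigenvalues are simple, so it suffices to produce one $\ell^2$ solution of $\mathcal{C}x=e^{i\beta}x$ normalized by $x_0=1$.

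Second I would specialize $\mathcal{C}$ to these parameters. Since $\eta_{2j}=\gamma$ and $\eta_{2j+1}=0$, the side quantities alternate as $\rho_{2j}=\rho=\sqrt{1-|\gamma|^2}$ and $\rho_{2j+1}=1$, and the five-term matrix degenerates so that every row has exactly two nonzero entries, of one of the two types
\begin{align*}
\overline{\gamma}\,x_{2j-1}+\rho\,x_{2j+2} &= e^{i\beta}x_{2j},\\
\rho\,x_{2j-1}-\gamma\,x_{2j+2} &= e^{i\beta}x_{2j+1},
\end{align*}
the first row being $\overline{\gamma}x_0+\rho x_2=e^{i\beta}x_0$ and the index $x_{-1}$ read as $x_0$ in the second family. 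I would then test the geometric ansatz $x_{2j}=\lambda^{j}$, $x_{2j+1}=\lambda^{j+1}$, so that consecutive entries form the pairs $(\lambda^{j},\lambda^{j})$ and $(\lambda^{j+1},\lambda^{j+1})$; substituting and cancelling the common power of $\lambda$ collapses every row to one of the two scalar relations
\begin{align*}
\overline{\gamma}+\rho\lambda &= e^{i\beta},\\
\rho-\gamma\lambda &= \lambda\,e^{i\beta}.
\end{align*}

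The heart of the matter, and the step I expect to be the main obstacle, is verifying these two relations at the exact point $e^{i\beta}$. Writing $\gamma=\mathrm{Re}(\gamma)+i\,\mathrm{Im}(\gamma)$ and inserting $\sin\beta=-\mathrm{Im}(\gamma)$, $\cos\beta=\sgn(\mathrm{Re}(\gamma))\sqrt{1-\mathrm{Im}^2(\gamma)}$ from (\ref{eigenvalue}), the first relation becomes $e^{i\beta}-\overline{\gamma}=\rho\lambda$: its imaginary part is $\sin\beta+\mathrm{Im}(\gamma)=0$, while its real part is $\cos\beta-\mathrm{Re}(\gamma)=\sgn(\mathrm{Re}(\gamma))(\sqrt{1-\mathrm{Im}^2(\gamma)}-|\mathrm{Re}(\gamma)|)=\rho\lambda$, which is exactly the definition of $\lambda$. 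For the second relation I would rewrite it as $\rho=\lambda(e^{i\beta}+\gamma)$ and note that $e^{i\beta}+\gamma=\cos\beta+\mathrm{Re}(\gamma)=\sgn(\mathrm{Re}(\gamma))(\sqrt{1-\mathrm{Im}^2(\gamma)}+|\mathrm{Re}(\gamma)|)$ is real, so that by a difference of squares
\[
\lambda(e^{i\beta}+\gamma)=\frac{1}{\rho}\left(\sqrt{1-\mathrm{Im}^2(\gamma)}-|\mathrm{Re}(\gamma)|\right)\left(\sqrt{1-\mathrm{Im}^2(\gamma)}+|\mathrm{Re}(\gamma)|\right)=\frac{(1-\mathrm{Im}^2(\gamma))-\mathrm{Re}^2(\gamma)}{\rho}=\rho.
\]
The one algebraic identity underlying both lines, namely $\sgn(\mathrm{Re}(\gamma))\sqrt{1-\mathrm{Im}^2(\gamma)}-\mathrm{Re}(\gamma)=\sgn(\mathrm{Re}(\gamma))(\sqrt{1-\mathrm{Im}^2(\gamma)}-|\mathrm{Re}(\gamma)|)$, is what I would record first.

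Finally I would confirm $x\in\ell^2(\mathbb{Z}_+)$, which both legitimizes the computation and forces the forward solution of the recurrence from $x_0=1$ to agree with the geometric sequence above. Squaring $\sqrt{1-\mathrm{Im}^2(\gamma)}-|\mathrm{Re}(\gamma)|<\rho$ reduces to $\mathrm{Re}^2(\gamma)+\mathrm{Im}^2(\gamma)<1$, i.e.\ $|\gamma|<1$, so $|\lambda|<1$ whenever $\rho>0$; then $\sum_{j\geq0}(|x_{2j}|^2+|x_{2j+1}|^2)=(1+|\lambda|^2)\sum_{j\geq0}|\lambda|^{2j}<\infty$. By simplicity of the eigenvalue the normalization $x_0=1$ pins down $x_{2j}(e^{i\beta})=\lambda^{j}$ and $x_{2j+1}(e^{i\beta})=\lambda\,x_{2j}(e^{i\beta})=\lambda^{j+1}$, completing the proof.
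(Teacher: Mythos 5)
Your proof is correct, but it is not the paper's route: the paper gives no proof of Lemma~\ref{lemKS} at all, simply deferring to \cite{CGMV,KS}. Your argument --- extract the atom $e^{i\beta}$ of the spectral measure from Lemma~\ref{spectrumCMV}, note that the coefficients $m_0\,x_j(e^{i\beta})$ of the indicator of the atom form an $\ell^2$ eigenvector of $\mathcal{C}$, reduce the null-odd-parameter CMV rows to the two scalar identities $\bar\gamma+\rho\lambda=e^{i\beta}$ and $\rho-\gamma\lambda=e^{i\beta}\lambda$, verify these from (\ref{eigenvalue}), and invoke cyclicity/simplicity to pin the eigenvector down --- is self-contained and makes the lemma checkable without consulting the references, which is exactly what it buys over the paper's citation. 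One important point of comparison: what you prove is $x_{2j}(e^{i\beta})=\lambda^{j}$, $x_{2j+1}(e^{i\beta})=\lambda^{j+1}$, which differs from the printed exponent $\lambda^{2j+1}$. Yours is the correct version: the forward recursion from $x_0=1$ determines the eigenvector uniquely, and having $\lambda^{2j+1}=\lambda^{j+1}$ for all $j$ would force $\lambda\in\{0,1\}$; moreover the paper's own subsequent use of the lemma --- the factor $\lambda^{2(j+1)}$ multiplying $\phi^{(2j+1)}_n$ and the weights $\zeta^{j+1}(y)$, $\zeta^{j}(y)$ with $\zeta=\lambda^2$ in Theorem~4 --- matches your exponents, not the printed ones. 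So you have silently corrected a typo in the statement (a typo that also propagates into (\ref{assymptotics1})). A minor caveat in your last step: $|\lambda|<1$ does not follow from $\rho>0$ alone, since $\mathrm{Re}(\gamma)=0$ gives $|\lambda|=1$; your squaring argument implicitly divides by $|\mathrm{Re}(\gamma)|$ and thus uses the standing hypothesis $\mathrm{Re}(\gamma)\neq 0$. Since that hypothesis is in force nothing breaks, but the dependence should be stated.
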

Lemma~\ref{lemKS} and the Riemann-Lebesgue lemma imply 
	\begin{equation}\label{assymptotics1} 
        \varphi_n'(j) = e^{in \theta_0(k)} m_0(k) \begin{bmatrix}  e^{-iw(2j+1)}\lambda^{2j+1}(k) \\ e^{-iw(2j)}\lambda^{2j}(k)\end{bmatrix} +o(1/\sqrt{n}),   
        \end{equation}
where 
	\[ \lambda(k)=\frac{\sgn(\mathrm{Re}(\eta(k)))}{\rho(k)}\left( \sqrt{1-\mathrm{Im}^2(\eta(k))}-|\mathrm{Re}(\eta(k))| \right). \]
Recall that $U:\mathcal{A}\to \mathcal{A}$ is the unitary time evolution of the quantum walk and $\psi_0\in \mathcal{A}$ is the initial state. 
Also recall that the relabelling of each element of $A$ in Definition~1. 
We set $\psi_n=U^{2n}\psi_0$. For $(j,m)\in V$ and $d\in \{0,1\}$, 
the square modulus of $\psi_n((j,m);d)$ 
is the probability that we find the state $((j,m);d)$ after the $n$-th iteration of $U$ starting from $((0,0);1)$. 
Thus we newly introduce the probability measure on $\nu_n: \mathbb{Z}_{+}\times \mathbb{Z} \to [0,1]$ such that
	\[ \nu_n(2j,m):=|\psi_n((j,m);1)|^2 \mathrm{\;\;and\;\;} \nu_n(2j+1,m):=|\psi_n((j,m);0)|^2. \]
Remark that $\nu_n(0,m)$ is identical to $\nu_n(m)$ as discussed in the previous section. 
We set the Fourier transform of them such that for $j\in \mathbb{Z}_{+}$, $n\in \mathbb{Z}_{+}$ and $\xi\in \mathbb{R}$, 
	\begin{align*}
        \phi_n^{(j)}(\xi) &= \sum_{m\in \mathbb{Z}} \nu_n(j,m) e^{i\xi m}.
        \end{align*}
Remark that $\phi_n(0,\xi)$ coincides with $\phi_n(\xi)$ in the previous section. 

In the above discussion, we fixed $k$ and treated $\hat{\varphi}_n$ as $\varphi'(\cdot)=\hat{\varphi}_n(\;\cdot\;;k)\in \ell^2(\mathbb{Z}_{+};\mathbb{C}^2)$  
whereas below, we will treat $\hat{\varphi}_n$ as a function of $k$ for fixed $j$, 
that is, $\hat{\varphi}_n(j;\;\cdot\;)\in L^2([0,2\pi);\mathbb{C}^2)$. 
We set $\hat{\varphi}_n(j;k) :={}^T[\hat{\varphi}_{n,0}(j;k),\hat{\varphi}_{n,1}(j;k)]$,  
and obtain
	 \begin{align} 
         \phi_n^{(2j)}(\xi) &= \int_0^{2\pi} \overline{\hat{\varphi}_{n,1}(j;k)}\hat{\varphi}_{n,1}(j;k+\xi) \frac{dk}{2\pi}, \\
         \phi_n^{(2j+1)}(\xi) &= \int_0^{2\pi} \overline{\hat{\varphi}_{n,0}(j;k)}\hat{\varphi}_{n,0}(j;k+\xi) \frac{dk}{2\pi}.  
         \end{align}
Since $r\neq 0$, by replacing $\xi$ with $\xi/n$ in the above equation and using expansion (\ref{assymptotics1}), we can obtain 
	\begin{equation}
        \begin{bmatrix}\phi^{(2j+1)}_n(\xi/n) \\ \phi^{(2j)}_n(\xi/n) \end{bmatrix} 
        = \int_{0}^{2\pi} e^{i\xi v(k)}m_0^2(k)\begin{bmatrix}  \lambda^{2(j+1)}(k) \\ \lambda^{2j}(k) \end{bmatrix} \frac{dk}{2\pi} + o(1/\sqrt{n}).
        \end{equation}
Since $0<|r|<1$, by putting $y=v(k)$ for $k\in [0,2\pi)$, we have 
	\[ \lambda^2(k)=\frac{|r|-|s||y|}{|r|+|s||y|}. \]
Combining the above with (\ref{konno1}) and (\ref{massfanc}), we obtain
	\[ \lim_{n\to\infty}\phi^{(j)}_n(\xi/n)  
        = \int_{-\infty}^{\infty} e^{i\xi y} g(j,y)  dy. \]
Here the measure $g:\mathbb{Z}_+\times \mathbb{R}\to \mathbb{R}_{+}$ is denoted by 
	\begin{align}
        g(2j+1,y) 
        & = \frac{s^2}{r^2}y^2f_K(y;|r|) \zeta^{j+1}(y) 
        \times \begin{cases} \bs{1}_{[0,\infty)}(y) & \text{: $sr>0$,} \\  \bs{1}_{(-\infty,0]}(y) & \text{: $sr<0$,} \end{cases} \notag \\ 
        \label{measure} \\ 
        g(2j,y)
        & = \frac{s^2}{r^2}y^2f_K(y;|r|)  \zeta^{j}(y) 
        \times \begin{cases} \bs{1}_{[0,\infty)}(y) & \text{: $sr>0$,} \\  \bs{1}_{(-\infty,0]}(y) & \text{: $sr<0$,} \end{cases} \notag 
        \end{align} 
with 
	\begin{equation}\label{zeta} \zeta(y)=\frac{|r|-|s||y|}{|r|+|s||y|}.  \end{equation}
We summarize this section in the following theorem which extends the result of Theorem~\ref{maintheorem} (1) to the other arcs since $g(0,y)$ coincides with $g(y)$ given by (\ref{g}): 
\begin{theorem}
Let $\nu_n: \mathbb{Z}_+\times \mathbb{Z}\to [0,1]$ and $g: \mathbb{Z}_+\times \mathbb{R}\to \mathbb{R}_+$ be the above. 
For $0<|r|<1$ with $s\neq 0$, we have  
	\[ \lim_{n\to \infty} \sum_{a<m/n<b } \nu_n(j,m)= \int_a^b g(j,y) dy \;\; (j\in \mathbb{Z}_+)\]
with $g(j,y)$ given by (\ref{measure}). 
\end{theorem}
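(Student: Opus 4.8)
The plan is to wrap up the computation that already precedes the statement by an application of the continuity theorem, exactly as in the proof of Theorem~\ref{maintheorem}~(1) but now keeping track of the spatial index $j$. Since the assertion is a weak-convergence statement for the measures $\nu_n(j,\cdot)$ under the scaling $m\mapsto m/n$, the right object to control is the characteristic function $\phi_n^{(j)}(\xi)=\sum_{m}\nu_n(j,m)e^{i\xi m}$. First I would record the Parseval-type identities already displayed above,
\[
\phi_n^{(2j)}(\xi)=\int_0^{2\pi}\overline{\hat\varphi_{n,1}(j;k)}\,\hat\varphi_{n,1}(j;k+\xi)\,\frac{dk}{2\pi},\qquad
\phi_n^{(2j+1)}(\xi)=\int_0^{2\pi}\overline{\hat\varphi_{n,0}(j;k)}\,\hat\varphi_{n,0}(j;k+\xi)\,\frac{dk}{2\pi},
\]
which hold because the Fourier transform of $|f|^2$ is the autocorrelation of its Fourier transform $\hat f$.

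Next I would insert the edge-state asymptotics (\ref{assymptotics1}) and evaluate at the rescaled argument $\xi/n$. The two oscillating phases combine into $e^{in[\theta_0(k+\xi/n)-\theta_0(k)]}$, which tends to $e^{i\xi v(k)}$ with $v(k)=\theta_0'(k)$ by a first-order Taylor expansion, while the amplitudes $m_0(k)$, $\lambda(k)$ and the auxiliary phases $e^{-iw(\cdot)}$ are continuous away from $\{0,\pi/2,\pi,3\pi/2\}$ and converge to their values at $k$; this reproduces the intermediate identity
\[
\begin{bmatrix}\phi^{(2j+1)}_n(\xi/n)\\ \phi^{(2j)}_n(\xi/n)\end{bmatrix}
=\int_0^{2\pi}e^{i\xi v(k)}\,m_0^2(k)\begin{bmatrix}\lambda^{2(j+1)}(k)\\ \lambda^{2j}(k)\end{bmatrix}\frac{dk}{2\pi}+o(1/\sqrt n).
\]
I would then change variables to $y=v(k)$, splitting $[0,2\pi)$ into the four intervals on which $v$ is monotone, and use $dv/dk=1/(\pi f_K(y;|r|))$ from (\ref{konno1}), $m_0^2(k)=(s^2/r^2)y^2$ from (\ref{massfanc}) and $\lambda^2(k)=\zeta(y)$ from (\ref{zeta}) to identify the integrand as the Fourier transform of $g(j,\cdot)$ in (\ref{measure}); since $g(0,y)$ equals the density $g(y)$ of (\ref{g}), this is precisely the announced extension of Theorem~\ref{maintheorem}~(1). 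Because the limit $\xi\mapsto\int_{-\infty}^{\infty}e^{i\xi y}g(j,y)\,dy$ is continuous at $\xi=0$, the continuity theorem then yields $\lim_{n\to\infty}\sum_{a<m/n<b}\nu_n(j,m)=\int_a^b g(j,y)\,dy$.

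The hard part will be the rigorous justification of passing the limit inside the $k$-integral. One must show that the cross terms between the leading edge-state term and the $o(1/\sqrt n)$ remainder of (\ref{assymptotics1}) contribute negligibly to $\phi_n^{(j)}(\xi/n)$, uniformly for $\xi$ in compact sets, and that the bulk (absolutely continuous) part discarded via the Riemann--Lebesgue lemma in obtaining (\ref{assymptotics1}) indeed vanishes after rescaling. Additional care is needed near the critical points $k\in\{0,\pi/2,\pi,3\pi/2\}$, where $v$ attains its extrema $\pm|r|$ and the Jacobian $dv/dk$ degenerates: the change of variables produces the integrable square-root singularities of $f_K(\cdot;|r|)$ at $y=\pm|r|$, so dominated convergence still applies since $0<|r|<1$ keeps these singularities integrable and the damping factors satisfy $\lambda^{2j}(k)\le 1$, which also makes the decay toward the bulk (large $j$) manifest.
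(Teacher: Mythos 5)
Your proposal is correct and follows essentially the same route as the paper: the paper derives Theorem~4 by exactly this chain --- the Parseval identities for $\phi_n^{(j)}$, insertion of the edge-state asymptotics (\ref{assymptotics1}) obtained from the Karlin--MacGregor-type formula (\ref{KM}) and Lemma~\ref{lemKS}, rescaling $\xi\mapsto\xi/n$ so the phases combine into $e^{i\xi v(k)}$, the change of variables $y=v(k)$ via (\ref{konno1}), (\ref{massfanc}), (\ref{zeta}), and the continuity theorem. Your closing paragraph on justifying the interchange of limits and the behavior near the critical wave numbers simply makes explicit the steps the paper leaves implicit, so there is nothing substantively different to compare.
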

Remark that the limit measure $g(j,\cdot)$ has the support $[0,|r|)$ or $(-|r|,0]$. 
If we insert $y=\sgn (rs) r$ into (\ref{zeta}), then $\zeta(y)=(1-|s|)/(1+|s|)$ holds, which agrees with (B21) in \cite{AE}; 
this is the decay rate of the edge state wavefunctions 
towards the bulk for the zero quasi energy. The normalized position $y=\sgn (rs) r$ coincides with the maximal absolute value of the group-velocity $v(k)$. 
In such a wave number $k$, the effective mass $M(k)$, which is the inverse of the derivative of $v(k)$, diverges implying 
that the value $g(j; y)$ also diverges, see Corollary~\ref{parameterexpression}. 

We remark that the LHS in Theorem~4 for the $s=0$ case, which is equivalent to the condition lacking the edge state, 
becomes zero and also note that the limit measure $g: \mathbb{Z}_+\times \mathbb{R}\to \mathbb{R}_+$ 
is not a probability measure; we have focused on only the edge state's effect on the finding probability 
without considering on the bulk state. Indeed, putting $\Omega:=\mathbb{Z}_+\times \mathbb{R}$ and supposing $rs>0$, we have 
	\begin{align*}
        \int_{\omega\in \Omega} g(\omega) d\omega &= \int_{-\infty}^{\infty} \sum_{j=0}^\infty g(j;y) dy = \int_{0}^{|r|} \frac{1+\zeta(y)}{1-\zeta(y)} \frac{s^2}{r^2} y f_K(y;|r|) dy \\
        &= \int_{0}^{|r|} \frac{|s|}{|r|} y f_K(y;|r|) dy \leq |s|/2 <1. 
        \end{align*}
Consideration of the missing value of at least $1-|s|/2$ which is the contribution of the bulk state 
may be an interesting subject for future work.

\par
\
\par\noindent
\noindent
{\bf Acknowledgments.}
%Acknowledgments.
\par
NK were supported by 
the Grant-in-Aid for Scientific Research Challenging Exploratory Research (JSPS KAKENHI No.\ JP15K13443).
ES acknowledges financial support from 
the Grant-in-Aid for Young Scientists (B) and of Scientific Research (B) Japan Society for the Promotion of Science (Grant No.~16K17637, No.~16K03939). 
HO was supported by a Grant-in-Aid for Scientific Research on Innovative
Areas ``Topological Materials Science'' (JSPS KAKENHI No.\ JP16H00975)
and also JSPS KAKENHI (No.\ JP16K17760 and No.\ JP16K05466). 
Finally, authors would like to thank reviewers for providing invaluable suggestions to this paper. 
\par

\begin{small}
\bibliographystyle{jplain}

\end{small}

\appendix
\def\thesection{Appendix \Alph{section}}
\renewcommand{\theequation}{A.\arabic{equation}}
\setcounter{equation}{0}

% Appendix here

\section{Proof of Lemma~\ref{alternative}}\label{pfalternative}
%%%
We set $C'=\mathcal{U}C\mathcal{U}^{-1}$ and $S'=\mathcal{U}S\mathcal{U}^{-1}$. 
By the bipartiteness of the coin operator $C$ with respect to the ``$\{0,1\}$" and ``$\{2,3\}$" directions, it holds that for every $\phi\in \mathcal{A}$, 
	\begin{align}
        \begin{bmatrix} (C\phi)(\bs{x};0) \\ (C\phi)(\bs{x};1)\end{bmatrix}=\sigma_1 H_\alpha \begin{bmatrix}\phi(\bs{x};2)\\ \phi(\bs{x};3) \end{bmatrix},\;\;
        \begin{bmatrix} (C\phi)(\bs{x};2) \\ (C\phi)(\bs{x};3)\end{bmatrix}=\sigma_1 H_\beta \begin{bmatrix}\phi(\bs{x};0)\\ \phi(\bs{x};1) \end{bmatrix}. 
        \end{align}
Here $\sigma_1=|0\rangle\langle 1|+|1\rangle\langle 0|$. 
Using this, for every $\psi\in \ell^2(V;\mathbb{C}^4)$ with $\psi(\bs{x})={}^T[\psi_0(\bs{x}),\dots,\psi_3(\bs{x})]$, we have 
	\begin{equation}\label{C'} 
        (C'\psi)(\bs{x})
        = {|0\rangle}\otimes \sigma_1 H_\alpha \begin{bmatrix}  \psi_2(\bs{x}) \\ \psi_3(\bs{x}) \end{bmatrix} 
        + {|1\rangle}\otimes \sigma_1 H_\beta\begin{bmatrix}  \psi_0(\bs{x}) \\ \psi_1(\bs{x}) \end{bmatrix}. 
        \end{equation}
Concerning the shift operator $S$ flipping the arc direction; $(S\phi)(a)=\phi(\bar{a})$ , we have 
	\begin{equation}\label{S'} 
        (S'\psi)(x,y)
        	=\begin{cases}
                {}^T[\psi_1(x+1,y)\;\;\psi_0(x-1,y)\;\;\psi_3(x,y+1)\;\;\psi_2(x,y-1)] & \text{: $(x,y)\notin \partial V$,} \\
                {}^T[\psi_1(x+1,y)\;\;\psi_1(x,y)\;\;\psi_3(x,y+1)\;\;\psi_2(x,y-1)]
		  & \text{: $(x,y) \in \partial V$.}
                \end{cases}
        \end{equation}
By (\ref{C'}) and (\ref{S'}), we have 
	\begin{multline}\label{before}
        (U'\psi)(x,y)=
                \tilde{P}_\alpha \psi^{(\updownarrow)}(x+1,y)+ \tilde{Q}_\alpha\psi^{(\updownarrow)}(x-1,y) \\
                	+\tilde{P}_\beta \psi^{(\leftrightarrow)}(x,y+1)+ \tilde{Q}_\beta\psi^{(\leftrightarrow)}(x,y-1), \;\;((x,y)\notin \partial V),
        \end{multline}
and
        \begin{multline}
        (U'\psi)(x,y)=
                \tilde{P}_\alpha \psi^{(\updownarrow)}(x+1,y)+ \tilde{S}_\alpha\psi^{(\updownarrow)}(x,y) \\
                	+\tilde{P}_\beta \psi^{(\leftrightarrow)}(x,y+1)+ \tilde{Q}_\beta\psi^{(\leftrightarrow)}(x,y-1), \;\;((x,y)\in \partial V).
        \end{multline}
Then we prove the lemma. $\square$
%%%
%%%
\section{Proof of Lemma~\ref{alternative2}}\label{pfalternative2}
For $\psi=\psi^{(\leftrightarrow)}\in \ell^2(V;\mathbb{C}^4)$, 
we first examine $({U'}^2\psi)(x,y)$. 
We should remark that 
	\begin{equation}\label{flip}
        (U'\psi^{(\leftrightarrow)})(x,y)=(U'\psi)^{(\updownarrow)}(x,y),
        \end{equation}
because of $\tilde{Q}_\beta$ and $\tilde{P}_\beta$.
Thus by (\ref{odd}) in Lemma~\ref{alternative}, for $(x,y)\notin \partial V$, we have 
	\begin{align}
        ({U'}^2\psi)(x,y) &= \tilde{P}_\alpha (U'\psi)^{(\updownarrow)}(x+1,y)+\tilde{Q}_\alpha (U'\psi)^{(\updownarrow)}(x-1,y). 
        \end{align}
Moreover by (\ref{even}) in Lemma~\ref{alternative} and (\ref{flip}), we have 
	\begin{multline}
        ({U'}^2\psi)(x,y) = \tilde{P}_\alpha (\tilde{Q}_\beta \psi^{(\leftrightarrow)}(x+1,y-1)+\tilde{P}_\beta \psi^{(\leftrightarrow)}(x+1,y+1)) \\ 
        	+\tilde{Q}_\alpha (\tilde{Q}_\beta \psi^{(\leftrightarrow)}(x-1,y-1)+\tilde{P}_\beta \psi^{(\leftrightarrow)}(x-1,y+1)).
        \end{multline}
Since $\psi$ is expressed by some $\varphi\in \mathcal{U}_e(\mathcal{A}^{(\leftrightarrow)})$ such that $\psi(\bs{x})=|0\rangle\otimes \varphi(\bs{x})$ and 
it holds $\tilde{X}_\alpha\tilde{Y}_\beta=|0\rangle\langle 0|\otimes X_\alpha Y_\beta$ $(X,Y\in \{P,Q\})$, we have 
	\begin{multline}
        ({U'}^2\psi)(x,y) = |0\rangle \otimes \{P_\alpha Q_\beta \varphi(x+1,y-1)+P_\alpha P_\beta \varphi(x+1,y+1)) \\ 
        	+Q_\alpha Q_\beta \varphi(x-1,y-1)+ Q_\alpha P_\beta \varphi(x-1,y+1))\}. 
        \end{multline}
In the same way for the $(x,y)\in \partial V$ case, 
	\begin{multline}
        ({U'}^2\psi)(x,y) = |0\rangle \otimes \{P_\alpha Q_\beta \varphi(x+1,y-1)+P_\alpha P_\beta \varphi(x+1,y-1)) \\ 
        	+S_\alpha Q_\beta \varphi(x,y-1)+ S_\alpha P_\beta \varphi(x,y+1))\}, 
        \end{multline}
which implies that
	\begin{equation}\label{Eq1} ({U'}^2\psi)(\bs{x}) = |0\rangle \otimes (\Gamma\varphi)(\bs{x}), \;\;(\forall \bs{x}\in V). \end{equation}

By the way, for any $\phi\in \mathcal{A}^{(\leftrightarrow)}$, we have 
	\[ (U^2|_{\mathcal{A}^{(\leftrightarrow)}}\phi)(\bs{x};j)=(\mathcal{U}^{-1}{U'}^2\mathcal{U}\phi)(\bs{x},j) =\begin{cases} ({U'}^2\psi)_j(\bs{x})=(\Gamma\varphi)_j(\bs{x}) & \text{: $j\in\{0,1\}$} \\ 0 & \text{: $j\in \{2,3\}$}. \end{cases}  \]
where $\psi(\bs{x})=(\mathcal{U}\phi)(\bs{x})=|0\rangle \otimes \varphi(\bs{x})$. 
Here in the last equality, we used (\ref{Eq1}). 
Remarking $\mathcal{U}\mathcal{U}^{-1}_e\varphi=\varphi$, we have 
	\begin{align*} 
        (\mathcal{U}_e U^2|_{\mathcal{A}^{(\leftrightarrow)}} \mathcal{U}_e^{-1}\varphi)(\bs{x})
        	&= \begin{bmatrix} (U^2|_{\mathcal{A}^{(\leftrightarrow)}} \mathcal{U}_e^{-1}\varphi)(\bs{x};0) \\ (U^2|_{\mathcal{A}^{(\leftrightarrow)}} \mathcal{U}_e^{-1}\varphi)(\bs{x};1) \end{bmatrix} 
                =  \begin{bmatrix} (\Gamma\varphi)_0(\bs{x}) \\ (\Gamma\varphi)_1(\bs{x}) \end{bmatrix}
                =(\Gamma\varphi)(\bs{x}),
        \end{align*}
which completes the proof. $\square$
\section{Connection between the moving shift and flip-flop shift representations of the quantum walk}
For clarity, we mention the connection between the above description
of the coin and the shift operators $C'$ and $S'$ in (\ref{C'}) and (\ref{S'}), which is so called the flip-flop shift representation, and the
commonly used description of the coin and the shift operators,
$C^{\prime\prime}$ and $S^{\prime\prime}$, respectively such as in
Ref.\ \cite{AE}, which is so called the moving shift representaion. 
We regard four internal states in the vertex representation as the
left,
right, down, and up moving components,
\[
|L\rangle = {}^T[1,0,0,0],\;\; 
|R\rangle = {}^T[0,1,0,0],\;\; 
|D\rangle = {}^T[0,0,1,0],\;\; 
|U\rangle = {}^T[0,0,0,1],\;\; 
\]
respectively, and the position $(x,y)$ is described by
\[
 \ket{x,y}.
\]
On this basis, the coin operator $C^\prime$ is described by
\begin{align*}
C^{\prime} &=
\sum_{x,y} 
\ket{x,y}\bra{x,y}\otimes
\begin{bmatrix}
0 & \sigma_1 H_\alpha\\
\sigma_1 H_{\beta} & 0
\end{bmatrix}\;\;:(x,y) \in V.
\end{align*}
while the common descriptions of $C^{\prime\prime}$ and $S^{\prime\prime}$ are
\begin{align*}
C^{\prime\prime} &=
\sum_{x,y} 
\ket{x,y}\bra{x,y}\otimes
\begin{bmatrix}
0 & H_\alpha\\
H_{\beta} & 0
\end{bmatrix}\;\;:(x,y) \in V.
%\label{eq:C''}
\end{align*}
and
\begin{align}
S''|x,y\rangle | R\rangle &= | x+1,y \rangle | R\rangle, \nonumber\\
S''|x,y\rangle | D\rangle &= | x,y-1 \rangle | D\rangle,  \nonumber\\
S''|x,y\rangle | U\rangle &= | x,y+1 \rangle | U\rangle, \label{eq:S''}\\
S''|x,y\rangle | L\rangle
 &=
 \begin{cases}
  | x-1,y \rangle | L\rangle & \text{: $(x,y) \notin \partial V$,} \\
  | x,y \rangle | R\rangle & \text{: $(x,y) \in \partial V$,}
 \end{cases}\nonumber
\end{align}
respectively. 
Rewriting the time-evolution operator $U^\prime$, we have
\begin{align}
 U^\prime = S^\prime C^\prime = S^{\prime\prime} (S^{\prime\prime-1}
 S^\prime C^\prime).
\label{eq:U'rewrite}
\end{align}
Taking into account the different order of internal states in the basis,
$S^{\prime\prime-1} S^\prime$ is expressed as
\begin{align*}
S^{\prime\prime-1} S^\prime &=
\sum_{x,y}
\ket{x,y}\bra{x,y}\otimes
\begin{bmatrix}
\sigma_1 & 0\\
0 & \sigma_1
\end{bmatrix},
\end{align*}
and we confirm the relation
\begin{align}
C^{\prime\prime} = S^{\prime\prime-1}  S^\prime C^\prime.
\label{eq:C''_C'}
\end{align}
From (\ref{eq:U'rewrite}) and (\ref{eq:C''_C'}), without loss of generality, we can alternatively describe the
model as $U^\prime=S^{\prime\prime} C^{\prime\prime}$ 
instead of $U^{\prime}=S^\prime C^\prime$ in the vertex representation.

\end{document}